\newtheorem{theorem}{Theorem}[section]
\newtheorem{definition}{Definition}[section]
\newtheorem{corollary}[theorem]{Corollary}
\newtheorem{lemma}[theorem]{Lemma}
\newcommand{\TQP}{\mathsf{TQP}}
\newcommand{\VDP}{\mathsf{VDP}}
\newcommand{\Rev}{\mathsf{Rev}}
\newcommand{\EDP}{\mathsf{EDP}}
\newcommand{\WPPP}{\mathsf{WPPP}}
\newcommand{\BPP}{\mathsf{BPP}}
\newcommand{\PP}{\mathsf{PP}}
\newcommand{\BPL}{\mathsf{BPL}}
\newcommand{\Post}{\mathsf{Post}}
\newcommand{\Meas}{\mathsf{Meas}}
\renewcommand{\P}[0]{\ensuremath{\mathsf{P}}}
\renewcommand{\L}[0]{\ensuremath{\mathsf{L}}}
\newcommand{\Trace}{\mathsf{Trace}}
\newcommand{\DQC}{\mathsf{DQC}}
\newcommand{\PH}{\mathsf{PH}}
\newcommand{\LOGSPACE}{\mathsf{LOGSPACE}}
\newcommand{\NC}{\mathsf{NC}}
\newcommand{\SAT}{\mathsf{SAT}}
\newcommand{\Almost}{\mathsf{Almost}}
\newcommand{\AC}{\mathsf{AC}}
\newcommand{\NP}{\mathsf{NP}}
\newcommand{\Ball}{\mathsf{Ball}}
\newcommand{\BALL}{\mathsf{BALL}}
\newcommand{\RBALL}{\mathsf{RBall}}
\newcommand{\RBall}{\mathsf{RBall}}
\newcommand{\DBALL}{\mathsf{DBall}}
\newcommand{\DBall}{\mathsf{DBall}}
\newcommand{\NBall}{\mathsf{NBall}}
\newcommand{\Samp}{\mathsf{Samp}}
\newcommand{\HQBALL}{\mathsf{RQBall}}
\newcommand{\XQBALL}{\mathsf{QBall}}
\newcommand{\RQBall}{\mathsf{RQBall}}
\newcommand{\QBall}{\mathsf{QBall}}
\newcommand{\BQP}{\mathsf{BQP}}
\newcommand{\Poly}{\mathsf{poly}}
\newcommand{\C}{\mathbb{C}}
\newcommand{\ket}[1]{|{#1}\rangle}
\newcommand{\bra}[1]{\langle {#1}|}
\newcommand{\tensor}{\otimes}
\newcommand{\su}{\mathfrak{su}}
\begin{document}

\title{The Computational Complexity of Ball Permutations}

\author[1]{Scott Aaronson\thanks{Email: aaronson@cs.utexas.edu.}}
\author[2]{Adam Bouland\thanks{Email: adam@csail.mit.edu. Partially supported by the NSF GRFP under Grant No. 1122374 and by Scott Aaronson's NSF Waterman award under grant number 1249349.}}
\author[3]{Greg Kuperberg\thanks{Email: greg@math.ucdavis.edu}}
\author[2]{Saeed Mehraban\thanks{Email: mehraban@mit.edu. Partially supported by Scott Aaronson's NSF Waterman award under grant number 1249349.}}

\affil[1]{Department of Computer Science, University of Texas at Austin}
\affil[2]{CSAIL, Massachusetts Institute of Technology, Cambridge, MA}
\affil[3]{Department of Mathematics, UC Davis, Davis, CA}

\date{}
\maketitle{}

\begin{abstract}

Inspired by connections to two dimensional quantum theory, we define several models of computation based on permuting distinguishable particles (which we call balls), and characterize their computational complexity. 
In the quantum setting, we find that the computational power of this model depends on the initial input states. More precisely, with a standard basis input state, we show how to approximate the amplitudes of this model within additive error using the model $\DQC1$ (the class of problems solvable with one clean qubit), providing evidence that the model in this case is weaker than universal quantum computing. 
However, for specific choices of input states, the model is shown to be universal for $\BQP$ in an encoded sense. We use representation theory of the symmetric group to partially classify the computational complexity of this model for arbitrary input states.
Interestingly, we find some input states which yield a model intermediate between $\DQC 1$ and $\BQP$. 
Furthermore, we consider a restricted version of this model based on an integrable scattering problem in $1+1$ dimensions. We show it is universal under postselection, if we allow intermediate destructive measurements and specific input states. Therefore, the existence of any classical procedure to sample from the output distribution of this model within multiplicative error implies collapse of polynomial hierarchy to its third level. Finally, we define a classical version of this model in which one can probabilistically permute  balls. We find this yields a complexity class which is intermediate between $\L$ and $\BPP$. 
Moreover, we find a nondeterministic version of this model is $\NP$-complete.
\end{abstract}

%%%%%%%%%%%%%%%%%%%%%%%%%%%%%%% 
\newpage

\section{Introduction}

The standard model of quantum computing is defined using quantum circuits acting on qubits. The computational power of this model is captured by the complexity class $\BQP$, which resides somewhere between the complexity classes $\BPP$ and $\PP$ \cite{FortnowRogers}. However, physical systems can exist in Hilbert spaces which are not described by tensor products of qubits. For example, systems of noninteracting fermions, noninteracting bosons, or anyons in a 2+1 dimensional quantum field theory all live in Hilbert spaces with different mathematical descriptions.

A natural problem is to explore the computational complexity of these alternative physical systems. 
There are several reasons to study this problem. 
First, many of these alternative models of quantum computing seem to be intermediate in power between classical and quantum computing \cite{jordan2009permutational,aaronson2011computational,DQC1ishard,JozsaShepherd}. 
Therefore it is interesting to study their power from a purely complexity-theoretic standpoint, as they help delineate the boundary between classical and quantum computation.
Second, these models sometimes have special properties from the perspective of mathematical physics. 
For instance, they might be ``solvable" or ``integrable" systems, which are regarded as simple to mathematical physicists.
It is interesting to compare notions of simplicity in mathematical physics (solvability and integrability) to the notion of simplicity in computational complexity (efficient classical simulablity).

Motivated by the above, 
in this work we consider an alternative model of quantum computing based on permuting quantum balls, and study its computational complexity. More specifically, we consider a system of $n$ distinguishable particles (``balls'') on a line. The computational basis of this Hilbert space consists of all permutations of the particles. We denote this Hilbert space by $\C S_n$.
The quantum operations in this model act on two balls at a time, and map the state $|x,y\rangle $ to the state $|x,y\rangle \rightarrow c |x,y\rangle + i s |y,x\rangle$, where $x$ and $y$ distinct integers from $1$ to $n$, and $c$ and $s$ are real numbers with $c^2+s^2=1$. 
For example, if one has the state $\ket{123}$, and applies the above operation to the first two particles, the resulting state would be $c\ket{123}+is\ket{213}$. 
This is a quantum analog of exchanging the particles in that location with probability $s^2$; hence we call this the ``partial swap'' gate.
Physically, this ``ball-permuting'' model captures the scattering problem of distinguishable particles on a line with short-range interactions\footnote{In the interactions we consider, contact between particles is penalized with a delta function. In this sense the interactions are ``hard". However, in the physics literature these are referred to as ``soft short-range" interactions since they allow permutation of particles, whereas ``hard" interactions forbid the permutation of particles.}.

We study the computational complexity of several models based on the above formalism. First, we consider a model in which one starts in the state $|123\ldots n \rangle$, and then applies polynomially many partial swap gates. We show that one can approximate amplitudes in this model, within $1/\operatorname*{poly}$ additive error, within the one-clean-qubit model of Knill and LaFlamme, also known as the complexity class $\DQC 1$ \cite{knill1998power}. This class captures the power of quantum computers in which all input qubits are in the maximally mixed state (i.e. a uniformly random basis state unknown to the experimenter) except one, which is in a pure initial state. This model is widely believed to be substantially weaker than $\BQP$; in fact it is open whether or not $\DQC 1$ is even capable of universal \emph{classical} computation.Therefore, the power of this model seems to be substantially weaker than $\BQP$. Our result shows that exchange based quantum computing with distinguishable particles starting in the state $\ket{12\ldots n}$ yields a weak computational model. This in turn suggests that indistinguishability is a crucial computational resource for exchange based computations.

Next, we consider the computational power of this model when we begin with arbitrary initial states (i.e. states more complicated than $\ket{1,2,\ldots,n}$). First, we show that if the initial state is selected according to certain irreducible invariant subspaces, then this model can efficiently simulate $\BQP$, using an encoded universality. We also mention an explicit construction based on the result that the exchange interaction on qubits is encoded-universal \cite{divincenzo2000universal,fong2011universal}. Therefore allowing arbitrary initial states substantially boosts the power of this model.

Furthermore, we obtain a partial classification of the computational complexity of this model on different input states. In order to achieve this goal, we use the representation theory of the symmetric group. In particular, we use the Young-Yamanouchi orthonormal basis \cite{james1981representation} to describe subspaces of our Hilbert space that are invariant and irreducible under the action of ball permuting gates. Therefore to understand the complexity of our model starting from an arbitrary input state, one merely needs to analyze its components in the Young-Yamanouchi basis using representation theory.  We make progress towards this goal in Section \ref{arbitrary}. 
 
One interesting finding of this classification is the discovery of a natural model of quantum computing which seems to be intermediate between $\DQC 1$ and $\BQP$. 
In this model, we initially start with the input state:

$$
\dfrac{1}{\sqrt{2^n}}\displaystyle\sum_{x\in\{0,1\}^n} \ket{x}\ket{x},
$$

\noindent consisting of $n$ active qubits maximally entangled to $n$ inert qubits. We imagine we can apply an arbitrary quantum circuit to the active qubits only (i.e. the left half of the state only), and then we measure both active and inert qubits together in the computational basis. 
If one were to only examine the computation on the left half of the state only\footnote{As well as the first qubit of the right half of the state}, this model would be equivalent to $\DQC 1$ - because the left half of the state considered in isolation is maximally mixed. 
However, in this new model one additionally gets to observe the right half of the state. In other words, although the qubits are initialy in an (unknown) uniformly random basis state at the start of the computation, one gets to learn what basis state they started in, but only \emph{at the end of the computation}. 
We find that this model arises naturally in ball-permuting circuits starting with input states of specific ``reducible subspaces''.
This model may be of independent interest.

Third, motivated by physical considerations, we consider a restricted version of this model in which the exchange operators are required to satisfy the (parameter dependent) Yang-Baxter equation. The Yang-Baxter equation arises numerous places in mathematical physics, including two-dimensional quantum field theory, statistical mechanics, and topological quantum field theory. The equation describes sets of operators which form representations of the braid group. \color{black} In our setup the Yang-Baxter equation arises naturally from quantizing a system of $n$ distinguishable particles with their own momenta and short-range \cite{yang1967some} \color{black} interactions\footnote{ Specific to the integrable theories we consider in $1+1$ dimensions, the Yang-Baxter equation is a consistency equation which implies that the unitary evolution depends only on the initial momenta of the particles. In the language of mathematical physics this last condition is also known as Bethe Ansatz solvability.}. Less formally, this arises if one imagines that the balls have their own velocities, and upon colliding, either exchange velocities (as in classical physics) or else pass through each other, with quantum amplitudes $c$ and $is$, respectively. 
In this case one cannot exchange arbitrary particles due to the velocity constraints.

We show that if intermediate measurements are added to this model, then one cannot sample from the same probability distribution (up to multiplicative error) efficiently with a classical computer unless the polynomial hierarchy collapses to the third level. The proof uses a postselection \cite{aaronson2005quantum} argument; we show that postselecting on possibly exponentially-unlikely measurement outcomes allows the model to efficiently solve any problem in the complexity class $\Post \BQP$. Then using the same line of reasoning as in Aaronson-Arkhipov \cite{aaronson2011computational} and Bremner-Jozsa-Shepherd \cite{JozsaShepherd}, one can infer that the existence of an efficient procedure to sample from the distribution of outcomes in the proposed model within multiplicative error implies the collapse of polynomial hierarchy to the third level.

This result might be somewhat surprising to mathematical physicists because systems obeying the Yang-Baxter equation in $1+1$ dimensions are considered simple because they are ``integrable''. 
Integrable systems are exactly solvable in the sense that the number of conserved quantities\footnote{For example, in classical mechanics total energy and total momentum are conserved quantities.} exceeds the number of degrees of freedom.
So by specifying the conserved quantities, one has fully specified the evolution of the system; there is no need e.g. to solve a differential equation to find the evolution of the system. Moreover our model is Bethe ansatz solvable, which in the language of mathematical physics means that their unitary evolution can be fully specified with \emph{linearly} many real parameters; this is a particular property of our model which makes it very easy to describe, and is not known to apply to integrable theories in higher dimensions. This is true in our Yang-Baxter ball-permuting model, since the initial particle velocities fully determine the unitary evolution of the system. As a result, the dimensionality of the set of unitary matrices allowed in this model grows only linearly in the number of particles, while the dimensionality of the space of all unitary evolutions grows exponentially. Intuitively, it seems difficult to ``program" such models to do any interesting computations, since there are few parameters in the system.
Our result therefore shows that even 1+1 dimensional integrable models may be difficult to computationally simulate classically, at least in the presence of intermediate measurements. 
Therefore integrable models (even those which are Bethe ansatz solvable) are not necessarily computaionally simple. Note however in general there are solutions to the Yang-Baxter equation that are known to be $\BQP$ universal, e.g.,  topological quantum computing in $2+1$ dimensions \cite{freedman2003topological}. However these are not Bethe ansatz solvable. Also, prior work has shown that some solutions to the Yang-Baxter equation on qubits \cite{alagic2014classical} are efficiently classically simulable. However in \cite{alagic2014classical} this is not due to the small number of parameters present, but for other reasons.

Finally, we consider classical versions of this model, where a base $\AC^0$ machine can query a deterministic, probabilistic, or non-deterministic ball-permuting oracle. Inputs to a deterministic ball permuting oracle are lists of swaps, and outputs are the permutations that are resulted from the application of swaps in order. A randomized ball permuting oracle also takes a list of probabilities as input, applies the swaps probabilistically and outputs the final permutation. The model corresponding to the deterministic ball permutation is proved to be equivalent to $\L$ (log-space) Turing machines. The randomized ball permutation model, on the other hand, can simulate $\BPL$ (Bounded-error probabilistic log-space) machines efficiently. We also show that a machine from the class $\Almost \L$ ($\L$ relative to a random oracle) can efficiently simulate randomized ball permutation. So the power the randomized ball permuting model lies between $\BPL$ and $\Almost\L$. We raise the open problem of whether this can be simulated in $\P$. We also consider a nondeterministic version of the ball-permuting model. We find this to be equivalent to $\NP$, unless the swaps are between adjacent balls only, in which case it can be simulated in $\P$ using nontrivial planar graph algorithms.

\section{Models and Motivations}
\label{model}

In this section we define the quantum ball-permuting model as a model of quantum computation, and then briefly explain how it models actual physical processes. The basic operations of this model are quantum swaps, as we will define them shortly. To obtain a point of comparison with the classical world we analyze the power of deterministic, randomized and non-deterministic swaps separately in section \ref{classical}.

The computational basis states in the ball permuting model are all $n!$ possible
permutations on an $n$-element set (\i.e., $S_n$).  We start out in the
state:

$$
|1,2,\ldots,n\rangle.
$$

At each time step, the rule is: we get to pick any adjacent\footnote{We get the same model if we allow swaps between any pairs. This is because we can simulate general swaps with adjacent ones.} pair of the $n$
registers in our quantum state, and then apply an $(n^2-n)\times (n^2-n)$ unitary
transformation that, for every pair of distinct labels $x\neq y$, maps:

$$
|x,y\rangle \rightarrow c|x,y\rangle + is|y,x\rangle,
$$

\noindent where $c$ and $s$ are any two real numbers satisfying $c^2+s^2=1$.  (We get
to pick whichever $c$ and $s$ we like for each gate operation.  However, $c$
and $s$ can't vary depending on the labels $x$ and $y$. If one does allow for $c$ and $s$ to depend on $x$ and $y$, then in Appendix \ref{zqball} we show one recovers $\BQP$). 

Finally, we measure all $n$ registers in the computational basis, and
feed them to a classical computer for postprocessing.

We can represent the computational basis by the kets $|\sigma\rangle$ for any permutation $\sigma\in S_n$. We denote this Hilbert space by $\C S_n$. A quantum swap between $t$ and $t+1$'th registers depends on one real parameter $\theta$ and is represented by the operator:

$$
X(\theta,t)= \cos \theta I + i \sin \theta L_{(t,t+1)}.
$$

\noindent Here $I$ is the identity operator, and $L_{\sigma}$ is a representation of $\C S_n$ with the action\footnote{The operator $L$ acts from left to right on the basis of the Hilbert space. We can also talk about right multiplications $R(\tau)$ which map $\ket{\sigma} \mapsto \ket {\sigma \circ \tau^{-1}}$ for permutations $\sigma$ and $\tau$. The importance of these operators become clear in Section \ref{partialclassification1}}:

$$
L_\sigma |\tau\rangle = |\sigma \circ \tau \rangle.
$$

The idea of this model is to capture $n$ distinguishable particles
moving around on a line (\i.e., in $1+1$ dimensions).  The only state
that we care about is the order of the particles in the line.  The
$x$'th register of the quantum computer stores the label of the $x$'th
particle, if the particles are listed in order from left to right.
Whenever two particles meet, one of two things can happen: the
particles can reflect, or they can pass through each other.  The first happens with amplitude $c$, while the second happens with
amplitude $is$.

We also consider a restricted version of this model, in which each particle $1\ldots n$ has its own velocity $v_1\ldots v_n$, and particles can only interact if they collide. For instance if particle 2 is initially moving to the right, and all other particles are stationary, then particle 2 will only interact with particles to its right, and cannot interact with particles to its left. This model captures the power of scattering experiments with n distinguishable particles and repulsive (hard-core) interactions. This is illustrated in Figure \ref{fig22_v0}.
\begin{figure}[h]
\begin{center}
\includegraphics[height=2.0in]{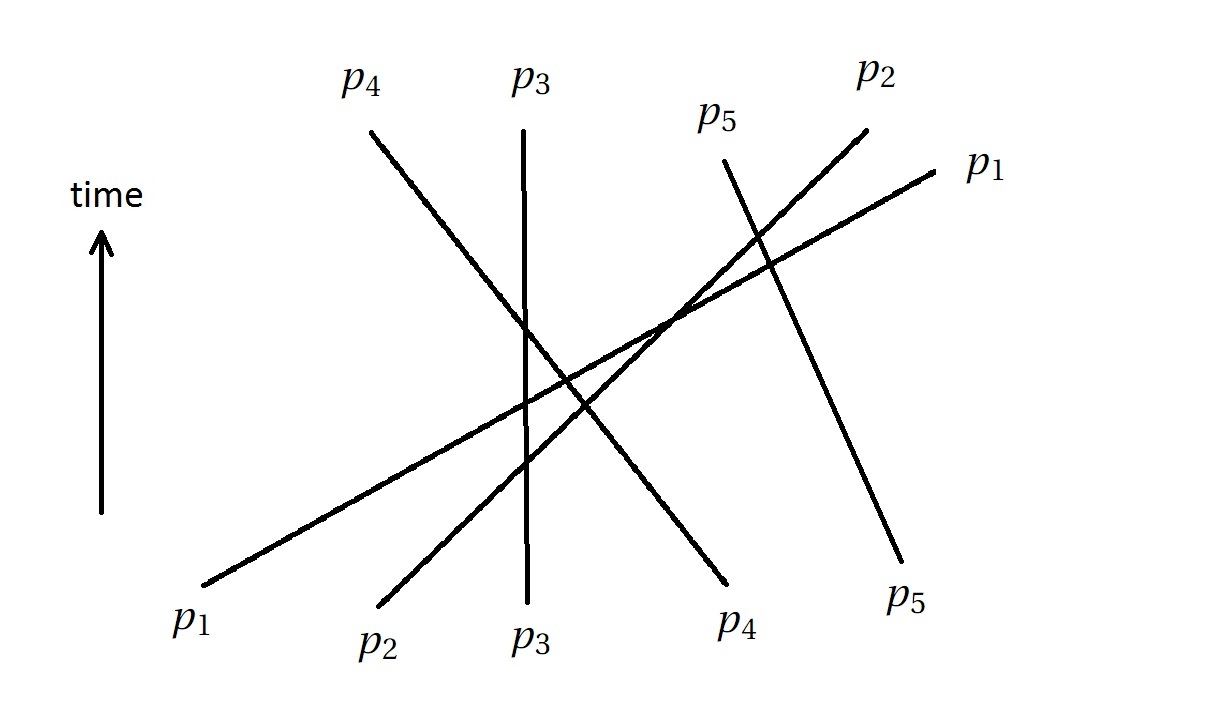}
\caption
{Five particles with momenta $p_1, p_2, \ldots, p_5$ moving on a line. At each intersection between these lines there is an interaction modeled by a ball-permuting gate. No other interactions are allowed.}
\label{fig22_v0}
\end{center}
\end{figure}
Furthermore, as described in Appendix \ref{models}, if one rigorously defines this model, one finds the parameter $\theta$ in the exchange interactions cannot be arbitrary. Suppose we define a new parameter $z$, called the rapidity, which is the relative velocity of the colliding particles. Then we find that the exchange interaction must have $\theta=\tan^{-1}(z)$. Furthermore, if we let $R(z,t)=X(\tan^{-1}(z),t)$, then we additionally find our quantum operations must satisfy:
$$
R(z_1,1) R(z_2,2) R(z_3,1) = R(z'_3,2) R(z'_2,1) R(z'_1,2).
$$
This is known as the parameter-dependent Yang-Baxter equation. We can confirm that the only nontrivial solution to this equation is, $z_1=z'_1$, $z_3=z'_3$, and $z_2=z'_2=z_1+z_2$. Therefore this equation constrains the values of $z$ (and therefore the values of $\theta$) allowed in interactions in the model.

As a result, we will consider two types of models in this paper: one in which we imagine we can set $\theta$ to any value we choose for each interaction, and another based on the physics of colliding particles in which the interactions are constrained by both their velocities and the Yang-Baxter equation. The former model is much simpler to define and work with, while the latter model is more directly related to the physics of interacting particles on a line.

\section{The quantum ball permuting model}
\label{modelIntro}

Based on the model introduced in Section \ref{model}, we will now formally define quantum ball-permuting complexity classes. We analyze their power with standard and arbitrary initial states in Sections \ref{sep} and \ref{arbitrary}, respectively. These sections assume that one can set the amplitudes $c$ and $s$ to arbitrary values for each interaction. Then, in Section \ref{Post}, we analyze the complexity of this model when the interactions are constrained to obey the velocity constraints and the Yang-Baxter equations introduced at the end of Section \ref{model}.

\begin{definition}

$\QBall$ is the class of languages $L \subseteq \{0,1\}^\star$ for which there exists a polynomial time Turing machine $M$ which on any input $x \in \{0,1\}^\star$, outputs the description of a ball permuting quantum circuit $C$ as a composition of $X$ operators, and the description of a subset $P \subseteq S_n$ of permutations, such that if $x\in L$, then the probability that the sampled permutation from $C$ is in $P$ at least $\dfrac{1}{2} + \dfrac{1}{\operatorname*{poly}(n)}$, and otherwise it is at most $\dfrac{1}{2} - \dfrac{1}{\operatorname*{poly}(n)}$.  Likewise, $\RQBall$ is defined to be the subclass of $\QBall$ where the exchange gates of the ball permuting circuits are constrained to satisfy the Yang-Baxter equation. 
\end{definition}

Note that in this definition we have not specified in which state the quantum circuit $C$ is initialized. The exact power of the class $\QBall$ (or $\RQBall$) may depend on the input state allowed. Unless otherwise specified, we will assume the initial state is $\ket{123\ldots n}$, i.e. the identity permutation.

\subsection{Standard Initial States}
\label{sep}

In this section we will consider the power of $\QBall$ when the initial state is the identity permutation $\ket{12\ldots n}$. We observe the following containments for the ball permuting complexity classes:

\begin{theorem}
$\RQBall \subseteq \QBall \subseteq \BQP$.
\end{theorem}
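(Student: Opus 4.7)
The plan is to dispense with the trivial containment and then simulate $\QBall$ inside $\BQP$ via a direct qubit encoding. The inclusion $\RQBall \subseteq \QBall$ is immediate from the definitions, since $\RQBall$ is defined as the subclass of $\QBall$ obtained by further restricting which partial-swap gates are allowed (those compatible with the velocity constraints and the Yang-Baxter equation); any such restricted circuit is \emph{a fortiori} a $\QBall$ circuit with the same acceptance criterion.

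For $\QBall \subseteq \BQP$, I would encode permutations and simulate the ball-permuting circuit on a qubit architecture. Using $n$ registers of $\lceil \log n \rceil$ qubits each, encode the permutation $\sigma \in S_n$ as the list $(\sigma(1), \ldots, \sigma(n))$, for a total of $O(n \log n)$ qubits. The initial state $\ket{1,2,\ldots,n}$ is a computational basis state and can be prepared trivially. Since each partial-swap gate only permutes register contents, the dynamics preserves the ``valid'' subspace consisting of encodings of actual permutations, so the simulation never leaves it. Measuring all qubits at the end yields a permutation distributed identically to the ball-permuting circuit's output, and a classical polynomial-time postprocessing step then checks whether this permutation lies in $P$ (taking the Turing-machine description of $P$ to support polynomial-time membership testing, which is the natural reading of the definition).

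The technical step is to compile each gate $X(\theta,t) = \cos\theta\cdot I + i\sin\theta\cdot L_{(t,t+1)}$ into a polynomial-size subcircuit. On the encoding above, $L_{(t,t+1)}$ acts as a register-SWAP $S$ between the $t$-th and $(t+1)$-th registers (of $k = \lceil \log n \rceil$ qubits each). Since $S^2 = I$, we have $\cos\theta\cdot I + i\sin\theta\cdot S = e^{i\theta S}$, so it suffices to implement $e^{i\theta S}$. I would do this by diagonalizing $S$ pairwise in the Bell basis: writing $S = \bigotimes_{j=1}^{k} S_{j,j'}$, where $S_{j,j'}$ swaps the $j$-th qubits of the two registers, each $S_{j,j'}$ is diagonalized by a two-qubit Bell basis change with eigenvalue $-1$ only on the singlet. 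After applying this basis change on all $k$ pairs ($O(k)$ gates), $S$ becomes a diagonal operator with eigenvalue $(-1)^{s}$, where $s$ is the number of pairs sitting in the singlet; computing the parity of $s$ into an ancilla using $k$ Toffolis, applying the single-qubit phase $e^{i\theta Z}$, uncomputing the parity, and inverting the basis change implements $e^{i\theta S}$ exactly in $O(\log n)$ elementary gates.

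The main obstacle is exactly this compilation of $e^{i\theta S}$, because although $S$ factors as a tensor product of commuting qubit-swaps, the exponential $e^{i\theta S}$ does \emph{not} factor as the product of the corresponding qubit-swap exponentials; the Bell-basis diagonalization is the clean workaround (alternatively, any $2k$-qubit unitary of the polynomial dimension $\mathrm{poly}(n)$ admits an exact polynomial-size decomposition, which would also suffice). With each of the polynomially many ball-permuting gates compiled into $O(\log n)$ elementary gates, the overall $\BQP$ circuit is of polynomial size and produces the same output distribution on permutations, completing the proof.
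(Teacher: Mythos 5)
Your proof is correct, and it takes a genuinely different route for the interesting half of the claim. For $\RQBall \subseteq \QBall$, both you and the paper observe this is immediate from the definitions. For $\QBall \subseteq \BQP$, you and the paper use the same encoding---$n$ registers of $\lceil\log n\rceil$ qubits each---but the paper then simply notes that each partial-swap gate acts on $O(\log n)$ qubits and appeals to the fact that any unitary on $O(\log n)$ qubits can be exactly decomposed into $\mathrm{poly}(n)$ two-qubit gates (and then Solovay--Kitaev for the fixed finite gate set). You instead give a hand-built compilation: observing that $X(\theta,t) = e^{i\theta S}$ with $S$ the register SWAP, factoring $S$ into commuting qubit SWAPs, diagonalizing each in the Bell basis, accumulating the parity of singlet occupations into an ancilla, applying a single-qubit $Z$-rotation, and uncomputing. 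Your construction is more explicit and more efficient---$O(\log n)$ elementary gates per ball-permuting gate versus the generic $\mathrm{poly}(n)$ bound---and nicely isolates where the continuous parameter $\theta$ enters (a single-qubit rotation, which is the only place Solovay--Kitaev is still needed if one insists on a fixed finite gate set). One small wording issue: you say this implements $e^{i\theta S}$ ``exactly in $O(\log n)$ elementary gates''; it is exact only if $e^{i\theta Z}$ is itself an allowed elementary gate, otherwise that single rotation must be Solovay--Kitaev approximated, just as in the paper's argument. Both routes establish the containment; yours buys explicitness and a tighter gate count, the paper's buys brevity.
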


The containment $\RQBall\subseteq \QBall$ is trivial. For $\QBall \subseteq \BQP$ note that we can represent labels with binary strings using $O(\log n)$ qubits. By the Solovay-Kitaev theorem \cite{DawsonSolovayKitaev}, any unitary on $O( \log n )$ qubits can be implemented using a polynomial-size quantum circuit, and therefore a $\BQP$ circuit can simulate a $\QBall$ circuit. Therefore the power of $\QBall$ is upper bounded by $\BQP$; this model is no more powerful than standard quantum computing. 

We will now show that the power of $\QBall$ is likely much weaker than that of $\BQP$, because one can efficiently estimate individual amplitudes in $\QBall$ to $\dfrac{1}{\operatorname*{poly}}$ error in the complexity class $\DQC1$, i.e. the class of quantum computations that can be performed with one clean qubit and $n-1$ maximally mixed qubits. $\DQC1$ is believed to be a substantially weaker complexity class than $\BQP$. (For a discussion of $\DQC1$ see Appendix \ref{dqc1intro}.) In constrast, for circuits over qubits it is $\BQP$-hard to compute individual amplitudes to $\dfrac{1}{\operatorname*{poly}}$ accuracy. The following observation is crucial for the establishment of our result relating $\QBall$ and $\DQC1$:

\begin{lemma}
If $C$ is any composition of $X$ ball permuting operators over $\C S_n$, then $C|123\ldots n\rangle = |123\ldots n \rangle$ if and only if $C=I$.
\end{lemma}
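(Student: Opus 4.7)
The plan is to exploit the fact that each ball-permuting gate is by construction a linear combination of left-multiplication operators on the group algebra $\C S_n$, and that these left-multiplication operators are linearly independent.

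First I would observe that by definition $X(\theta, t) = \cos\theta\, L_e + i\sin\theta\, L_{(t,t+1)}$, where $e$ is the identity permutation. Hence any composition $C = X(\theta_1, t_1)\, X(\theta_2, t_2) \cdots X(\theta_k, t_k)$, once expanded by distributivity, is a linear combination of products $L_{\tau_1} L_{\tau_2} \cdots L_{\tau_k}$ where each $\tau_j$ is either $e$ or $(t_j, t_j+1)$. Since $L_{\sigma_1} L_{\sigma_2} = L_{\sigma_1 \circ \sigma_2}$, each such product collapses to a single $L_\sigma$ with $\sigma \in S_n$. Therefore we can write
$$
C \;=\; \sum_{\sigma \in S_n} c_\sigma\, L_\sigma
$$
for some complex coefficients $c_\sigma$ (most of which will be zero).

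Next I would use the fact that the operators $\{L_\sigma : \sigma \in S_n\}$ are linearly independent on $\C S_n$. This is immediate: evaluating at $|e\rangle$ gives $L_\sigma |e\rangle = |\sigma\rangle$, and the vectors $\{|\sigma\rangle\}$ form an orthonormal basis of $\C S_n$. Consequently,
$$
C\,|1,2,\ldots,n\rangle \;=\; \sum_\sigma c_\sigma\, L_\sigma |e\rangle \;=\; \sum_\sigma c_\sigma\, |\sigma\rangle.
$$
If the left-hand side equals $|e\rangle = |1,2,\ldots,n\rangle$, matching coefficients in the orthonormal basis forces $c_e = 1$ and $c_\sigma = 0$ for every $\sigma \neq e$. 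This gives $C = L_e = I$, establishing the nontrivial direction; the converse is immediate.

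I do not anticipate any real obstacle here: the only thing to double-check is that the decomposition of $C$ into left-multiplication operators is closed under composition (which follows from the homomorphism property of $L$) and that these operators are linearly independent (which follows from faithfulness of the regular representation). The key structural point worth highlighting in the writeup is that this argument fails for initial states other than $|1,2,\ldots,n\rangle$, since for a generic $|\psi\rangle$ the map $L_\sigma \mapsto L_\sigma |\psi\rangle$ need not be injective — this is precisely why nontrivial stabilizers can appear for other initial states, and it motivates the more delicate representation-theoretic analysis used elsewhere in the paper.
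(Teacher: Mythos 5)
Your proof is correct. The paper's one-line sketch reaches the same conclusion by a slightly different mechanism: since $C$ is built from left-multiplication operators, it commutes with every right-multiplication $R_\sigma$, so $C|e\rangle = |e\rangle$ immediately gives $C|\sigma\rangle = C R_{\sigma^{-1}}|e\rangle = R_{\sigma^{-1}} C |e\rangle = |\sigma\rangle$ for every $\sigma$ (this is what ``permuting the labels arbitrarily'' means there), and an operator fixing an entire orthonormal basis is the identity. You instead expand $C = \sum_\sigma c_\sigma L_\sigma$ in the group algebra and read the coefficients off of $C|e\rangle$. Both arguments rest on exactly the same structural fact --- $C$ lies in the left group algebra, for which $|e\rangle$ is a separating (indeed cyclic) vector --- so they are essentially interchangeable; your version is more explicit and self-contained, while the paper's commutant phrasing is the form that gets reused later (e.g.\ in reducing $\langle 123\ldots n|C|123\ldots n\rangle$ to $\mathrm{Tr}(C)/n!$ for the $\DQC 1$ result). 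Your closing observation about why the argument fails for general initial states, and why that necessitates the representation-theoretic analysis elsewhere in the paper, is accurate.
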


\begin{proof}(Sketch) Suppose that $C|123\ldots n\rangle = |123\ldots n \rangle$, then permuting the labels arbitrarily gives $C|\sigma\rangle = |\sigma\rangle$ for all permutations $\sigma \in S_n$.
\end{proof}

The main result is the following:

\begin{theorem}
There is a $\DQC 1$ algorithm which takes as input a description of a ball permuting circuit $C$ over $\C S_n$, and which outputs a complex number $\alpha$ such that $|\alpha - \langle 123 \ldots n | C | 123 \ldots n\rangle| \leq \dfrac{1}{\operatorname*{poly}(n)}$, with high probability \footnote{Furthermore, the $\DQC 1$ algorithm is able to find additive approximations for both the real and and imaginary values of the amplitude, separately.}.
\label{DQC1thm}
\end{theorem}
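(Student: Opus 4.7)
The plan is to reduce $\langle 123\ldots n|C|123\ldots n\rangle$ to a normalized trace on $\C S_n$ and then hand that trace to the standard $\DQC 1$ trace-estimation algorithm. The preceding lemma's proof sketch, pushed one step further, gives $\langle\sigma|C|\sigma\rangle=\langle e|C|e\rangle$ for every $\sigma\in S_n$: each ball-permuting generator $X(\theta,t)$ acts only on the positions of the balls and never on their labels, so it commutes with the natural $S_n$-action on $\C S_n$ by relabelings $U_\pi|\tau\rangle=|\pi\circ\tau\rangle$, and hence so does $C$. Since $|\sigma\rangle=U_\sigma|e\rangle$, this gives $\langle\sigma|C|\sigma\rangle=\langle e|U_\sigma^{-1}CU_\sigma|e\rangle=\langle e|C|e\rangle$; summing on $\sigma$ yields $\langle e|C|e\rangle=\mathrm{Tr}(C)/n!$.

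The challenge is then that $\DQC 1$ natively estimates $\mathrm{Tr}(U)/2^m$ for a qubit unitary $U$, while we need $\mathrm{Tr}(C)/n!$. The naive register-per-ball encoding on $n\lceil\log_2 n\rceil$ qubits is useless, since $n!/2^{n\log n}\sim e^{-n}$ means that any $1/\operatorname*{poly}$ additive estimate of the qubit-trace transports to exponentially large error on $\mathrm{Tr}(C)/n!$. I would therefore use a \emph{dense} encoding with only constant-factor overhead: for concreteness, pack the Lehmer code of $\sigma\in S_n$ into a single register of $m=\lceil\log_2 n!\rceil$ qubits, so the valid codewords form an $n!$-dimensional subspace $V_L$ with $2^m/n!\in[1,2]$. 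I then build a polynomial-size qubit circuit $\tilde C$ that acts as $C$ on $V_L$ (through the Lehmer bijection) and as the identity on $V_L^\perp$: each generator becomes $\tilde X(\theta,t)=\Pi_{V_L}(cI+is\,\sigma_t)+\Pi_{V_L^\perp}$, where $\sigma_t$ is the classically polynomial-time-computable permutation of $\{0,\ldots,2^m-1\}$ that, on $V_L$, realizes the ball-permuting generator through the Lehmer bijection. Because $\sigma_t$ is a fixed-point-free involution on $V_L$, the operator $cI+is\,\sigma_t=\exp(i\theta\,\sigma_t)$ is unitary and can be synthesized by a phase-kickback circuit that calls controlled-$\sigma_t$ twice around a single-qubit rotation, wrapped in a compute-uncompute validity check. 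The construction uses only $O(1)$ clean ancillas (one for the outer Hadamard test and one recycled for the inner phase-kickback), with all other work qubits allowed to be maximally mixed, so the whole computation remains inside $\DQC 1$.

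Finally, I would run the $\DQC 1$ Hadamard-test trace-estimation protocol on $\tilde C$ to obtain $\mathrm{Tr}(\tilde C)/2^m$ to additive error $\varepsilon=1/\operatorname*{poly}(n)$, using the standard real-part and imaginary-part variants to recover both components of the amplitude as the theorem requires. Since $\tilde C$ is the identity on $V_L^\perp$, $\mathrm{Tr}(\tilde C)=\mathrm{Tr}(C)+(2^m-n!)$, so subtracting the known constant $(2^m-n!)/2^m$ and multiplying by $2^m/n!\in[1,2]$ extracts $\langle e|C|e\rangle$ within $2\varepsilon$, still $1/\operatorname*{poly}(n)$. The main obstacle is squarely the middle paragraph: faithfully implementing $\tilde X(\theta,t)$ in the packed Lehmer encoding with only $O(1)$ clean qubits, which requires careful uncomputation of the mixed work qubits that carry the Lehmer-code arithmetic; everything else is bookkeeping around well-known $\DQC 1$ primitives.
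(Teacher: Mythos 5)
Your overall architecture is the same as the paper's: reduce $\langle 123\ldots n|C|123\ldots n\rangle$ to $\mathrm{Tr}(C)/n!$ via commutation with the right-multiplication (relabeling) action, observe that the register-per-ball encoding loses an exponential factor $n!/2^{n\log n}$, switch to a factorial-number-system (Lehmer-code) encoding on $\lceil\log_2 n!\rceil$ qubits so that the density ratio $2^m/n!$ is $O(1)$, and feed the resulting qubit unitary to the standard $\DQC 1$ trace estimator, correcting for the identity action on invalid codewords. All of that matches the paper, and your offset correction $\mathrm{Tr}(\tilde C)=\mathrm{Tr}(C)+(2^m-n!)$ is a clean alternative to the paper's flag-qubit mechanism.

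However, the step you flag as "the main obstacle" is a genuine gap, and it is exactly the technical heart of the paper's proof. Your phase-kickback circuit $H_a\cdot c\sigma_t\cdot H_a\cdot R_z(\theta)\cdot H_a\cdot c\sigma_t\cdot H_a$ does implement $e^{i\theta\sigma_t}$ exactly with one clean ancilla, \emph{given} a reversible in-place implementation of controlled-$\sigma_t$. But $\sigma_t$ is a permutation of the packed Lehmer codes, and the generic way to realize a poly-time bijection reversibly (Bennett's compute--copy--uncompute) needs $\operatorname*{poly}(n)$ clean work qubits; in $\DQC 1$ you can only afford $O(\log n)$ effectively-clean ancillas (by postselecting maximally mixed qubits on returning to all zeros, which succeeds with $1/\operatorname*{poly}$ probability), and work qubits left maximally mixed and entangled with garbage would suppress the trace you are trying to read out. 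The paper closes this gap with two observations you do not supply: (i) the encoded swap is \emph{local} in the factorial representation --- it only modifies the digits $a_t,a_{t+1}$, not the whole codeword --- and (ii) those digits can be extracted from $N_\sigma$ and written back reversibly \emph{in place} using grade-school long division by the hard-coded constants $t!$ and $(t+1)!$, which costs only $O(1)$ ancillas because each quotient bit frees the register cell it overwrites. Without an argument of this kind, your construction as written does not stay inside $\DQC 1$.
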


\begin{proof} (Sketch) For any ball-permuting circuit, we know that for any two permutations $\pi$ and $\pi'$, $\langle \pi|C|\pi\rangle=\langle \pi'|C|\pi'\rangle$. This implies that $\langle 123\ldots n|C|123\ldots n\rangle = \frac{\mathrm{Tr}(C)}{n!}$. On the other hand, Knill and Laflamme \cite{knill1998power} showed for any quantum circuit $U$ on $n$ qubits composed of polynomially many gates,  a $\DQC1$ circuit can output a bit which is $1$ with probability $\frac{1}{2}+|\frac{\mathrm{Tr}(U)}{2^{n+1}}|$, and therefore estimate $|\mathrm{Tr}(U)/2^n|$ to $1/\operatorname*{poly}$ additive error with postprocessing. Therefore it suffices to create a qubit unitary $U$ on $n$ qubits with the same trace as $C$ on $n$ particles, where $2^{n+1}$ is approximately the same as $n!$.
Fortunately it is possible to do this using a carefully chosen encoding of permutations with qubits. 
In particular, we use an encoding of permutations that is both compressed and local. By compressed we mean that it uses $\log\left(n! \operatorname*{poly}(n)\right)$ bits, and by local we mean that we can use polynomial-size quantum circuits to simulate each quantum swap. We discuss this in detail in Appendix \ref{trace}. 
\end{proof}

A similar argument holds for arbitrary amplitudes $\langle \sigma |C| \sigma' \rangle$. This is followed by the reduction $C\mapsto L_\sigma C L_{\sigma'^{-1}}$ to the above problem.
Therefore $\DQC1$ can efficiently estimate amplitudes in the $\QBall$ model to $1/\operatorname*{poly}$ accuracy. In contrast, for qubits, computing individual amplitudes to $1/\operatorname*{poly}$ accuracy is $\BQP$-hard. Therefore this is evidence that $\QBall$ is a weaker model of computation than $\BQP$. 

Note however Theorem \ref{DQC1thm} does not imply $\QBall\subseteq \DQC1$ as decision languages, because we have only shown how to compute individual amplitudes of $\QBall$ in $\DQC1$, while the decision class $\QBall$ may accept or reject based on an exponential sum of amplitudes. Likewise, Theorem \ref{DQC1thm} does not immediately imply a $\DQC1$ machine could sample from the output distribution of a $\QBall$ circuit. Whether it is possible to efficiently sample the output distributions of ball permuting circuits with $\DQC 1$ computation is unknown.

We conjecture that this result can be generalized to a wide variety of quantum models based on group algebras. More precisely, consider a group $G$, with identity element $e$. Then construct the Hilbert space $\mathcal{H}_G$ with orthonormal basis $\{ |g\rangle : g \in G \}$. Let $\C G$ be the (left) group algebra, and $x \in \C G$. Then $\langle e | x | e \rangle= \dfrac{1}{|G|} \operatorname*{Tr} (x)$, which is a reduction to the computation of normalized trace. 
If one has a compressed and local encoding of $G$ as described in the proof of Theorem \ref{DQC1thm}, then this would imply that amplitudes of computations over $\C G$ can be simulated in $\DQC1$ as well. We leave this is an open problem.

\subsection{Arbitrary Initial States}
\label{arbitrary}

We already observed that the model we obtain seems to be restricted if one starts with and measures according to the computational basis. In this section, we further examine how the power of ball-permuting model depends on the input states. We first give a simple construction showing that $\QBall$ is universal for $\BQP$ when given particular input states. The proof is based on DiVincenzo \emph{et al.}'s result that the exchange interaction is universal for quantum computing \cite{divincenzo2000universal}. We then provide a partial classification of the power of $\QBall$ on different input states using the representation theory of $S_n$.This requires substantial work in representation theory, and is the main technical contribution of the paper.  In Section \ref{partialclassification1} we describe a number of input states which boost the power of $\QBall$ up to $\BQP$. We then describe some other input states which yield a model intermediate between $\DQC1$ and $\BQP$ in Section \ref{intermediate}.

\subsubsection{A simple proof that $\QBall=\BQP$ on arbitrary initial states}

First of all, building on DiVincenzo \emph{et al.}'s result that the exchange interaction is universal for quantum computing \cite{divincenzo2000universal}, we observe that when the initial state need not be a computational basis state, the quantum ball permuting model has the full power of $\BQP$. The proof uses the notion of \emph{encoded universality}: although our model does not allow arbitrary unitaries on the Hilbert space $\mathbb{C} S_n$, it can simulate arbitrary unitaries on certain subspaces of $\mathbb{C} S_n$ which encode qubits. We can therefore perform universal quantum computation on the encoded subspace, assuming our inputs are allowed to lie in the encoded subspace. This can be summarized by the following theorem:

\begin{theorem}
If $\QBall$ is allowed to have non-basis input states, then $\QBall=\BQP$.
\end{theorem}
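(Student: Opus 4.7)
The plan is to leverage the DiVincenzo \emph{et al.} result that the Heisenberg exchange interaction on qubits is encoded-universal for quantum computation, and transport this result into the ball-permuting model via the representation theory of $S_n$. The containment $\QBall \subseteq \BQP$ has already been established; it remains to show $\BQP \subseteq \QBall$ when non-basis initial states are permitted.

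First, I would recall DiVincenzo \emph{et al.}'s construction: on $n$ physical qubits in $(\C^2)^{\otimes n}$, the partial SWAP gates $\exp(i\theta S_{ij})$ between neighboring qubits suffice (in polynomial depth) to approximate any desired unitary on each irreducible summand of the Schur--Weyl decomposition
\[
(\C^2)^{\otimes n} \;\cong\; \bigoplus_{\lambda} V_\lambda \otimes W_\lambda,
\]
where $V_\lambda$ is the $S_n$-irrep indexed by $\lambda$ and $W_\lambda$ is the corresponding $SU(2)$-irrep. The exchange generator $S_{ij}$ is, by definition, the image of the transposition $(i,j)\in S_n$ acting as the representation on $V_\lambda$ while acting trivially on $W_\lambda$. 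For a suitable choice of $\lambda$ (e.g.\ the total-spin-$0$ or spin-$\tfrac{1}{2}$ sector used in \cite{divincenzo2000universal,fong2011universal}), a logical qubit register of size $k$ can be encoded in $V_\lambda$ using $n = O(k)$ physical qubits, and exchange gates implement an arbitrary $\BQP$ circuit on those logical qubits.

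Second, I would observe that the ball-permuting Hilbert space $\C S_n$, viewed as the left regular representation, decomposes as
\[
\C S_n \;\cong\; \bigoplus_\lambda V_\lambda \otimes V_\lambda^{*},
\]
and that the generator $L_{(t,t+1)}$ inside $X(\theta,t)=\cos\theta\, I + i\sin\theta\, L_{(t,t+1)}$ acts on each $V_\lambda$-factor by \emph{exactly} the same image of $(t,t+1)\in S_n$ that governs qubit exchange on the corresponding $V_\lambda \subseteq (\C^2)^{\otimes n}$. Thus, for any fixed $\lambda$ and any fixed vector $v^* \in V_\lambda^*$, the ball-permuting gates restricted to the subspace $V_\lambda \otimes v^*$ are identical (up to a unitary change of basis, given by the Young--Yamanouchi vs.\ Schur--Weyl identification) to the encoded exchange action on $V_\lambda$ inside the qubit model. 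An input state lying in $V_\lambda \otimes v^*$ is a specific superposition over permutations $\ket{\sigma}$ and is precisely the kind of non-basis initial state the theorem allows.

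The simulation then proceeds as follows: given a $\BQP$ instance, I would choose $\lambda$ and encode the initial logical state of DiVincenzo's construction as a vector in $V_\lambda \otimes v^*$, which is provided as the input state of the $\QBall$ machine. I would then apply the sequence of ball-permuting gates corresponding to the exchange circuit, and finally convert the encoded output of the computation into a distinguished subset $P \subseteq S_n$ of computational basis states that can be identified in polynomial time by the verifier, matching the $\QBall$ definition. The main obstacle I anticipate is not the simulation of exchange (which is immediate from the representation-theoretic identification above) but the bookkeeping of the \emph{readout}: one must verify that, after applying a ball-permuting unitary that moves the encoded output qubit to a canonical position in $V_\lambda$, the resulting distribution over permutations $\sigma$ concentrates its bias on a computable subset $P \subseteq S_n$ correlated with the $\BQP$ answer. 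Standard techniques (amplification via repetition within the encoded register, and choosing $v^*$ to be a Young--Yamanouchi basis vector so that the branching rules make $P$ explicit) should handle this cleanly, completing the proof.
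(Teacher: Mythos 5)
Your representation-theoretic identification is correct: $\C S_n\cong\bigoplus_\lambda V_\lambda\otimes X_\lambda$ under the two regular actions, the left action is trivial on the multiplicity factor, and $X(\theta,t)=\cos\theta\,I+i\sin\theta\,L_{(t,t+1)}$ restricted to any $V_\lambda$ coincides with the exchange gate $e^{i\theta E}$ restricted to the same irrep inside $(\C^2)^{\otimes n}$. So the gate-simulation half of your argument is sound and is essentially the ``alternative construction'' the paper mentions in passing. However, the paper's actual proof does \emph{not} embed into a single irrep $V_\lambda\otimes v^*$. It instead encodes the qubit basis state $|x\rangle$ (of Hamming weight $k$) as the coset state $|\phi(x)\rangle=L_{\pi_x}\,\frac{1}{\sqrt{k!(n-k)!}}\sum_{\tau\in S_{k,n-k}}R(\tau)|12\ldots n\rangle$, i.e., it works in the permutation module $\C[S_n/(S_k\times S_{n-k})]\cong\bigoplus_j V_{(n-j,j)}$ rather than in one irreducible summand. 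The point of that choice is the readout: the states $|\phi(x)\rangle$ for distinct $x$ are supported on \emph{disjoint} sets of permutations, so measuring $\sigma$ and replacing labels $1,\ldots,k$ by $1$ and the rest by $0$ yields a sample from $|\langle x|U|\psi\rangle|^2$ exactly, and the set $P$ is trivially poly-time computable.

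This is where your proposal has a genuine gap. You correctly identify readout as the obstacle, but the ``standard techniques'' you invoke do not resolve it. If the final state lies in $V_\lambda\otimes v^*$, the two encoded logical outcomes are two orthogonal vectors of $V_\lambda$, and orthogonality alone does not make their induced distributions over the permutation basis distinguishable (compare $|+\rangle$ versus $|-\rangle$ measured in the computational basis), let alone distinguishable by a \emph{poly-time computable} subset $P\subseteq S_n$ with $1/\mathrm{poly}$ bias; the overlaps $\langle\sigma|T,v^*\rangle$ are representation matrix elements spread over all of $S_n$. Choosing $v^*$ to be a Young--Yamanouchi vector makes the YY-basis readout explicit, but the $\QBall$ definition only permits measurement in the permutation basis, and the paper's Section on intermediate models is devoted precisely to the observation that ball-permuting computation confined to irrep-level data with restricted readout may be \emph{weaker} than $\BQP$ (conjecturally between $\DQC1$ and $\BQP$). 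To close the gap you must either switch to the symmetrized coset encoding as the paper does, or supply a concrete argument that a permutation-basis measurement of $V_\lambda\otimes v^*$ admits an efficiently computable accepting set $P$; as written, the final step of your proof does not go through.
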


The authors of \cite{divincenzo2000universal, bacon9909058universal} showed that one could use the exchange interaction to simulate one logical qubit using three physical qubits by the following encoding:

$$
|0_L\rangle := \dfrac{|010\rangle - |100\rangle}{\sqrt{2}}
$$ 

\noindent and,

$$
|1_L\rangle := \dfrac{|010\rangle + |100\rangle-2 |001\rangle}{\sqrt{6}}.
$$ 

\noindent Moreover the author of \cite{divincenzo2000universal} showed how to implement an approximate CNOT on logical qubits using exchange interactions. This result was further improved in \cite{fong2011universal}, where the authors found closed form expressions for this implementation. 

In our model, we mimic this encoded universality construction for qubits using permutations. 
Specifically, to encode a logical qubit, we use permutations of a three-element set. We let the permutation labels $1$ and $2$ represent the qubit state $\ket{0}$ and the permutation label $3$ represent the qubit state $\ket{1}$. We then symmetrize over the labels which represent with zeros and over the labels which represent with ones, to obtain states over $\C S_3$ which represents each basis state $\ket{001},\ket{010},\ket{100}$ used in DiVincenzo et al.'s construction. For example we represent the qubit state $|001\rangle$ with the state $|123\rangle + |213\rangle$, and we represent the state $\ket{010}$ with the state $\ket{132}+\ket{231}$.
To encode logical zero and logical 1, we use DiVincenzo \emph{et al.}'s encoding, ported over to permutations using the above correspondance.
One can check this simulation works because the only operators being used in both models are permutations. 
For $n>3$ qubits we use the same symmetrization idea to simulate exchange interactions on $(\C^2)^{\tensor n}$;  a detailed proof of this fact in given in Appendix \ref{bqpuniversality}.

Note that an alternative construction for encoded universality can be obtained using the path representations of Appendix \ref{remarksexchange}, which stands in line with the representation theory discussed in the next section. However,  DiVincenzo \emph{et al.}'s construction has the advantage of being more explicit. More specifically, we can encode $|0_L\rangle$ with the path $\ket{udud}$ and $|1_L\rangle$ with $\ket{uudd}$. Here $u$ is corresponds to one step up, and $d$ is one step down. Using the result of next Section we know that we can apply $SU(V)$ on the space of these paths. Therefore to simulate $\BQP$ first initialize in $\ket{udud\ldots udud}$, implement sequences of gates to mimic arbitrary rotation and entangling two qubit gates on the encoded qubits and sample a path from the output. Note that the construction of these gates directly follow from theorem \ref{thm:bigirreps} of next Section. This encoding of qubits is also very similar to Aharonov and Arad's \cite{aharonov2011bqp} construction for proving $\BQP$-completeness for the evaluation of Jones polynomial.

\subsubsection{Partial classification of input states which make $\QBall=\BQP$}
\label{partialclassification1}

Our particular contribution in this section is to demonstrate a partial classification for the computational power of this model according to different initial states. Our objective is to demonstrate that different input states lead to different interesting models of computation. This classification is obtained using the representation theory of the symmetric group. (For a brief review of this theory see Appendix \ref{RepTheory}).

A representation of a group $G$ is a homomorphism $:G\rightarrow GL(V)$, for some vector space $V$, which obeys the same multiplication rule as the group law. We interchangeably refer to $V$ or the homomorphism itself as the representation. The regular representation of $S_n$ with left action is according to the homomorphism $L: S_n\rightarrow GL(\C S_n)$, with the map $L_g: |h\rangle \mapsto |g . h\rangle$. Similarly, the right action $R: S_n\rightarrow GL(\C S_n)$, is according to the map $R_g: |h\rangle \mapsto |h.g^{-1}\rangle$. An invariant subspace is a subspace that is stable under the action of a particular representation, \i.e., the image of this subspace under the action of the group is equal to the subspace itself. A representation is called irreducible if its only invariant subspaces are the singleton $\{0\}$ and the representation itself.

Under these (left and right) regular representations the Hilbert space $\C S_n$ decomposes into irreducible representations as:

$$
\C S_n \cong \bigoplus_{\lambda \vdash n} V_\lambda \tensor X_\lambda,
$$

\noindent where each $\lambda$ is a partition of the number $n$: that is, a list of non-negative integers in non-ascending order summing to $n$. $\lambda \vdash n$ means that $\lambda$ is a partition of $n$. $V_\lambda \tensor X_\lambda$ is a summand of the decomposition the tensor product of two vector spaces with $\dim(V_\lambda)=\dim(X_\lambda)=:d_\lambda$ and $\sum_\lambda d^2_\lambda = n!$.. Interestingly, the left actions of $S_n$ only acts on $V_\lambda$'s and act trivially on the $X_\lambda$'s. $X_\lambda$'s are called the multiplicity spaces corresponding to each $V_\lambda$, as they enumerate all the irreducible representations isomorphic to $V_\lambda$.

Denote the Lie group generated by ball-permuting operators with $G_n$, and the projection of $G_n$ onto $V_\lambda$ with $G_n (V_\lambda)$. 

\begin{theorem}
\label{thm:bigirreps}
If $\lambda\vdash n$ or its conjugate consists of two parts, then $SU(V_\lambda) \subseteq G_n(V_\lambda)$.
\end{theorem}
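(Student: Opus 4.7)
The plan is to prove, by induction on $n$, that the Lie algebra $\mg_\lambda \subseteq \mathfrak{u}(V_\lambda)$ generated by $\{i\Sigma_t\}_{t=1}^{n-1}$ (where $\Sigma_t := \pi_\lambda((t,t+1))$) contains $\su(V_\lambda)$; this suffices since the corresponding connected Lie subgroup of $U(V_\lambda)$ lies in $G_n(V_\lambda)$. First I would reduce to the case where $\lambda$ itself has two rows: if instead the conjugate $\lambda'$ has two rows, then $V_{\lambda'} \cong V_\lambda \tensor \mathrm{sgn}$ as $S_n$-representations, so $\pi_{\lambda'}((t,t+1)) = -\pi_\lambda((t,t+1))$, meaning $X(\theta,t)$ on $V_\lambda$ coincides with $X(-\theta,t)$ on $V_{\lambda'}$ under the natural identification and the two generated Lie subgroups agree. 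So assume $\lambda = (n-k,k)$ with $1 \le k \le n/2$. In the Young--Yamanouchi basis indexed by standard Young tableaux of shape $\lambda$, each $\Sigma_t$ is a self-adjoint involution: $+1$ on tableaux where $t, t+1$ lie in the same row, $-1$ when in the same column, and otherwise an explicit $2 \times 2$ rotation block with entries $\pm 1/d$ and $\sqrt{1 - 1/d^2}$ determined by the axial distance $d$.

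The base case would be the nontrivial two-row partition $\lambda = (2,1)$ at $n = 3$, for which a direct $2 \times 2$ calculation---using that $\Sigma_1, \Sigma_2$ are non-commuting involutions with eigenvalues $\pm 1$---shows that a single commutator suffices to span $\su(2)$. For the inductive step I would use the branching rule $V_\lambda|_{S_{n-1}} = \bigoplus_\mu V_\mu$, summed over two-row $\mu$ obtained by removing a corner of $\lambda$. When only one summand appears (precisely when $k = n/2$), the induction hypothesis applied to the smaller partition already supplies $\su(V_\lambda)$. Otherwise $V_\lambda|_{S_{n-1}} = V_{\mu_1} \oplus V_{\mu_2}$ with $\mu_1 = (n-k-1, k)$ and $\mu_2 = (n-k, k-1)$; denote by $\mg^{(n-1)}$ the Lie algebra generated by $\{i\Sigma_t\}_{t=1}^{n-2}$, so by induction its projection onto each $\mathfrak{u}(V_{\mu_j})$ contains $\su(V_{\mu_j})$.

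The heart of the argument would then consist of three steps. First, a Goursat-style argument: $\mg^{(n-1)}$ is a Lie subalgebra of $\mathfrak{u}(V_{\mu_1}) \oplus \mathfrak{u}(V_{\mu_2})$ that surjects onto both simple factors, and the only alternative to its containing their direct sum is a diagonal embedding via an isomorphism $\su(V_{\mu_1}) \cong \su(V_{\mu_2})$. Second, an explicit computation in the Young--Yamanouchi basis to show that the remaining generator $\Sigma_{n-1}$ has a nonzero off-diagonal block $B : V_{\mu_2} \to V_{\mu_1}$, arising from axial-distance $d = n - 2k + 1 \ge 2$ rotation blocks linking tableaux with $n$ in row 1 to those with $n$ in row 2. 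Third, since the representation of $\su(V_{\mu_1})$ on $\mathrm{Hom}(V_{\mu_2}, V_{\mu_1})$ by left multiplication is isotypic of the irreducible defining representation, bracketing $B$ with $\su(V_{\mu_1}) \oplus 0$ produces a full copy of $V_{\mu_1}$; iterating with $0 \oplus \su(V_{\mu_2})$ on the right fills the remaining copies, and combining with the traceless scalar obtained by subtracting the scalar components of the generators yields all of $\su(V_\lambda)$, closing the induction.

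The main obstacle I anticipate is the Goursat step. For most two-row $\lambda$ the hook length formula gives $\dim V_{\mu_1} \neq \dim V_{\mu_2}$, so the summands are non-isomorphic as Lie algebras and the diagonal alternative is immediately ruled out. However, a few $\lambda$ do admit isomorphic branching summands---for example $\lambda = (5,3)$ with $\dim V_{(4,3)} = \dim V_{(5,2)} = 14$---and there the plan would be to use the scalar (trace) components $i\chi_{\mu_j}((t,t+1))/\dim V_{\mu_j}$ of the generators, whose differing profiles on the two summands should obstruct their simultaneous containment in any graph subalgebra.
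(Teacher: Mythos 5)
You follow the same overall strategy as the paper: induction via the subgroup-adapted Young--Yamanouchi basis and the branching rule $V_\lambda|_{S_{n-1}} \cong \bigoplus_\mu V_\mu$, with the $n=3$ base case done by hand and the last transposition $L_{(n-1,n)}$ serving as the bridge between branching summands. Where you differ is in the technical glue: the paper invokes the bridge and decoupling lemmas of Aharonov--Arad, while you substitute a Goursat argument together with irreducibility of $\mathrm{Hom}(V_{\mu_2},V_{\mu_1})$ as a module over $\su(V_{\mu_1})\oplus\su(V_{\mu_2})$. That is a cleaner, more self-contained route, and your up-front reduction of the two-column case via $\lambda' \cong \lambda\tensor\mathrm{sgn}$ is tidier than the paper's passing remark that ``the above results also apply to the tableaus with two columns.''

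The obstacle you flag is real, and it is in fact an unaddressed gap in the paper's own proof. The Aharonov--Arad decoupling lemma requires the two subspaces to have \emph{different} dimensions, yet the paper never verifies this in the inductive step, and it fails exactly where you predict: for $\lambda=(5,3)$ in $S_8$ one has $\dim V_{(4,3)}=\dim V_{(5,2)}=14$, and more generally the two branching summands of $(p,q)$ coincide in dimension whenever $q=\binom{p-q+1}{2}$. Your proposed repair---comparing the scalar/trace components of the generators, equivalently the $\pm1$ eigenvalue multiplicities of $\Sigma_t$ on the two summands, which genuinely differ (e.g.\ $\chi_{(4,3)}((1,2))/14=2/7$ versus $\chi_{(5,2)}((1,2))/14=3/7$)---is the right idea, since any graph isomorphism between $\su(V_{\mu_1})$ and $\su(V_{\mu_2})$ is inner or outer (transpose-inverse) and hence would preserve or negate the eigenvalue multiset of a traceless generator. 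But you leave this step as a sketch; it is the genuinely delicate part of the argument, and until it is carried out both your proposal and the paper's proof are incomplete at the same place.
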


\begin{proof}(Sketch) By induction, we first prove the statement for $n=3$. We then use subgroup adapted Young-Yamanouchi basis which manifests the branching rule (discussed in Appendix \ref{YoungYamanouchi}), along with decoupling lemma and bridge lemma of Aharonov and Arad \cite{aharonov2011bqp} to deduce the statement of this theorem for each subspace one-by-one. This procedure fails for subspaces corresponding to general partitions with more than two parts. 
\end{proof}

Therefore, if the input state to $\QBall$ corresponds to an irreducible representation which a) consists of two parts and b) is sufficiently large in dimension, then by Theorem \ref{thm:bigirreps} one can perform encoded universal computation on this input state. The description of these input states is given in Appendix \ref{classification}. We believe that the result can be extended to partitions with more than two rows or columns, however, the tools we used are restricted and we need more ideas to achieve this goal. We leave this result as an open question.

\subsubsection{Some new intermediate quantum computing models}
\label{intermediate}

We saw in Section \ref{sep} that for a ball permuting circuit $C$, the amplitude $\bra {123 \ldots n} C\ket {123\ldots n}$ can be additively approximated using a $\DQC 1$ algorithm. Moreover, last Section asserted that for specific partitions $\lambda$, if $\ket{\psi}$ is a separable state over the partition $V_\lambda\tensor X_\lambda$, then it is $\BQP$-complete to read the amplitude $\bra {\psi} C \ket{\psi}$ within additive error. In this Section we provide evidence that the ball-permuting model along specific subspaces of $\C S_n$ yields a model of computing that is intermediate between $\DQC 1$ and $\BQP$.

Suppose that instead of the computational basis, we initialize the ball-permuting model with the projection of the state $|123\ldots n\rangle$ onto an irrep $\lambda$. Denote this (normalized) state by $\ket{\lambda}$. Then we apply a sequence of ball-permuting gates, and at the end of the computation we sample a pair of tableus in $V_\lambda \tensor X_\lambda$ according to the Young-Yamanouchi basis (see Appendix \ref{YoungYamanouchi} for a review). Inspired by this model, we formally define the following complexity class:

\begin{definition}
$\Samp\QBall(\lambda)$ is the class of problems that are solvable in polynomial time using polynomially many samples from the above model.
\end{definition}

The exact power of this model is unknown, however, we motivate reductions to an interesting complexity class that is intermediate between $\DQC 1$ and $\BQP$.
We can also define the following computational problem:

\begin{definition}
($\QBall(\lambda)$) given a ball permuting circuit $C$, and a partition $\lambda$, evaluate an additive approximation to $\bra{\lambda}C\ket{\lambda}$.
\end{definition}

Interestingly, the initial state $|123\ldots n\rangle$ is equally supported on all of the irreps $\lambda$ of the decomposition $\C S_n \cong \bigoplus_{\lambda \vdash n} V_\lambda \tensor X_\lambda$, and moreover it is maximally entangled over each partition $V_\lambda \tensor X_\lambda$. Also, notice that (as mentioned in the last Section) all ball-permuting operators only act on $V_\lambda$ parts, and act trivially on $X_\lambda$'s, the multiplicity spaces \footnote{Intuitively, the situation resembles a quantum/classical hybrid memory in the sense of \cite{kuperberg2003capacity}, where classical bits enumerate the name $\lambda$ corresponding to the particular irrep, and the quantum memory corresponds to a bipartite system-environment Hilbert space $V_\lambda \tensor X_\lambda$; $V_\lambda$ plays the role of a system, and $X_\lambda$ is its environment which is inert and also maximally entangled with the system.}.

In short, we find that the model operates on one half of a maximally entangled state, while leaving the other half untouched. At the end of the computation, one measures both halves of the maximally entangled state. 
To investigate the power of this model, as a toy model, we consider a restricted model of quantum computing. Imagine we have access to only one half of the state:
$$
|\psi\rangle =\frac{1}{2^{n/2}} \displaystyle\sum_{x\in\{0,1\}^n} \ket{x}\ket{x}.
$$
Suppose we can perform standard unitary quantum computation on the left half of the state, but one cannot access the right half of the state. At the end of the computation, then we get to measure both halves of the state in the computational basis. 

This models seems very similar to the class $\DQC1$, because if $C$ is a quantum circuit on the $n$ active qubits, then:

$$
\langle \psi|C|\psi\rangle = \dfrac{Tr(C)}{2^n},
$$

\noindent the normalized trace. This is simply because the reduced density matrix of the left half of the state is maximally mixed. We have seen in Section \ref{sep} that evaluation of such a trace within additive error is complete for the class $\DQC 1$. Therefore if we define the trace computing class:

\begin{definition}
(Trace computing quantum polynomial time) $\TQP$ is the class of problems that are polynomial-time reducible to additive approximation of $\langle \psi|C|\psi\rangle$, the normalized trace of the matrix. Also $\Samp \TQP$ is the class of problems that are solvable with high probability using polynomially many samples from $C|\psi\rangle$, in the computational basis.
\end{definition}

\noindent Then we trivially find $\TQP=\DQC1$. However, the class $\Samp\TQP$ seems to be more powerful than $\DQC1$. The reason is that one gets to measure the right half of the state at the end of the computation; this is not an ability one has in $\DQC1$. To put it another way, $\DQC1$ is defined using maximally mixed states as inputs, which is equivalent to performing your computation on a random basis state $\ket{x}$. $\Samp\TQP$ is equivalent to performing your computation on a random $\ket{x}$, but at the end of the computation, you get to learn which $\ket{x}$ you started with. As a result, it appears that $\Samp\TQP$ is intermediate between $\DQC1$ and $\BQP$. Note that $\Samp\TQP$ is $\BQP$-universal under postselection as well.

We first observe that:

\begin{theorem} 
If $\lambda$ is a partition with two equally sized rows, then $\QBall(\lambda)\in\TQP$.
\label{tworowQBall}
\end{theorem}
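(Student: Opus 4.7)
The plan is to reduce additive approximation of $\bra{\lambda}C\ket{\lambda}$ to normalized-trace estimation on a polynomial-size qubit circuit, and then invoke $\TQP = \DQC1$. The starting observation is that the normalized projection of $\ket{12\ldots n}$ onto $V_\lambda \otimes X_\lambda$ is precisely the maximally entangled state on this bipartition (equivalently, the ``identity resolution'' coming from character orthogonality applied to the regular representation). Since every gate $X(\theta,t)$ acts trivially on the multiplicity factor $X_\lambda$, this gives
\[
\bra{\lambda} C \ket{\lambda} \;=\; \frac{\mathrm{Tr}(C|_{V_\lambda})}{d_\lambda},
\]
so the problem reduces to additively approximating the normalized trace of the restriction $C|_{V_\lambda}$.

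For $\lambda = (n/2, n/2)$, the Young-Yamanouchi basis of $V_\lambda$ (Appendix \ref{YoungYamanouchi}) is in bijection with standard Young tableaux of shape $\lambda$, and hence with length-$n$ Dyck paths. I would encode each basis vector $\ket{T}$ as the $n$-bit string whose $t$-th bit records whether entry $t$ lies in row $1$ (``up'' step) or row $2$ (``down'' step) of $T$; this embeds $V_\lambda$ into $(\C^2)^{\otimes n}$ as the subspace spanned by Dyck-path bit strings. The crucial numerical observation is that $d_\lambda$ is the $(n/2)$-th Catalan number, so $2^n/d_\lambda = \Theta(n^{3/2})$ is polynomial; this is what lets the $\DQC1$ trace estimator give $1/\operatorname*{poly}(n)$ additive precision on $\mathrm{Tr}(C|_{V_\lambda})/d_\lambda$ after rescaling.

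By Young's orthogonal form, each gate $X(\theta,t)$ acts on a Young-Yamanouchi basis vector $\ket{T}$ inside the two-dimensional subspace spanned by $\ket{T}$ and its image under the adjacent transposition $(t,t+1)$, with coefficients depending only on the axial distance between entries $t$ and $t+1$ in $T$. A short computation shows that in the Dyck encoding this axial distance is an explicit function of the path height at position $t$, which is simply a partial sum of the first $t$ bits. I would therefore simulate each $X(\theta,t)$ by a polynomial-size qubit circuit that (i) checks whether the bit string is a valid Dyck path and classifies the bits at positions $t, t+1$ as UU, DD, UD, or DU, writing these flags into ancillas; (ii) coherently computes the path height at position $t$ into another ancilla; (iii) applies the height-controlled two-qubit unitary on qubits $t, t+1$ prescribed by Young's orthogonal form (identity on non-Dyck strings, $+1$ on UU/DD, $-1$ on same-column UD, and the axial-distance rotation on the remaining UD/DU mixing cases); and (iv) uncomputes the ancillas. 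Composing these gate simulations produces a polynomial-size qubit unitary $U$ acting as $C|_{V_\lambda}$ on the Dyck subspace and as the identity on its complement, so $\mathrm{Tr}(U) = \mathrm{Tr}(C|_{V_\lambda}) + (2^n - d_\lambda)$.

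Finally, running the $\DQC1$ trace-estimation subroutine on $U$ produces $\alpha$ with $|\alpha - \mathrm{Tr}(U)/2^n| \le 1/\operatorname*{poly}(n)$, and then $(2^n/d_\lambda)\bigl(\alpha - (2^n - d_\lambda)/2^n\bigr)$ approximates $\mathrm{Tr}(C|_{V_\lambda})/d_\lambda$ to additive error $(2^n/d_\lambda) \cdot 1/\operatorname*{poly}(n) = 1/\operatorname*{poly}(n)$, placing $\QBall(\lambda) \in \TQP$. I expect the main obstacle to be step (iii): verifying that the axial-distance coefficients from Young's orthogonal form can be produced uniformly from the height ancilla by a single polynomial-size subroutine, and that the identity action on the non-Dyck complement is enforced coherently without spoiling the trace across all of the same-row, same-column, and invalid-swap degeneracies simultaneously.
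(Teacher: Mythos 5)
Your proposal is correct and follows essentially the same route as the paper: encode standard tableaux of shape $(n/2,n/2)$ as $n$-bit strings recording the row of each entry, note that valid encodings form a $1/\operatorname*{poly}(n)$ fraction of all strings (the Catalan-number compression), simulate each gate locally via Young's orthogonal form with the axial distance computed from partial sums, and feed the resulting qubit unitary to the $\DQC1$ trace estimator. The extra details you supply (the maximally-entangled-state reduction to $\mathrm{Tr}(C|_{V_\lambda})/d_\lambda$ and the explicit offset correction for the identity action on invalid strings) are exactly what the paper defers to the techniques of its trace appendix.
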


\begin{proof} (Sketch) following the proof of theorem \ref{DQC1thm} we use a compressed and local representation of standard tableaux with two rows of length $n/2$ using strings of bits. Specifically, suppose we represent standard tableaux on two equal-sized rows by bit strings of length $n$. The $i$th entry represents whether the number $i$ appears on the top or bottom row. There are $2^n$ such strings and $2^n/\operatorname*{poly}(n)$ valid tableaux, so although many of the strings do not represent valid tableaux, a $1/\operatorname*{poly}$ fraction do represent valid tableaux. So this encoding is very efficient. Furthermore, it is local in the sense that to exchange two labels, one simply exchanges their corresponding bits. Finally, it is easy to test if a string is a valid encoding. Additionally, it one can compute using $O(\log n)$ ancillas the axial distance (as defined in Appendix \ref{YoungYamanouchi}) between two labels in the tableau. Following the techniques of Appendix \ref{trace}, the existence of such an encoding implies membership in $\DQC1$, as one can  simulate the action of ball-permuting gates on these basis, and use a $\DQC 1$ procedure to evaluate an additive approximation to $\bra{\lambda}C\ket{\lambda}$ .See Appendix \ref{trace} for details.

\end{proof}

We conjecture that the power of $\QBall$ on all such input states $\ket{\lambda}$ is contained in $\TQP$. However, it is difficult to prove this fact because $\DQC1$ is defined using qubits, while the $\ket{\lambda}$ basis is labeled by the Young-Yamanouchi basis (See Appendix \ref{YoungYamanouchi}). As in Section \ref{sep}, in order to prove $\QBall$ is in $\TQP$ on this input states, we would need to find an encoding of the Young-Yamanouchi basis which is both extremely compressed and local for arbitrary Young diagrams. We mention this is an open problem in Appendix \ref{openproblems}.

Next, we build a connection between the two classes $\Samp\QBall(\lambda)$ and $\Samp\TQP$:

\begin{theorem}
If $\lambda$ is a partition with two equally sized rows, $\Samp\QBall(\lambda) \subseteq \Samp\TQP$.
\end{theorem}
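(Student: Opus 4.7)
The plan is to use the compressed, local qubit encoding from Theorem \ref{tworowQBall} to convert a $\Samp\QBall(\lambda)$ instance into a $\Samp\TQP$ instance, and then handle the residual mismatch between valid and invalid encodings via classical rejection sampling on the outputs. Fix a two-equal-row partition $\lambda\vdash n$ and let $T\subseteq\{0,1\}^n$ be the set of bit strings obtained from the Theorem \ref{tworowQBall} encoding of standard Young tableaux of shape $\lambda$. Then $|T|=d_\lambda$, and $|T|/2^n=\Omega(1/\operatorname*{poly}(n))$ because $d_\lambda$ is the $(n/2)$-th Catalan number; moreover membership in $T$ is decidable in polynomial time (check Hamming weight $n/2$ plus the ballot-sequence condition). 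As noted in Section \ref{intermediate}, $\ket{12\ldots n}$ is maximally entangled across each $V_\lambda\tensor X_\lambda$, so its normalized projection $\ket{\lambda}$ is, up to the canonical isomorphism with the Young-Yamanouchi basis, the maximally entangled state $\frac{1}{\sqrt{d_\lambda}}\sum_t\ket{t}_V\ket{t}_X$. Under the encoding this becomes $\ket{\lambda_{\text{enc}}}=\frac{1}{\sqrt{|T|}}\sum_{x\in T}\ket{x}\ket{x}$, which is precisely what one obtains by projecting the $\Samp\TQP$ input $\ket{\psi}=\frac{1}{2^{n/2}}\sum_x\ket{x}\ket{x}$ onto $T$ on either half.

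Next, I would promote the given $\QBall$ circuit into a qubit circuit $C'$ on the left $n$-qubit register. By the locality statement of Theorem \ref{tworowQBall}, each gate $X(\theta,t)$ is realized by a polynomial-size qubit circuit that, on encoded tableau inputs, computes the axial distance of the two labels into $O(\log n)$ ancillas, applies the Young-Yamanouchi rotation controlled by that ancilla register, and uncomputes. The resulting $C'$ preserves $\operatorname{span}(T)$ because ball-permuting gates preserve both Hamming weight $n/2$ and the subspace of valid encodings; I would extend $C'$ to act as the identity on bit strings that the axial-distance subroutine flags as invalid. Then applying $(C'\tensor I_R)\ket{\psi}$ leaves the valid-encoding component of $\ket{\psi}$ evolving exactly as the original $\QBall$ circuit acts on $\ket{\lambda}$ under the encoding.

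Finally, I would perform classical rejection sampling on $\Samp\TQP$ samples: draw a measurement outcome $(x_L,x_R)$ from $(C'\tensor I)\ket{\psi}$ and keep it only when $x_R\in T$. Because the initial state is symmetric between the two halves and $C'$ is the identity outside $\operatorname{span}(T)$, a surviving sample has $x_R$ uniform in $T$ and $x_L$ distributed as $|\bra{x_L}C'\ket{x_R}|^2$; decoding both halves back to tableaux via the Theorem \ref{tworowQBall} encoding then gives exactly a sample from the Young-Yamanouchi basis of $V_\lambda\tensor X_\lambda$ with the $\Samp\QBall(\lambda)$ distribution. Since rejection succeeds with probability $|T|/2^n=\Omega(1/\operatorname*{poly}(n))$, polynomially many $\Samp\TQP$ samples yield, with high probability, polynomially many $\Samp\QBall(\lambda)$ samples, which the $\Samp\QBall(\lambda)$ decision procedure can then consume.

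The main obstacle is the second step: verifying that the natural circuit simulating $X(\theta,t)$ on the encoded basis can be cleanly extended to a genuine qubit unitary that is block-diagonal with respect to $\operatorname{span}(T)\oplus\operatorname{span}(T)^\perp$. This reduces to structuring the axial-distance subroutine so that, when run on an input that is not a valid tableau encoding, it outputs a flag causing the controlled rotation to default to the identity; the details are analogous to the encoding analysis already carried out in the proof of Theorem \ref{tworowQBall}, so they should be a direct reuse of that machinery rather than a new technical ingredient.
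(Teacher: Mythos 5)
Your proposal is correct and follows essentially the same route as the paper's (very brief) sketch: reuse the compressed, local bit-string encoding of two-equal-row tableaux to simulate the ball-permuting gates on the left half of the maximally entangled $\Samp\TQP$ input, then recover valid samples by postprocessing, which you correctly flesh out as rejection sampling on the $1/\operatorname*{poly}(n)$ fraction of valid encodings. The extra details you supply (the identification of $\ket{\lambda}$ with the maximally entangled state over valid encodings, and the block-diagonal extension of the simulated gates) are faithful elaborations of what the paper leaves implicit.
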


\begin{proof} (Sketch) as in the proof of theorem \ref{tworowQBall}, we use a compressed and local representation of tableaux with two equally sized rows using binary strings to simulate the action of ball permuting gates on $V_\lambda$ in a succinct space, and sample from the output distribution in the computational basis. After postprocessing, with high probability the sampled string corresponds to a valid sample from $\Samp\QBall(\lambda)$.
\end{proof}

$\Samp\TQP$ is a restricted model of computation on qubits, and is an interesting model on its own right. We can immediately observe the following:

\begin{theorem}
$\DQC 1\subseteq \Samp\TQP \subseteq \BQP$.
\end{theorem}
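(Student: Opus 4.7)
My plan is to prove the two containments separately, each by a direct simulation argument; I anticipate no substantive technical obstruction beyond checking that the definitions line up.

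For the upper bound $\Samp\TQP \subseteq \BQP$, I would argue that a $\BQP$ machine can prepare the maximally entangled state $\ket{\psi}$ from $\ket{0}^{\otimes 2n}$ in polynomial time by applying a Hadamard to each of the first $n$ qubits and then a CNOT from qubit $i$ to qubit $n+i$, producing $n$ EPR pairs whose joint state is exactly $\ket{\psi}$. It then applies the prescribed polynomial-size unitary $C$ on the left half and measures all $2n$ qubits in the computational basis, obtaining a sample distributed identically to a $\Samp\TQP$ sample. Since deciding a $\Samp\TQP$ problem requires only polynomially many such samples plus classical post-processing, everything fits inside $\BQP$.

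For the lower bound $\DQC1 \subseteq \Samp\TQP$, I would first invoke standard gap amplification so that any $\DQC1$ decision problem reduces to a majority vote over polynomially many independent runs of a single $\DQC1$ circuit $U$ applied to the input $\ket{0}\bra{0}\otimes(I/2^{n-1})$, with the answer bit read off the first (clean) qubit. I would then simulate each such run inside $\Samp\TQP$ by taking $\ket{\psi}$ on $n-1$ entangled pairs, so that the reduced density matrix on the left half is exactly $I/2^{n-1}$, together with one additional clean ancilla on the left playing the role of the $\DQC1$ clean qubit. Applying $U$ on the ancilla plus the $n-1$ left-half qubits and then measuring yields, on the ancilla, a bit distributed identically to the $\DQC1$ output; polynomially many samples suffice.

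The one minor obstacle I foresee is the status of clean ancillas on the left: under the natural reading of ``standard unitary quantum computation on the left half'' they are admissible for free, but if one insists on a stricter interpretation in which the left register is exactly the left half of $\ket{\psi}$, I would sidestep the issue by instead using $\ket{\psi}$ on $n$ pairs and repurposing one extra pair as a simulated clean qubit. Concretely, I would run $U$ as though the first left qubit were $\ket{0}$ and then accept only those samples in which the corresponding partner qubit on the right is measured as $\ket{0}$. By the deferred-measurement principle, conditional on that event (which has probability $1/2$) the first left qubit was effectively clean throughout the computation, so each accepted sample is a genuine $\DQC1$ sample. The factor-of-two sampling overhead is absorbed into the polynomial reduction, so the containment $\DQC1 \subseteq \Samp\TQP$ holds under either reading of the model.
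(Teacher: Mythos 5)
Your proof is correct and, in its fallback form, is exactly the paper's argument: the paper proves $\DQC1 \subseteq \Samp\TQP$ by running the circuit while optimistically assuming the first left qubit is $\ket{0}$ and then checking at the end (via the measurement of its entangled partner) whether that assumption held, which succeeds with probability $1/2$; the upper bound is likewise treated as immediate. Your primary variant with an explicit clean ancilla is a harmless relaxation of the model, and your handling of the stricter reading coincides with the paper's proof.
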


\begin{proof}
$\Samp\TQP \subseteq \BQP$ is immediate. To see $\DQC 1\subseteq \Samp\TQP$ do a $\DQC 1$ computation, assuming optimistically that the first active bit is in the pure state $\ket{0}$.  Then, at the end, when we measure we will find out whether or not the assumption was correct, and it will have been with probability $1/2$.
\end{proof}

\subsection{The power of $\RQBall$ }
\label{Post}

In this section, we consider the complexity class $\RQBall$, in which the partial swap interactions are constrained by the physics of distinguishable particles. In particular, as discussed briefly in Section \ref{model} and in detail in Appendix \ref{models}, the amplitudes $c$ and $s$ are constrained by the relative velocities of the interacting particles, and only those particles which collide can interact. This physically corresponds to a two-dimensional integrable quantum field theory of n distinguishable particles on a line.

%In Section \ref{Post} we define a modified version $\meas\RQBall$ of this model with intermediate measurements, and prove that under certain assumptions there are problems in this complexity class that are hard for classical computation. 
Before presenting these results, we first note that we do not expect this model to be capable of universal quantum computing, because the dimensionality of the space of unitary operations allowed in this model is too small:

\begin{theorem}
Let $Q_n$ be the Lie group generated by \textit{planar} Yang-Baxter quantum circuits over $n$ labels, then $Q_n$ as a manifold is isomorphic to the union of $n!$ manifolds, each with dimension at most $n$.
\end{theorem}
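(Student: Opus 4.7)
The plan is to show that every unitary in $Q_n$ is parameterized by a pair $(\pi, (v_1,\ldots,v_n)) \in S_n \times \mathbb{R}^n$, where $\pi$ is the resulting permutation of the strands and the $v_i$ play the role of particle velocities. For each fixed $\pi$, the resulting parameterization is a smooth map $\mathbb{R}^n \to U(\C S_n)$ and hence has image of dimension at most $n$; taking the union over the $n!$ choices of $\pi$ gives the theorem.

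The first step is to index planar Yang-Baxter circuits by a permutation. A planar composition of adjacent partial-swap gates on $n$ strands is combinatorially a reduced word for a permutation $\pi \in S_n$, since in a planar diagram no two strands cross twice. By Matsumoto's theorem, any two reduced words for the same $\pi$ are related by braid relations: the commutations $\sigma_i \sigma_j = \sigma_j \sigma_i$ for $|i-j|\geq 2$ (which preserve the unitary since the gates act on disjoint pairs of adjacent registers), and the braid moves $\sigma_i \sigma_{i+1} \sigma_i = \sigma_{i+1} \sigma_i \sigma_{i+1}$. The second move is precisely the parameter-dependent Yang-Baxter equation stated at the end of Section \ref{model}, together with the constraint $z_1'=z_1,\ z_3'=z_3,\ z_2'=z_2=z_1+z_3$. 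Consequently the unitary obtained from a planar Yang-Baxter circuit depends only on $\pi$ and on the assignment of rapidities to crossings.

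The second step, which is the main technical content, is to show that the rapidity data factors through $n$ velocity parameters. Concretely, I would prove by induction on the number of crossings that the rapidity $z_{ij}$ at any crossing between the strands originally labeled $i$ and $j$ must take the form $z_{ij} = v_i - v_j$ for some scalars $v_1,\ldots,v_n \in \mathbb{R}$. For three strands, the additivity $z_2 = z_1 + z_3$ enforced by the Yang-Baxter constraint is exactly this statement. For a general planar circuit on $n$ strands, any three strands can be brought into a local triple crossing by commuting disjoint gates (using the first braid relation), so iteratively applying the $n=3$ case propagates the potential $v_i$ consistently across the entire circuit.

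Combining these, for each fixed $\pi$ the map $\Phi_\pi\colon \mathbb{R}^n \to U(\C S_n)$ sending $(v_1,\ldots,v_n)$ to the corresponding circuit unitary (computed via any reduced word for $\pi$) is smooth, so its image is a smooth manifold (possibly with singularities) of dimension at most $n$. Taking the union over the $n!$ choices of $\pi$ produces the claimed decomposition. The hardest step is the global consistency in step two: one must verify that Yang-Baxter invariance across a single triple of strands, iterated throughout the circuit, forces a single uniform potential $v_1,\ldots,v_n$ rather than only a locally consistent one. Note $\Phi_\pi$ is not injective — a global shift $v_i\mapsto v_i+c$ preserves all differences — so the effective dimension drops to $n-1$, which is consistent with the stated bound.
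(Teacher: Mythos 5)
Your overall strategy -- word-independence of the circuit unitary via the braid/Yang-Baxter relations, followed by a parameterization of each permutation-signature class by $n$ velocities -- is the same as the paper's. Step 1 matches the paper's argument almost exactly (the paper additionally handles non-reduced words, i.e.\ strands that cross twice, using unitarity $R(z)R(-z)=I$, since its circuits are smooth curves rather than straight lines; your appeal to Matsumoto's theorem covers only the reduced case). The genuine problem is your Step 2. The claim that local Yang-Baxter consistency forces the rapidities to come from a global potential $z_{ij}=v_i-v_j$ is false: the Yang-Baxter constraint only relates the three crossings of a mutually crossing triple of strands, and there are permutation signatures whose crossing (inversion) graphs contain no such triples. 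For $\pi=3412$ the crossing graph is a $4$-cycle, and for $\pi=(m{+}1,\ldots,2m,1,\ldots,m)$ it is $K_{m,m}$, which is triangle-free with $m^2=n^2/4$ edges; in these cases no Yang-Baxter relation is ever invoked, each crossing carries an independent rapidity, and the resulting family is \emph{not} contained in the image of the potential map (nor is it $n$-dimensional). So "iteratively applying the $n=3$ case" cannot propagate a consistent potential -- there is nothing to propagate through.

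The resolution is that the theorem is not about circuits with arbitrary per-crossing rapidities: in the paper's definition of a Yang-Baxter quantum circuit, one assigns a real velocity $v_j$ to each line \emph{up front}, and the rapidity of every gate is by definition the difference of the velocities of the two lines meeting there. With that definition your Step 2 is vacuous, and the proof reduces to your Step 1 plus the observation that, for fixed velocities, all circuits with the same permutation signature yield the same unitary -- which is exactly how the paper argues. If you intend the stronger statement (arbitrary rapidities subject only to local Yang-Baxter consistency), the theorem as stated does not hold, so you should make explicit that the velocity assignment is part of the model rather than something to be derived.
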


\begin{proof}(Sketch) This follows by noticing that if the operators satisfy the Yang-Baxter equation, then the final unitary operator only depends on $n$ real parameters, namely the initial momenta of the particles.
\end{proof}

In other words, if the interactions of the ball-permuting model satisfy the Yang-Baxter equation, then the group of unitary operators generated on $n$ labels corresponds to a Lie group of linear dimension. In contrast the standard definition of $\BQP$ allows unitaries of dimension exponential in the number of qubits. This makes it unlikely that one could do universal quantum computation in this model; however, it does not imply efficient classical simulation.

To the contrary, in this section we establish some hardness results for classical simulation of the class $\RQBall$ following the strategy of Aaronson and Arkhipov \cite{aaronson2011computational}, Bremner-Shephard-Josza ~\cite{JozsaShepherd} and others.
In particular, we show that a classical computer cannot sample from the same distribution as these circuits unless the polynomial hierarchy collapses to the third level. 
Our argument makes use of two technical concepts.
The first is intermediate demolition measurements. A \emph{demolition measurement} is a measurement of the label at a particular location, where after the label is measured, it is removed from the computation (``demolished"), i.e. the Hilbert space becomes $\C S_{n-1}$ rather than $\C S_n$.
The second is the concept of postselection (discussed in greater detail in Appendix \ref{Postselection}). 
Here postselection is the (non-physical) ability to specify the outcome of a measurement in advance. 
We show that postselected $\RQBall$ circuits with intermediate demolition measurements are capable of universal postselected quantum computation.
By the techniques of \cite{aaronson2011computational,JozsaShepherd}, this immediately implies that a classical computer cannot efficiently sample from the same probability distribution unless the polynomial hierarchy collapses.  
The basic reason is that postselected quantum computation (denoted by the complexity class $\Post\BQP$) is more powerful that postselected classical computation (denoted by the class $\Post\BPP$) assuming $\PH$ does not collapse.
Therefore if a classical computer could sample from the same distribution, then by postselecting the simulation of the quantum device, that would solve a $\Post\BQP$ problem in $\Post\BPP$, which causes $\PH$ to collapse.

More formally, we consider the following complexity class:
\begin{definition}
$\Post \RQBall$ is the class of decision problems that are solvable in polynomial-time using postselected Yang-Baxter ball collision circuits with intermediate demolition measurements, where in the input state is a special inital state described in Appendix \ref{PostProgramming}. That is, $\Post\RQBall$ is the set of languages $L$ such that there exists a postselected quantum $\RQBall$ circuit, and a poly-time-computable set of permutions $P$, such that if $\sigma$ is the permutation measured at the end of the computation, and $C$ is the event all intermediate measurments yield their postselected values, then

\begin{itemize} 
\item[] 1) $\operatorname*{Pr} [C]>0$. 

\item[] 2) if $x\in L$, $\operatorname*{Pr} [\sigma \in \tilde {P} | C]\geq 2/3$. 

\item[] 3) if $x\notin L$, $\operatorname*{Pr} [\sigma \in \tilde {c} | C]\leq 1/3$.
\end{itemize}

Likewise define $\Meas\RQBall$ to be the class of sampling problems that are solvable by polynomial time uniform Yang-Baxter ball collision circuits on arbitrary initial states and intermediate demolition measurements.
\label{PostScatdef}
\end{definition}

The central observation is the following theorem, stating that although the ball scattering model might be strictly weaker than $\BQP$, the postselected version is equal to $\Post\BQP$, which is equal to $\PP$ \cite{aaronson2005quantum}.

\begin{theorem}
$\Post \RQBall=\Post \BQP= \PP$.
\label{PPeqPostRQBall}
\end{theorem}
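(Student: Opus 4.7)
The plan is to decompose the statement into three pieces: the equality $\Post\BQP=\PP$, the upper bound $\Post\RQBall\subseteq\Post\BQP$, and the lower bound $\Post\BQP\subseteq\Post\RQBall$. The middle equality is Aaronson's theorem \cite{aaronson2005quantum} and I would invoke it as a black box. The upper bound follows from the containment $\RQBall\subseteq\QBall\subseteq\BQP$ already shown in the paper: each ball-permuting gate is a unitary on a Hilbert space of dimension $n^2-n$ so can be simulated on $O(\log n)$ qubits by Solovay--Kitaev, and intermediate demolition measurements can be deferred in the standard way (replacing them with CNOTs to fresh ancillas followed by a terminal measurement), which commutes with postselection. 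So the work is entirely in the lower bound.

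For $\Post\BQP\subseteq\Post\RQBall$, my approach is encoded universality under postselection, in the spirit of the $\QBall=\BQP$ construction of the previous section but respecting the velocity constraints. First I would fix the "special initial state" of Appendix \ref{PostProgramming} to be a configuration of balls with prescribed velocities arranged so that the resulting collision graph (i.e.\ the pattern of space-time intersections in Figure \ref{fig22_v0}) realizes a sequence of pairwise interactions rich enough to encode a universal circuit. Logical qubits will be encoded in a dual-rail fashion into the relative order of designated pairs of balls, analogous to the three-ball encoding used in the proof that $\QBall=\BQP$ on arbitrary initial states. Because the rapidity of each collision is fixed once the velocities are chosen, I cannot freely dial the partial-swap angle $\theta$ for a given collision; instead, I would use gate teleportation, importing unitaries onto the logical qubits by preparing ancillary balls in appropriate entangled states, colliding them with the data balls at a forced rapidity, and then postselecting a specific outcome of an intermediate demolition measurement. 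This lets me "inject" any gate I need (including Hadamards and a non-Clifford $T$-like gate), producing a universal set.

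The main obstacle I foresee is the combinatorial choreography: every interaction is a constrained resource, since both \emph{when} a pair of balls meets and \emph{with what angle} they interact is determined by the initial velocities, and one cannot repeat a collision because each pair of trajectories crosses at most once. I would handle this by allocating a polynomial supply of ancillary balls whose velocities are slightly perturbed so that they collide with a designated data ball at prescribed moments in the computation, using intermediate demolition measurements (with postselected outcomes) to discard ancillas after each injected gate so the remaining collision graph still has the desired structure. A secondary obstacle is showing that arbitrary rapidities $z$ can be realized (or approximated) within this scheme, which can be handled either by concatenating several fixed-rapidity collisions with postselected measurement-based corrections, or by appealing to the fact that $\Post\BQP$ is robust under small gate errors so that a dense but finite set of achievable rapidities suffices.

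Finally, assembling the pieces: any $\Post\BQP$ computation is first expressed as a quantum circuit over a small universal gate set, then compiled into the encoded model by laying out the velocity diagram and the schedule of intermediate demolition measurements with their postselected values, and wrapped in a final measurement whose accepting set is the decoded version of the original $\Post\BQP$ accepting set. Combined with the upper bound and Aaronson's theorem, this chains as $\PP=\Post\BQP\subseteq\Post\RQBall\subseteq\Post\BQP=\PP$, proving Theorem \ref{PPeqPostRQBall}.
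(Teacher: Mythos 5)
Your proposal is correct and follows essentially the same route as the paper: the same three-way decomposition, with the lower bound obtained by injecting arbitrary partial-swap angles via ancilla balls that collide with the data balls and are then removed by postselected intermediate demolition measurements (the paper's Appendix gadgets, which realize $\tan^{-1}z_{\mathrm{eff}}=\tan^{-1}z_1+\tan^{-1}z_2$ by tuning the free ancilla velocities), and then lifting the encoded $\QBall=\BQP$ universality to the postselected setting. The only cosmetic difference is that the paper dials the injected angle exactly by choosing ancilla velocities rather than approximating rapidities or using entangled ancilla states, so your fallback robustness argument is not needed.
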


\begin{proof} (sketch). In Appendix \ref{PostProgramming} we show how to use intermediate demolition measurements in $\RQBall$ circuits to simulate standard $\QBall$ circuits, assuming one can postselect the outcomes of the intermediate demolition measurements. Since $\QBall=\BQP$ this implies $\BQP \subseteq \Post\RQBall$ in an encoded sense. By postselecting this encoded $\BQP$ simulation we can simulate arbitrary $\Post\BQP$ computations in $\Post\RQBall$ as well.
\end{proof}

From this, using the techniques of \cite{aaronson2011computational,JozsaShepherd}, we immediately have the following corollary:

\begin{corollary}
There is no polynomial time randomized algorithm to simulate $\Meas\RQBall$ to within multiplicative constant error, unless $\PH$ collapses to its third level.
\end{corollary}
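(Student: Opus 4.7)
The plan is to follow the standard postselection-based simulation argument of Bremner--Jozsa--Shepherd and Aaronson--Arkhipov, using Theorem \ref{PPeqPostRQBall} as the quantum ``engine.'' Concretely, I would argue by contradiction: assume there is a polynomial-time randomized algorithm $\mathcal{A}$ that, on input a description of a $\Meas\RQBall$ circuit (including its sequence of intermediate demolition measurements), samples from a distribution $D'$ satisfying $(1-\varepsilon) D(\sigma) \le D'(\sigma) \le (1+\varepsilon) D(\sigma)$ for every outcome $\sigma$, where $D$ is the true output distribution and $\varepsilon$ is a small constant.

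The key step is then to feed $\mathcal{A}$ into a postselected classical computation. Multiplicative closeness is preserved under conditioning: if $E$ is any event with $D(E) > 0$, then $D'(\cdot\mid E)$ is within multiplicative factor $\tfrac{1-\varepsilon}{1+\varepsilon}$ of $D(\cdot\mid E)$. Hence, given any $\Post\RQBall$ instance in the sense of Definition \ref{PostScatdef}, I can run $\mathcal{A}$ on the underlying $\RQBall$ circuit, postselect on the event $C$ that the randomly sampled intermediate-measurement outcomes equal the prescribed postselection pattern, and then decide membership in the language by checking whether the resulting output permutation lies in $\tilde P$. The multiplicative gap preserves the $2/3$ vs. $1/3$ acceptance gap (after standard amplification), so this places $\Post\RQBall \subseteq \Post\BPP$. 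Combining with Theorem \ref{PPeqPostRQBall} gives $\PP = \Post\RQBall \subseteq \Post\BPP$.

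To finish, I invoke Aaronson's observation that $\Post\BPP \subseteq \mathsf{BPP}^{\NP}$, which sits inside the third level of $\PH$, together with Toda's theorem $\PH \subseteq \P^{\PP}$. Chaining these,
\[
\PH \;\subseteq\; \P^{\PP} \;\subseteq\; \P^{\Post\BPP} \;\subseteq\; \P^{\mathsf{BPP}^{\NP}} \;\subseteq\; \Sigma_3^{\P},
\]
so $\PH$ collapses to its third level, contradicting our hypothesis. The main conceptual content, then, is simply that multiplicative-error sampling is strong enough to be postselectable; the quantum hardness is entirely outsourced to Theorem \ref{PPeqPostRQBall}.

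The only subtlety I anticipate, and the step I would write out most carefully, is the reduction of the ``sampling problem'' version of $\Meas\RQBall$ (as the corollary quantifies over the simulator, not over a particular decision problem) to the ``decision problem'' version of $\Post\RQBall$ in a way that does not blow up the postselection probability arbitrarily. This is handled exactly as in Aaronson--Arkhipov: one fixes, for each $\Post\BQP$-machine produced by Theorem \ref{PPeqPostRQBall}, the corresponding $\RQBall$ sampling circuit, applies $\mathcal{A}$ to it once, and uses Markov's inequality to argue that with non-negligible probability over $\mathcal{A}$'s random coins the conditional distribution on the postselected event is still a valid $\Post\BPP$-type acceptor. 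I would cite \cite{aaronson2011computational,JozsaShepherd} for the precise form of this reduction rather than reprove it here.
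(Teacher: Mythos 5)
Your proposal is correct and follows essentially the same route as the paper: postselect the hypothesized multiplicative-error classical sampler to conclude $\Post\RQBall=\PP\subseteq\Post\BPP$ via Theorem \ref{PPeqPostRQBall}, then collapse $\PH$ using Toda's theorem and the containment of $\Post\BPP$ in the third level. The ``subtlety'' you flag about multiplicative closeness surviving conditioning is exactly the calculation the paper carries out in its appendix (the bound degrades from $\alpha$ to $\alpha^2$ on conditional probabilities, which still preserves the $2/3$ vs.\ $1/3$ gap for $\alpha$ close enough to $1$), so nothing is missing.
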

\begin{proof} (sketch). If a $\BPP$ machine could sample from the same distribution as a $\Meas\RQBall$ circuit, then by postselecting the simulation, one could simulate $\Post\RQBall=\PP$ in $\Post\BPP$, and hence $\Post\BPP=\PP$. But by \cite{aaronson2011computational,JozsaShepherd} this would imply the collapse of $\PH$, because then $\PH\subseteq \P^\PP = \P^{\Post\BPP} \subseteq \Delta_3$. Here the first containment is Toda's theorem \cite{toda1991pp} and the third follows from the fact  $\Post\BPP \subseteq  \Sigma^3_\P$ \cite{arora2009computational}.
\end{proof}

Therefore, despite the fact that the dimensionality of the space of unitaries in $\RQBall$ is small, $\RQBall$ computations cannot be efficiently simulated classically (up to multiplicative error) unless the polynomial hierarchy. Hence the $\RQBall$ model seems to be intermediate in power between $\BPP$ and $\BQP$.

\section{Computational complexity of the classical ball permuting model}
\label{classical}

To understand which features of the quantum ball-permuting  model actually depend on quantum mechanics, in this section, we define a classical model of ball permuting and analyze its computational power.  In this model input is a list of swaps in $S_n$, and the output is a permutation of $n$ labels. We consider three models - deterministic, randomized and nondeterministic for the application of swaps.

Before introducing our classical ball permuting models, we will first review some classical complexity theory.
Recall that $\L$ is the class of decision problems that are solvable on a log-space Turing machine. $\BPL$ is the class of decision problems that are solvable with bounded probability of error on a probabilistic logarithmic-space Turing machine, while $\Almost \L$ is defined as the class of languages that are decidable with bounded probability of error by a logarithmic-space Turing machine with access to a random oracle with probability $1$. Interestingly, $\Almost\L$ contains $\BPL$, but the converse is not known. The issue is that an $\Almost\L$ machine can use the same specific polynomially-long random string over and over, which is not an ability that a $\BPL$ machine obviously has. Indeed, $\Almost\L$ is not even known to be contained in $\P$, though it is contained in $\BPP$.

A reversible Turing machine is a Turing machine such that for any of its computations, the nodes of its infinite configuration graph have in-degree and out-degree of at most $1$. Reversible log-space or $\Rev\L$ is the class of decision problems that are solvable by reversible log-space Turing machines. Lange, McKenzie, and Tapp \cite{lange2000reversible} proved the following important result:
\begin{theorem}
 $\L=\Rev\L$.
\label{L=RevL}
\end{theorem}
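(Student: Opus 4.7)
The containment $\Rev\L \subseteq \L$ is immediate, since a reversible log-space machine is a log-space machine in particular. For the nontrivial direction $\L \subseteq \Rev\L$, the plan is to simulate an arbitrary deterministic log-space Turing machine $M$ on input $x$ by a reversible log-space machine $M'$ that performs a depth-first Euler tour of (part of) the configuration graph of $M$ and reads off acceptance from the halting configuration it encounters.

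The key structural observation is that the configuration graph $G_M$ of a halting deterministic log-space machine is a functional graph: every non-halting configuration has out-degree exactly one (its unique $\delta$-successor), and since $M$ halts, $G_M$ is acyclic. Reversing the edge orientations therefore turns $G_M$ into a forest whose roots are the halting configurations, and the initial configuration $c_0$ lies in a unique tree $T$ whose root is the halting configuration $c_{\mathrm{halt}}(x)$. Each configuration is representable in $O(\log |x|)$ bits, so $T$ has $\Poly(|x|)$ nodes, but $M'$ will never store $T$ explicitly.

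The machine $M'$ performs an Euler tour of $T$ starting at $c_0$. This is well-defined and reversible because each node $v$ has an $O(1)$-size, log-space-enumerable neighborhood: the unique ``parent'' is $\delta(v)$, and the ``children'' are the configurations $u$ with $\delta(u) = v$. Since one step of $M$ alters only the state, a head position, and at most one work-tape cell, there are only $O(1)$ candidate predecessors of $v$, each obtainable by guessing the previous local data and verifying with a single forward application of $\delta$. Fixing a canonical cyclic order on neighbors, the Euler-tour transition ``from $v$, having just arrived from $u$, move to the next neighbor of $v$ after $u$'' is an injection computable in log space, which is exactly what reversibility on a log-space machine requires.

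The remaining obstacle is garbage-freeness: a reversible computation must end with all work tapes restored to their initial contents except for the output. This is handled by a Bennett-style compute--copy--uncompute: $M'$ runs the Euler tour until it reaches the root $c_{\mathrm{halt}}(x)$, reversibly copies its single accept/reject bit onto a fresh output tape, and then continues (equivalently, reverses) the tour back to $c_0$, restoring every other tape. The genuinely delicate step, which I expect to be the main obstacle, is implementing each primitive local move --- enumerating predecessors, locating $u$ in the cyclic order, and testing candidate predecessors against $v$ --- as a reversible log-space subroutine that leaves no auxiliary scratch marks; with the reversible gadgetry of \cite{lange2000reversible} in hand, this yields $\L \subseteq \Rev\L$, completing the proof.
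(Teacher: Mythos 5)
The paper does not actually prove this theorem; it cites it directly as a result of Lange, McKenzie, and Tapp \cite{lange2000reversible}, so there is no ``paper's own proof'' to compare against, and your sketch is (as you presumably intend) a reconstruction of the Lange--McKenzie--Tapp argument itself. The central idea is right: a single step of a deterministic log-space machine is not invertible because a configuration may have several predecessors, so instead one performs a reversible Euler tour of the predecessor tree rooted at the halting configuration reached from $c_0$, using the facts that each configuration has $O(1)$ log-space-enumerable predecessors and that ``advance to the next neighbor in cyclic order'' is a log-space-computable bijection on (configuration, incoming edge) pairs. Two caveats. First, the blanket claim ``since $M$ halts, $G_M$ is acyclic'' is false as stated: configurations unreachable from $c_0$ may lie on cycles, so reversing edges does not in general give a forest; you implicitly correct this by restricting attention to the tree $T$ containing $c_0$, which does consist only of configurations whose forward orbit terminates at $c_{\mathrm{halt}}(x)$, but the global acyclicity claim should be dropped or qualified. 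Second, and more seriously for a self-contained argument, you discharge the implementation of the reversible Euler-tour primitives by appealing to ``the reversible gadgetry of \cite{lange2000reversible}'' --- the very paper whose theorem you are proving --- so as written the step you yourself identify as the main obstacle is handled by circular citation rather than by argument. As an account of the proof's architecture this is accurate; as a proof it is incomplete at exactly the place you flag.
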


We now define the classical ball-permuting model.

\begin{definition}

\begin{itemize}
\item[]
\item $\DBall$ is the class of languages that are decidable by an $\AC^0$ machine that can make a single query to a``deterministic ball-permuting oracle''. Such an oracle takes as input a polynomially-long list of swaps $(i_1,j_1),(i_2,j_2),\ldots$ and returns the permutation obtained by applying those swaps in order to the initial permutation $(1,2,\ldots,n)$.

Also, $\DBall_{adj}$ is the subclass of $DBall$ where all swaps are adjacent elements only.

\item $\RBall$ is the class of languages that are decidable, with bounded probability of error, by an $AC^0$ machine with a single query to a ``randomized ball-permuting oracle''.  Such an oracle takes as input a polynomially-long list of swaps $(i_1,j_1),(i_2,j_2),\ldots$ along with a list of probabilities $p_1,p_2,\ldots$. It returns the permutation obtained by swapping $(i_1,j_1)$ with independent probability $p_1$, then swapping $(i_2,j_2)$ with independent probability $p_2$, and so on, starting from the initial permutation $(1,2,\ldots,n)$.

$\RBall_{adj}$ is the subclass where the swaps are restricted to be adjacent ones only, and $\RBall^\star_{adj}$ is the subclass where all probabilities are nonzero and also not equal to one.
\end{itemize}

\end {definition}

We find that deterministic ball permuting captures $\L$, while randomized ball permuting defines a class which is intermediate between $\BPL$ and $\Almost\L$. We include sketches of these proofs, but details can be found in Appendix \ref{ballpermutingoracles}.

\begin{theorem}
$\DBall=\DBall_{adj}=\L=\Rev\L$ 
\end{theorem}

\begin{proof}(Sketch) For $\DBall_{adj} = \DBall$, just observe that we can easily implement any swap we want as a sequence of adjacent swaps. For $\L \subseteq\DBall$, we show how to simulate reversible log-space with swaps and then appeal to theorem \ref{L=RevL}. For $\DBall\subseteq\L$, we use $\L$ subroutines to keep track of each label one-by-one.
\end{proof}

\begin{theorem}
$\L \subseteq \BPL \subseteq \RBall=\RBall_{adj}\subseteq \Almost \L\subseteq \BPP$. However, if we let $\RBall (2)$ be the class where the $\AC^0$ machine is allowed to make two adaptive queries to the ball-permuting oracle, then $\RBall(2)=\Almost\L$.
\end{theorem}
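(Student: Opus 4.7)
The plan is to establish the chain of inclusions piece by piece and then prove the two-query equality $\RBall(2) = \Almost\L$. The outer inclusions $\L \subseteq \BPL$ and $\Almost\L \subseteq \BPP$ are textbook. For $\BPL \subseteq \RBall$, I would adapt the earlier simulation of $\L$ by $\DBall$ (which routes through $\L = \Rev\L$) to the probabilistic setting: a $\BPL$ computation can be put in reversible form in which each step is either a deterministic reversible transition or the reading of a fresh random bit, and these two kinds of step translate, respectively, into probability-$1$ and probability-$1/2$ swaps in a single ball-permuting query. The $\AC^0$ machine constructs the list of swaps with associated probabilities from the input and then reads off the accept bit from the oracle's output permutation.

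For $\RBall = \RBall_{adj}$, only the inclusion $\RBall \subseteq \RBall_{adj}$ is nontrivial. A probabilistic swap of positions $(i, j)$ with probability $p$ can be simulated using adjacent swaps only by the following sandwich: first apply the deterministic chain $(i, i{+}1), (i{+}1, i{+}2), \ldots, (j{-}1, j)$, which moves the label originally at position $i$ out to position $j$ while cyclically shifting the others one step to the left; then apply a single probabilistic adjacent swap $(j{-}1, j)$ with probability $p$; finally undo the deterministic chain by applying the reverse sequence. A direct case analysis shows that on the ``no-swap'' coin branch everything returns to the initial state, while on the ``swap'' branch exactly the labels originally at positions $i$ and $j$ are exchanged, using $O(|j - i|)$ adjacent swaps per non-adjacent swap, so the total blowup is polynomial.

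For $\RBall \subseteq \Almost\L$, the $\Almost\L$ machine uses its random oracle as a source of i.i.d.\ uniform bits to simulate each probabilistic swap: if the probability $p_k$ is specified to polynomial precision, reading a polynomial prefix of oracle bits and comparing with the binary expansion of $p_k$ yields a coin of the desired bias up to exponentially small error. Once all coin outcomes are fixed, the list of actually applied swaps defines a $\DBall$ instance, which by the earlier theorem $\DBall = \L$ can be evaluated in log-space, and the $\AC^0$ post-processing of the oracle answer is absorbed into the log-space machine. Essentially the same argument, with additional $\AC^0$ bookkeeping between the two oracle queries, also gives $\RBall(2) \subseteq \Almost\L$.

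The reverse direction $\Almost\L \subseteq \RBall(2)$ is where the second query is needed. An $\Almost\L$ machine $M$ touches at most polynomially many random oracle bits, so the simulation devotes its first ball-permuting query to producing such a random string: apply fair probability-$1/2$ adjacent swaps on disjoint pairs of dedicated ball labels so that, for each such pair, whether the smaller-indexed label ended up at the smaller position encodes one independent uniform bit. The $\AC^0$ stage between queries reads these bits in parallel and concatenates them with the original input $x$ to form the argument to the reduction witnessing $\L \subseteq \DBall$; the resulting swap list is sent as the second query, which executes the deterministic log-space computation of $M$ on $(x, r)$. A final $\AC^0$ pass reads the accept bit from the second output permutation. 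The main obstacle will be verifying that this intermediate $\AC^0$ stage can both decode the random bits and emit the second query's swap list in constant depth; this should follow from the uniformity of the $\L \subseteq \DBall$ reduction, provided we use an encoding in which each random bit is locally readable from a single position of the first output permutation.
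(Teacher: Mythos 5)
Your proposal is correct and takes essentially the same route as the paper's proof: the same sandwich argument for $\RBall = \RBall_{adj}$, the same reversible-simulation-with-interleaved-probability-$\tfrac{1}{2}$-swaps argument for $\BPL \subseteq \RBall$, the same use of a random oracle to provide \emph{consistent} re-readable random bits for $\RBall \subseteq \Almost\L$, and the same two-query construction (one query to generate a shared random string, a second to run the $\L \subseteq \DBall$ reduction on it) for $\RBall(2) = \Almost\L$. The only cosmetic deviation is that your adjacent-swap sandwich runs the deterministic chain one step further than the paper's version, through $(j{-}1,j)$, so the probabilistic $(j{-}1,j)$ is preceded and followed by a redundant deterministic $(j{-}1,j)$; the gadget still implements a probability-$p$ swap of positions $i$ and $j$ in $O(|j-i|)$ adjacent swaps, so this is harmless.
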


\begin{proof}(Sketch) $\RBall=\RBall_{adj}$ is proved similarly to the previous theorem. Also, for $\RBall\subseteq\Almost\L$: we just run the $\DBall\subseteq\L$ simulation, except that any time we need to apply a probabilistic swap, we use the random oracle to decide whether to make it or not.

Finally, for $\BPL\subseteq \RBall$: we run the $\L \subseteq\DBall$ simulation, except
that whenever the $\L$ machine wants a random bit in its memory, we use
random swaps on every adjacent pair of elements $((1,2), (3,4), \ldots)$
to produce one.
\end{proof}

Next, in order to understand the power of nondeterministic ball-permutation, we investigate the problem of deciding whether or not the output of an $\RBall$ computation has positive support on a specific permutation:

\begin{definition}
($\Ball$) Given a polynomially-long list of swaps $(i_1, j_1), \ldots, (i_m,j_m)$, and a list of indepedent probabilities $p_1, \ldots, p_m$ as inputs of a randomized ball-permuting oracle, decide if there is a positive probability that the oracle outputs a specific traget permutation $\sigma \in S_n$. $\Ball_{adj}$ is the restriction of the language when all the input swaps are adjacent ones, and $\Ball^\star_{adj}$ is the language when we further restrict the probabilities to be strictly between $0$ and $1$.
\end{definition}

First we prove that the problem is $\NP$-complete for general swaps. We prove this by a reduction from ``word problem for the product of permutations ($\WPPP$)'' which is known to be $\NP$-complete. More precisely:

\begin{definition}
($\WPPP$) Given the set $\{1,2,3, \ldots, n\}=: [n]$, an ordered list of subsets $S_1, S_2, \ldots, S_m \subseteq [n]$ with $m=\Poly(n)$, and a target permutation $\tau$ on $[n]$, the problem is to decide whether there exist permutations $\pi_1, \pi_2, \ldots, \pi_m$, with each $\pi_j$ acting on the labels of $S_j$ only and identity on the others, such that the combination $\pi_1\circ \pi_2\circ \ldots\circ \pi_m=\tau$.
\end{definition}

\begin{theorem}
(Garey, Johnson, Miller, and Papadimitriou \cite{garey1980complexity}) $\WPPP$ is $\NP$-complete.
\label{GJMP}
\end{theorem}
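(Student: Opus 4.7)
The plan is to establish $\WPPP$ is $\NP$-complete by separately proving membership in $\NP$ and $\NP$-hardness. For membership, a natural certificate is the list of permutations $(\pi_1, \ldots, \pi_m)$, each described by its action on $S_j$ using $O(|S_j| \log |S_j|) = O(n \log n)$ bits. The total certificate length is polynomial, and a deterministic verifier composes them in order (which takes polynomial time) and checks equality with $\tau$. So the nontrivial direction is $\NP$-hardness.

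For $\NP$-hardness, the natural approach is to reduce from 3-SAT. Given a 3-CNF formula $\phi$ with variables $x_1, \ldots, x_N$ and clauses $C_1, \ldots, C_M$, I would construct $[n]$ to contain dedicated ``variable'' elements, ``clause'' elements, and auxiliary ``ancilla'' elements, with $n$ polynomial in $N$ and $M$. Each variable $x_i$ gets a two-element subset $\{a_i, b_i\}$ on which the free permutation choice (identity or transposition) encodes the truth value. Each clause $C_j$ would correspond to several overlapping subsets touching both the relevant variables' elements and dedicated ancilla elements; these clause gadgets would be designed so that the total product scrambles the ancillas back into a prescribed target configuration iff at least one literal of $C_j$ is satisfied by the variable choices. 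The target permutation $\tau$ would be the canonical product associated with any satisfying assignment, computable in polynomial time from $\phi$.

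The main obstacle is designing the clause gadgets so that no ``cheating'' is possible: since the permutation on each $S_j$ is chosen freely, one must ensure the overlap structure genuinely couples the variable choices to the clause constraints and forbids inconsistent assignments across different clauses. A standard device for this is to use ancillas that are moved by a fixed nontrivial permutation whenever a clause gadget ``fires'', combined with variable-broadcasting subsets that create consistent copies of each variable's value in multiple locations, enforced by the noncommutativity of sufficiently rich subgroups. Producing gadgets for which the achievable products equal $\tau$ if and only if they correspond to satisfying assignments is the core combinatorial content of the reduction, and this is precisely the construction executed by Garey, Johnson, Miller, and Papadimitriou \cite{garey1980complexity}, which the proof plan would invoke after verifying their gadgets fit into the $\WPPP$ framework as defined here.
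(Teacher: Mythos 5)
The paper does not actually prove this theorem; it is stated as a known result with a citation to Garey, Johnson, Miller, and Papadimitriou, and the text immediately proceeds to use it as a black box. So there is no in-paper proof to compare your proposal against. Your treatment of $\NP$ membership is fine and standard: the list $(\pi_1,\ldots,\pi_m)$ is a polynomial-size certificate and composition plus equality checking is polynomial-time verification.

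For $\NP$-hardness, your proposal is not a proof so much as a description of the \emph{shape} a reduction from 3-SAT would take. You correctly identify the central difficulty — designing overlapping clause and variable-broadcasting gadgets so that the freedom to choose each $\pi_j$ arbitrarily within $S_j$ cannot be exploited to ``cheat'' past an unsatisfiable instance — but you do not construct those gadgets; you explicitly defer that step to the GJMP paper. That makes the hardness half of the argument circular as a freestanding proof: one cannot establish the theorem by saying the hard part ``is precisely the construction executed by Garey, Johnson, Miller, and Papadimitriou.'' As a reading note, though, this is consistent with how the paper itself handles the statement — it treats the result as imported, not proved. If you want a self-contained argument, you would need to exhibit concrete gadgets and prove both directions of the iff (satisfying assignment $\Leftrightarrow$ product equals $\tau$), which is genuinely the whole content of the theorem; nothing short of that closes the gap. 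You might also verify which problem GJMP actually reduce from, since the reduction in their paper is not necessarily from 3-SAT as you assumed, and the statement of $\WPPP$ used here (ordered list of subsets, left-to-right composition, arbitrary permutations on each $S_j$) should be matched exactly against their formulation before invoking the result.
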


\begin {theorem}
$\Ball_{adj}$ and $\Ball$ are polynomial-time reducible to each other, and furthermore they are complete for the class $\NP$. 
\end{theorem}

\begin{proof}(Sketch) To see polynomial-time equivalence between the two languages, we use adjacent swaps with probability $1$ to simulate nonadjacent ones. To prove completeness for $\NP$ we use theorem \ref{GJMP}.
\end{proof}

On the other hand, we find that the non-deterministic class $\Ball^*_{adj}$ is contained in $\P$. Our proof uses a reduction to planar case of the edge disjoint path problem.

\begin{definition}
(The edge disjoint path problem $\EDP$) Given a directed graph $G$, with source and sink nodes $(s_1, t_1), (s_2, t_2),\ldots,(s_m, t_m),$, decide if there are paths from $s_i$ to $t_i$ for all $i \in [m]$, such that all the paths are edge-disjoint.
\end{definition}

\begin{theorem}
(Wagner and Weihe\cite{wagner1995linear}) $\EDP$ for the case of planar graphs is decidable in linear time.
\label{Wagner}
\end{theorem}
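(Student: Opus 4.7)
The plan is to exploit the planar embedding of $G$ so that both feasibility testing and path construction run in time $O(|V|+|E|)$. As a first step, I would fix a planar embedding and, if the instance does not already have its terminals on a common face, apply the standard linear-time planar reductions (face-splitting and vertex-duplication at terminals) that move all $s_i, t_i$ to the boundary of the outer face. For the application to $\Ball^\star_{\mathrm{adj}}$, the structure of adjacent-swap permutation problems makes this reduction essentially free, since the natural graph already has terminals arranged around a single face.

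Second, I would check feasibility. Once the terminals sit on the outer face in some cyclic order, $\EDP$ admits a solution if and only if a combinatorial cut condition of Okamura--Seymour type holds: for every cut separating the outer-face boundary into two arcs, the number of source-sink pairs bridging the cut must not exceed the cut's edge capacity. This can be verified by one circular sweep of the outer face while maintaining a running count of ``demand minus capacity'' across each partial cut, yielding a linear-time feasibility test.

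Third, assuming feasibility, I would construct the paths greedily by a leftmost-routing rule. Process the pairs in an order compatible with the cyclic boundary, and for each $(s_i,t_i)$ route the path as a ``leftmost walk'' in the residual embedding --- at each vertex taking the sharpest available left turn in the rotation system --- until $t_i$ is reached. Then delete the used edges, leaving a planar residue, and continue with the next pair. Each leftmost walk is implemented in time proportional to the edges it traverses, and using constant-degree amortization over deletions, the whole procedure runs in $O(|V|+|E|)$.

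The main obstacle, and the technical heart of the Wagner--Weihe argument, is proving that the leftmost rule preserves feasibility: committing to the leftmost path for the current pair must not destroy edge-disjoint routings for the remaining pairs. This is established by a topological exchange argument in the planar embedding --- given any hypothetical feasible completion, one can ``push'' alternative routes leftward across faces without creating crossings or violating the cut condition, showing that the leftmost choice is never strictly worse than the hypothesized routing. Combining this invariant with the linear-time feasibility check and the amortized linear-time path extraction yields the claimed linear-time decision procedure.
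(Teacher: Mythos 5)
This statement is cited from Wagner--Weihe \cite{wagner1995linear}; the paper offers no proof of its own, so your sketch can only be judged against the actual published result. Judged that way, there is a genuine gap, and it is in your very first step. You cannot ``apply standard linear-time planar reductions'' to move arbitrary terminals onto a common face: the edge-disjoint paths problem on planar graphs with terminals placed arbitrarily is $\NP$-complete, so no such reduction (linear-time or otherwise) can exist unless $\P=\NP$. The Wagner--Weihe theorem is proved only for undirected planar instances in which \emph{all} terminals already lie on the boundary of the outer face \emph{and} the evenness (Eulerian) condition holds, i.e.\ every vertex has even degree in $G$ plus the demand graph. You omit the evenness hypothesis entirely, and it is not cosmetic: by Okamura--Seymour it is exactly what makes your cut condition sufficient, and without it the problem remains $\NP$-complete even with all terminals on one face (Middendorf--Pfeiffer). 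So steps one and two of your plan, as written, would ``decide'' instances that are provably intractable. (The theorem as stated in the paper is itself loosely worded; in the application to $\Ball^\star_{adj}$ the constructed graph does have all sources and sinks on the outer face and satisfies evenness, which is why the citation is legitimate there.)

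Within the correct scope, your outline of the algorithmic core --- a single boundary sweep certifying the cut condition, followed by greedy leftmost (``right-first'') routing justified by a topological exchange argument --- does match the spirit of Wagner--Weihe. But it remains a plan rather than a proof: the entire technical content of their paper is showing that the exchange invariant holds and that the search can be charged to edges so that the total work is $O(|V|+|E|)$ rather than quadratic (naively, each rerouting after an edge deletion can retraverse large parts of the graph). If you want to present this argument, restate the theorem with the one-face and evenness hypotheses and either carry out the amortization or cite it explicitly.
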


\begin{theorem}
$\Ball^\star_{adj}\in\P$
\end{theorem}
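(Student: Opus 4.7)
The plan is to polynomially reduce $\Ball^\star_{adj}$ to the planar edge-disjoint paths problem, which by Theorem \ref{Wagner} is decidable in linear time. The key observation is that, since every probability $p_j$ lies strictly in $(0,1)$, each subset $S \subseteq [m]$ of ``applied'' swaps is realized with strictly positive probability. Thus the target permutation $\tau$ occurs with positive probability if and only if there exists some $S \subseteq [m]$ such that applying only the swaps indexed by $S$, in the given order, transforms the identity into $\tau$. So the task reduces to deciding this existence question.

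I would encode it as a planar $\EDP$ instance on a layered ``world-line'' graph $G$. Place vertices $(j,k)$ on a time-position grid for $j = 0, \ldots, m$ and $k = 1, \ldots, n$. For each row $k \notin \{i_j, i_j+1\}$, insert the horizontal edge $(j-1,k) \to (j,k)$. For the two active rows at swap $j$, set $a=(j-1,i_j)$, $b=(j-1,i_j+1)$, $c=(j,i_j)$, $d=(j,i_j+1)$, and introduce an auxiliary vertex $x_j$ placed between rows $i_j$ and $i_j+1$, with directed edges $a\to c$, $b\to d$, $a\to x_j$, $x_j\to c$, $b\to x_j$, and $x_j\to d$. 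The $\EDP$ source-sink pairs are $((0,k),(m,\tau(k)))$ for $k = 1, \ldots, n$.

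Next I would verify planarity and correctness. The gadget embeds planarly in the unit cell between columns $j-1,j$ and rows $i_j,i_j+1$, because the four ``diagonal'' edges meet only at $x_j$ and the two direct edges lie along the row boundaries; distinct gadgets occupy disjoint strips, so $G$ is planar with all sources on the leftmost column and all sinks on the rightmost column, hence all terminals on the outer face, as required by Theorem \ref{Wagner}. For correctness, a per-gadget case analysis shows that any edge-disjoint routing realizes one of two behaviors on the two balls at rows $i_j, i_j+1$: either both balls end at their original rows (several routings are possible through or around $x_j$, all of which leave the local permutation as the identity), or the two balls swap rows, in which case the only edge-disjoint realization is $a \to x_j \to d$ together with $b \to x_j \to c$ (since no direct $a\to d$ or $b\to c$ edge exists). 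Defining $S$ to be the set of gadgets in the swap configuration then yields a valid subset of swaps producing $\tau$; conversely, any $S$ induces an obvious edge-disjoint routing by taking the direct edges for $j \notin S$ and the unique crossing route through $x_j$ for $j \in S$.

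The main obstacle is getting the gadget right: it must be planar and must admit exactly the two possible local permutations (identity or swap) as edge-disjoint routings. The asymmetry between the two direct edges ($a\to c,\ b\to d$) and the absence of the ``crossing direct'' edges ($a\to d,\ b\to c$) is precisely what forces any swap to be routed through $x_j$ in the unique crossing way and prevents any spurious routing from reaching permutations not realized by some swap subset. Given this gadget, the resulting $\EDP$ instance has size $O(nm)$, is planar with terminals on the outer face, and is therefore solvable in polynomial time by Theorem \ref{Wagner}, placing $\Ball^\star_{adj}$ in $\P$.
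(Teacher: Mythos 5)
Your overall strategy --- reduce the question ``does some subset of the swaps produce $\tau$?'' to planar edge-disjoint paths with all terminals on the outer face, then invoke Theorem \ref{Wagner} --- is exactly the paper's, and your opening reduction (positive probability iff some subset $S$ of applied swaps yields $\tau$, since all $p_j\in(0,1)$) is correct. The gap is in the soundness direction of your gadget. You claim that any edge-disjoint routing through a gadget realizes only the identity or the swap on the two active rows, but this dichotomy is false: the routing that sends one path along $a\to x_j\to d$ and the other along $b\to d$ uses three pairwise distinct edges, so it is edge-disjoint, yet it places \emph{both} paths at $d$ (row $i_j+1$) and leaves $c$ (row $i_j$) empty. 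Symmetrically, $a\to c$ together with $b\to x_j\to c$ merges both paths at $c$. Such a merge can propagate through subsequent gadgets before the two paths separate again, so an edge-disjoint routing in your graph need not correspond to any subset of swaps, and you have not shown that a merged routing can never realize an unachievable target permutation. (The root cause is that the cut through your gadget has capacity $n+2$ rather than $n$, so the edges need not all be saturated and no local conservation argument is available.)

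The paper avoids this by using a much leaner gadget: each swap is a \emph{single} vertex of in-degree $2$ and out-degree $2$, obtained by merging the two current edges for rows $i_j$ and $i_j+1$ and emitting two fresh edges. Then every cut between consecutive layers has exactly $n$ edges, so in any system of $n$ edge-disjoint source--sink paths every edge is used by exactly one path; in particular exactly two paths enter each gadget vertex and they must exit on the two distinct out-edges, so the only local behaviors are ``cross'' and ``don't cross.'' That restores precisely the case analysis your argument needs. To repair your proof you would either have to switch to such a degree-$(2,2)$ gadget (i.e., delete the direct edges $a\to c$ and $b\to d$ and route everything through $x_j$, which collapses to the paper's construction), or supply a separate argument that merged routings never certify an unachievable permutation --- which you have not done and which is not obvious.
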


\begin{proof}(Sketch) we use a linear time reduction to $\EDP$. Given a list of swaps we construct a planar directed graph for which an edge disjoint path exists if and only if the target permutation is generated in the non-deterministic swaps.
\end{proof}

In other words, given a list of swaps and a target permutation, it is in general $\NP$-hard to decide if there is a way of constructing the target permutation out of the given swaps. If all the swaps are transpositions, the problem is decidable in linear time. Observe that our result about the language $\Ball$ gives an independent proof for:

\begin{corollary}
The edge disjoint path problem in the non-planar case is $\NP$-complete.
\end{corollary}

\section{Discussion and Conclusions}
We have explored several models of ball-permuting computation. We found the power of $\QBall$ depends on the input states allowed in the model, and we partially classified which input states allow $\QBall$ to perform universal quantum computation using the representation theory of $S_n$. Additionally, we found that even the highly restricted model of $\RQBall$ cannot be efficiently simulated classically unless $\PH$ collapses. This is surprising as the model $\RQBall$ admits unitary transformations of only polynomial dimension. Classically, we found that $\RBall$ is of power intermediate between $\BPL$ and $\Almost\L$.

A natural open problem is to finish classifying the power of $\QBall$ on all input states. While we nearly completed this task, the tools we used for the classification are restricted in the sense that they can only take care of special subspaces and not the others.
Another natural open problem is to determine if $\QBall$ with the starting state $\ket{12\ldots n}$ is contained in $\DQC1$ as a decision class.
A list of further open problems can be found in Appendix \ref{openproblems}.

\section{Acknowledgements}
We thank Robin Kothari for suggesting connections to the class $\DQC 1$ and Luke Schaeffer for helpful discussions regarding succinct encodings of permutations and tableaux. We thank an anonymous referee for helpful comments about integrable models. S.M. thanks Hossein Esfandiary for suggesting connections to factorial number system and Usman Naseer \& Hamed Pakatchi for useful comments on the physics.

\begin{appendices}

%%%%%%%%%%%%%%%%%%%%%%%%%%%%%%%%%%%%%%%%%%%%%%%%%%%%%%%%%%%%%
\section{Models of the two dimensional integrable quantum theory}
\label{models}

In this appendix, we review the basic structure of the solution to the two dimensional integrable model and demonstrate how this model is captured by the quantum ball-permuting model.

We imagine that in the far past, a number of free particles are initialized on a line, moving towards each other, and in the far future, an experimenter measures the asymptotic wave-function that results from scattering. Throughout this work we assume that all particles are distinguishable.

The basic model we focus on is given by the repulsive delta interactions model ~\cite{yang1967some} of quantum mechanics. In this model, elastic hard particles with known momenta are scattered off each other. The conserved quantities of this model are $\sum_j p^{2 k+1}_j$, and $\sum_j p_j^2/m_j$, for $k\geq 0$, where the $m_j$'s are the masses of the particles. Therefore, particles of different mass don't interact, and after the interaction the set of output momenta equals the set of input momenta.

%%%%%%%%%%%%%%%%%%%%%%  Fig.22  %%%%%%%%%%%%%%%%%%%%%%%
\begin{figure}[tp]
\begin{center}
\includegraphics[height=4.0in]{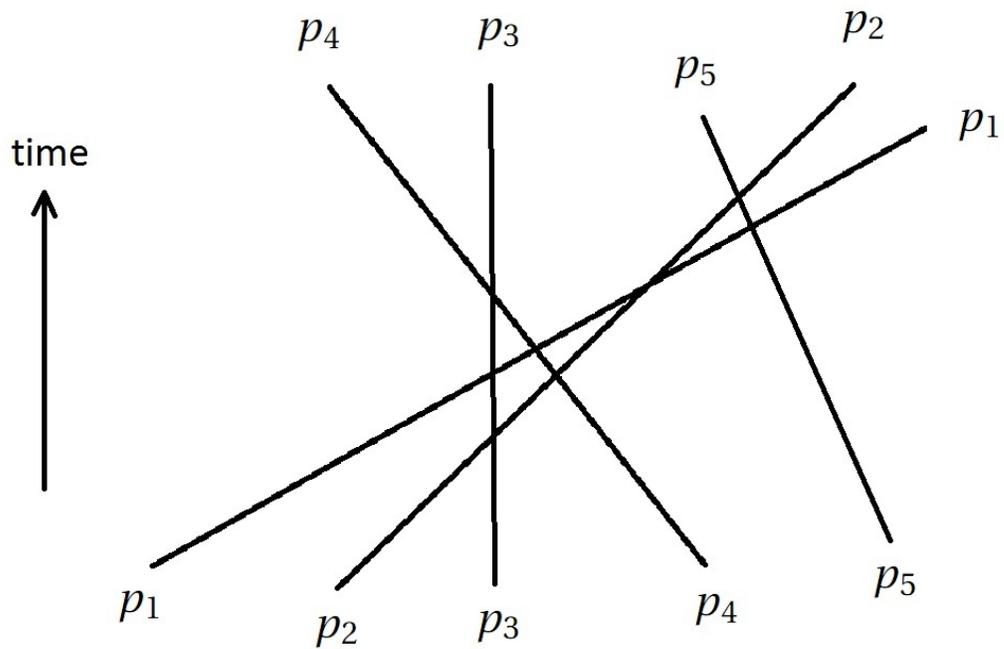}
\caption
{Five particles with momenta $p_1, p_2, \ldots, p_5$ moving on a line. At each intersection between these lines there is an interaction modeled by a ball-permuting gate $R(\cdot, \cdot)$. Each line is assigned to a fixed momentum throughout and the rapidity parameter of the gates is equal to the difference between the momenta incident on each intersection.}
\label{fig22}
\end{center}
\end{figure}
%%%%%%%%%%%%%%%%%%%%%%%%%%%%%%%%%%%%%%%%%%%%%%%%%%%%%%%%

%%%%%%%%%%%%%%%%%%%%%%  Fig.23 %%%%%%%%%%%%%%%%%%%%%%%
\begin{figure}[tp]
\centering
\includegraphics[height=3.5in]{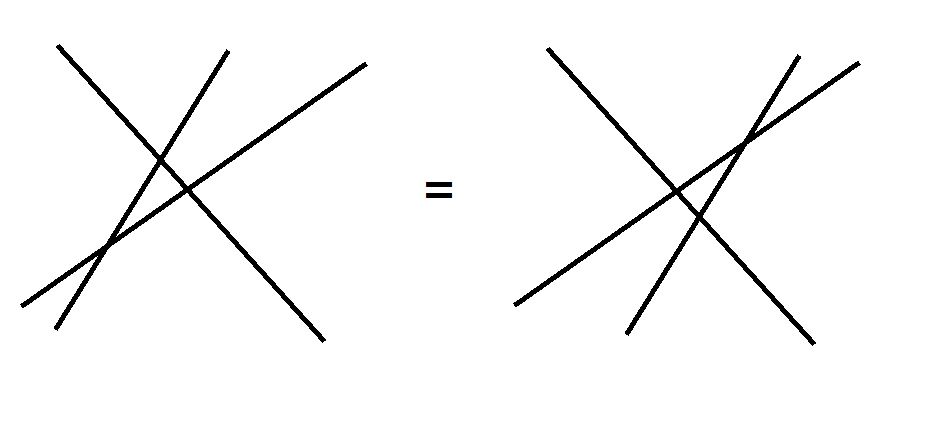}
\caption{Pictorial demonstration of the Yang-Baxter equation. Both interactions give the same unitary operator as quantum circuits, if the input momenta are equal.
}
\label{fig23}
\end{figure}
%%%%%%%%%%%%%%%%%%%%%%%%%%%%%%%%%%%%%%%%%%%%%%%%%%%%%%%%

The particles are asymptotically free, meaning that they move freely and do not interact until reaching to the short range of contact. Denote the position of these particles by $x_1, x_2, \ldots, x_n$ and the range of interaction as $r_0$, then for the asymptotically free regime, we assume $|x_i-x_j|\gg r_0$. The interaction consists of at most $\dfrac{n(n-1)}{2}$ terms, one for each pair of particles. For each pair of particles, the interaction is modeled by the delta function of the relative distance between them. If no particles are in contact, then we just have a free Hamiltonian, and each contact is penalized by a delta function. The functional form of the Schr\"odinger's equation is written as:

$$
i\hbar\dfrac{\partial}{\partial t} \psi(x_1, x_2, \ldots, x_n, t) = \Big [\sum_{j=1}^n-\dfrac{\partial ^2}{\partial x_j^2}+2 c\sum_{i<j}\delta(x_i-x_j)\Big ]\psi(x_1, x_2, \ldots, x_n, t)
$$

\noindent Here $c>0$ is the strength of the interactions. As species of unequal mass do not interact, only particles of same mass are considered. The Hilbert space is indeed $(\mathbb{C}^\infty)^{\otimes n}$. Using the Bethe ansatz ~\cite{staudacher2012review} for spin chain models, a solution for the eigenfunction with the following form is considered:

$$
\psi(x_1,\ldots,x_n)=\sum_{\tau,\pi \in S_n} \mathcal{A}^\tau_\pi \theta_\tau (x_1, \ldots, x_n) \exp{[i(x_{\tau_1} p_{\pi_1}+\ldots+x_{\tau_n} p_{\pi_n})]}
$$

\noindent $\theta_\pi : \mathbb{R}^n \rightarrow \{0,1\}$ is an indicator function which is set to $1$ whenever its input $x$ satisfies $x_{\pi(1)}<x_{\pi(2)}<\ldots <x_{\pi(n)}$, and otherwise zero. $p_j$ for $j\in [n]$ are constant parameters, and can be viewed as the momenta. The proposed solution must be a continuous function of the positions and also one can impose a boundary condition for $x_{\pi(t)}=x_{\pi(t+1)}$ for $t\in [n]$ on the derivative of the wave-function. Applying these boundary conditions, one can get linear relations between the amplitudes:

$$
\mathcal{A}^{t \circ \tau}_{\pi}=\dfrac{-ic \mathcal{A}^\tau_\pi + V_{\tau, t} \mathcal{A}^\tau_{t \circ \pi}}{ic + V_{\tau, t}}
$$

\noindent Here $t \circ \pi$ is a new permutation that results from the swapping of the $t$ and $t+1$'th labels in the permutation $\pi$, while $V_{\tau, t}=p_{\tau(t)}-p_{\tau(t+1)}$. The above linear map has a simple interpretation: two particles with relative velocity $V$ collide with each other and reflect with amplitude $\dfrac{-ic}{ic+V}$, or otherwise, with amplitude $\dfrac{V}{ic+V}$ they tunnel through without any interaction. In any case, the higher momentum passes through the lower momentum and starting from a configuration $x_1 < x_2 < \ldots <x_n$ for $n$ particles with any momenta, the wave-function will end up in a configuration with momenta in the increasing order. 

Each of these pairwise scatterings can be viewed as a local quantum gate, and the collection of scatterings can be viewed as a quantum circuit. To see this, consider an $n!$ dimensional Hilbert space for $n$ particles with orthonormal basis $\{ |\sigma \rangle : \sigma \in S_n \}$. Assume an initial state of $|1,2,3,\ldots, n\rangle$, with momenta $p_1, p_2, \ldots, p_n$. These momenta and the initial distances between the particles determine in what order the particles will collide. It is instructive to view the trajectory of the particles as $n$ straight lines for each of these particles in an $x/t$ plane. Time goes upwards and the intersection between any two lines is a collision. In each collision, either the labels of the two colliding particles are swapped or they are left unchanged.  Particles with zero relative velocity do not interact, as lines with equal slope do not intersect. Suppose that the first collision corresponds to the intersection of line $t$ with line $t+1$. So in particlular, $p_t>p_{t+1}$. Then the initial state is mapped to:

$$
\Big |1,2, \ldots, n\Big \rangle \rightarrow \dfrac{-ic}{ic + V_{t,t+1}}\Big |1,2, \ldots t, t+1, \ldots, n\Big \rangle + \dfrac{V_{t,t+1}}{ic + V_{t,t+1}}  \Big |1,2, \ldots t+1, t, \ldots, n\Big \rangle,
$$

\noindent where, $V_t= p_t - p_{t+1}$. This map can be viewed as a $n!\times n!$ unitary matrix:

$$
R(p_t- p_{t+1}, t):=R(p_t, p_{t+1},t):=\dfrac{-ic}{ic + V_{t, t+1}} I + \dfrac {V_{t,t+1}}{ic + V_{t, t+1}} L_{(t, t+1)}.
$$

\noindent Here $I$ is the $n! \times n!$ identity matrix, and $L_{(t,t+1)}$ is the $n!\times n!$ matrix which transposes the $t$ and the $t+1$'th labels of the basis states. $R(u,t )$ acts only on the $t$ and the $t+1$'th particles only. $u$ is the velocity of the $t$'th particle relative to the $t+1$'th particles. These are exactly the unitary ball-permuting gates.

Given $n$ particles, with labels $i_1, i_2, \ldots, i_n$, and momenta $p_1, p_2, \ldots, p_n$, we can obtain a quantum circuit with gates $R(u_1,t_1), R(u_2,t_2),\ldots, R(u_m,t_m)$, one for each intersection of the straight lines. The scattering matrix in this theory is then given by the product $S=R(u_m,t_m), R(u_{m-1},t_{m-1}),\ldots, R(u_1,t_1)$. 

An important ingredient of these quantum gates is the so-called Yang-Baxter equation ~\cite{yang1967some, baxter1972partition}, which is essentially the factorization condition, and is the analogue of the associativity of Zamalodchikov algebra. The Yang-Baxter equation is the following three particle condition:

$$
R(u,t) R(u+v, t+1) R(v,t)= R(v, t+1) R(u+v, t) R(u, t+1).
$$

Basically, the Yang-Baxter equation asserts that the continuous degrees of freedom like the initial positions of the particles do not change the outcome of a quantum process, and all that matters is the particles' relative configurations. See Figure \ref{fig23}. For the line representation of the Yang-Baxter equation, see Figure \ref{fig22}. Consider three balls with labels $1, 2, 3$, initialized with velocities $+u$, $0$ and $-u$, respectively. If we place the middle ball very close to the left one, the order of collisions would be $(1,2)\rightarrow (2,3) \rightarrow (1,2)$. However, if the middle one is placed very close to the third ball, the order would be $(2,3)\rightarrow (1,2) \rightarrow (2,3)$. The Yang-Baxter equation asserts that the output of the collisions is the same for the two cases. Therefore, the only important parameters are the relative configurations, and the relationships between the initial velocities.

\subsection{Semi-classical Model}

It is not conventional to do a measurement at the middle of a scattering process. However, in the discussed integrable models, particle interactions occur independently from each other, and the scattering matrix is a product of smaller scattering matrices. Moreover, in the regime that we consider, no particle creation or annihilation occurs. Therefore, it is reasonable to assume a semi-classical model, where the balls move according to actual trajectories, and it is possible to track and measure them in between and stop the process whenever we want between collisions. According to this assumption quantum effects occur only at the collisions and measurements.

%%%%%%%%%%%%%%%%%%%%%%%%%%%%%%%%%%%%%%%%%%%%%%%%%%%%%%%%%%
\section{Complexity Classes with Postselection}
\label{Postselection}

Here we define complexity classes with postselection. Intuitively, these are the complexity classes with efficient verifiers with free retries. That is an algorithm which runs on the input, and in the end will tell you whether the computation has been successful or not. The probability of successful computation can be exponentially small.

\begin{definition}
Fix an alphabet $\Sigma$. $\Post\BQP$ ($\Post \BPP$) is the class of languages $L \subset \Sigma^\star$ for which there is a polynomial time quantum (randomized) algorithm $\mathcal{A}: \Sigma^\star \rightarrow \{0,1\}^2$, which takes a string $x\in \Sigma^\star$ as an input and outputs two bits, $\mathcal{A}(x)= (y_1, y_2)$, such that:

\begin{itemize}

\item[] 1) $\forall x \in \Sigma^\star, \operatorname*{Pr}(y_1 (x)=1)> \dfrac{1}{\exp(n^{O(1)})}$.

\item[] 2) If $x \in L$ then $\operatorname*{Pr}(y_2 (x)=1| y_1(x)=1)\geq \dfrac{2}{3}$

\item[] 3) If $x \notin L$ then $\operatorname*{Pr}(y_2 (x)=1| y_1(x)=1)\leq \dfrac{1}{3}$
\end{itemize}
\end{definition}

\noindent Here $y_1$ is the bit which tells you if the computation has been successful or not, and $y_2$ is the actual answer bit. The conditions $2)$ and $3)$ say that the answer bit $y_2$ is reliable only if $y_1=1$. In this work, we are interested in the class $\Post\BQP$. However, $\Post \BPP$ is interesting on its own right, and is equal to the class $\BPP_{path}$, which a modified definition of $\BPP$, where the computation paths do not need to have identical lengths. $\Post \BPP$ is believed to be stronger than $\BPP$, and is contained in $\BPP^{\NP}$, which is the class of problems that are decidable on a $\BPP$ machine with oracle access to $\NP$ or equivalently $\SAT$.

Due to a result by Aaronson, $\Post \BQP$ is equal to the complexity class $\PP$:

\begin{theorem}
(Aaronson \cite{aaronson2005quantum}) $\Post\BQP=\PP$.
\end{theorem}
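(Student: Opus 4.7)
The plan is to prove the two inclusions $\Post\BQP \subseteq \PP$ and $\PP \subseteq \Post\BQP$ separately.

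\emph{The easy direction} $\Post\BQP \subseteq \PP$: I would use the standard $\mathsf{GapP}$ representation of quantum circuit amplitudes. Given a polynomial-size postselected $\BQP$ circuit with a designated ``flag'' qubit and ``answer'' qubit, let $p_f$ be the probability that the flag reads $1$, and let $p_{af}$ be the probability that both flag and answer read $1$. Expanding in the computational basis, both quantities are sums of products of matrix entries of the gates, and up to a fixed factor of $2^{-\operatorname{poly}(n)}$ they equal $\mathsf{GapP}$ functions of the circuit description. The postselected acceptance criterion $p_{af}/p_f \geq 2/3$ is equivalent to $3p_{af} - 2p_f \geq 0$; since the left-hand side is itself a $\mathsf{GapP}$ function, its sign can be decided in $\PP$.

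\emph{The hard direction} $\PP \subseteq \Post\BQP$: reduce to deciding, given a polynomial-time predicate $f : \{0,1\}^n \to \{0,1\}$, whether $s := |\{x : f(x) = 1\}| \geq 2^{n-1}$. Starting from the uniform superposition and computing $f$ into an ancilla yields $\tfrac{1}{\sqrt{2^n}} \sum_x |x\rangle |f(x)\rangle$. I would then build Aaronson's postselection gadget: introduce a few extra qubits and apply controlled rotations so that, conditioned on a specific measurement outcome of the auxiliary registers, the remaining single output qubit lies in a state of the form $\propto \alpha s |0\rangle + \beta(2^n - s) |1\rangle$ for any real parameters $\alpha, \beta$ we pre-choose. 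A Hadamard followed by a standard-basis measurement on this qubit then effectively compares $\alpha s$ with $\beta(2^n - s)$. Running two such circuits with carefully chosen $(\alpha,\beta)$ and $(\alpha',\beta')$ suffices to test $s \geq 2^{n-1}$ with bounded error, and $\Post\BQP$ is easily seen to be closed under combining a constant number of postselected subroutines.

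The main obstacle is designing the gadget in the hard direction. The crucial creative step is arranging the construction so that after postselection the branch amplitudes are \emph{linear} in $s$, not quadratic; this is what lets a single Hadamard-and-measure act as a threshold check on $s$ rather than on $s^2$. A second delicate point is verifying that the postselection event has strictly positive probability for every value of $s$ that can arise, so that the postselected computation is well-defined according to the definition of $\Post\BQP$ (which only allows the postselection probability to be exponentially small, not zero). Once the gadget is in place, the outer reduction from a generic $\PP$ language to the threshold test for $s$ is routine.
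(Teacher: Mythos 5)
Your easy direction is fine: expressing the conditional acceptance probability as a ratio of $\mathsf{GapP}$ functions and deciding the sign of $3p_{af}-2p_f$ in $\PP$ is exactly the standard argument. For the hard direction you have correctly isolated the one genuinely important idea of Aaronson's proof --- that postselection can produce a qubit whose amplitudes are \emph{linear} in $s=|f^{-1}(1)|$ --- but the threshold test you build on top of it does not work as described, for two reasons. First, in your target state $\propto \alpha s|0\rangle+\beta(2^n-s)|1\rangle$ both amplitudes are strictly positive for all $0<s<2^n$, and a Hadamard followed by a computational-basis measurement cannot decide with bounded error which of two nearly equal positive amplitudes is larger: the outcome probabilities depend continuously on the ratio of the amplitudes and become exactly $1/2$ at $s=2^{n-1}$, so the error is unbounded near the threshold. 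Aaronson's gadget instead interferes $|\psi\rangle\propto(2^n-s)|0\rangle+s|1\rangle$ with $H|\psi\rangle$ and postselects so that the surviving qubit is $\propto \alpha s|0\rangle+\beta\frac{2^n-2s}{\sqrt{2}}|1\rangle$; the second amplitude \emph{changes sign} precisely at $s=2^{n-1}$, and the test is whether the qubit is close to $|+\rangle$ (achievable only when both amplitudes are positive and of comparable magnitude) versus having overlap at most $1/2$ with $|+\rangle$ (forced whenever the second amplitude is $\le 0$).

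Second, two choices of $(\alpha,\beta)$ do not suffice. In the ``yes'' case ($0<s<2^{n-1}$) the ratio $s/(2^n-2s)$ can lie anywhere in an exponentially long range, and the near-$|+\rangle$ signature appears only when $\beta/\alpha$ roughly cancels this ratio; one must therefore sweep $\beta/\alpha$ over the $O(n)$ geometrically spaced values $2^{-n},\dots,2^{n}$ and accept iff at least one of them produces $|+\rangle$ with frequency close to the guaranteed constant, while in the ``no'' case every choice yields $|+\rangle$ with probability at most $1/2$. Finally, your worry about the postselection event having zero probability is real but is dispatched by the standard normalization (assume without loss of generality that $s>0$, e.g.\ by OR-ing a dummy accepting input into $f$); you flag it without resolving it. With the sign-flip gadget and the $O(n)$-fold sweep restored, the rest of your outline goes through.
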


\noindent Firstly, because of $\P^{\PP}=\P^{\#\P}$ as a corollary to the theorem, with oracle access to $\Post\BQP$, $\P$ can solve intricate counting tasks, like counting the number of solutions to an $\NP$-complete problem. The implication of this result for the current work is that if a quantum model, combined with postselection is able to efficiently sample from the output distribution of a $\Post\BQP$ computation, then the existence of a  randomized scheme for approximating the output distribution of the model within constant multiplicative factor is ruled out unless $\PH$ collapses to the third level. This point is going to be examined in Section \ref{phcol}.

\section{The Classical Yang-Baxter Equation}
\label{classicalyangbaxter}
The classical ball permuting model can be alternatively viewed as a stochastic matrices. Consider the set of probability distributions on permutations $\mathbb{V}_n:=\{V\in \mathbb{R^{+}}^{n!}, \sum_j V_j=1\}$. Each entry of $V$ corresponds to a permutation, and its content is the probability that the permutation appears in the process. Denote the basis $\{|\sigma\rangle: \sigma \in S_n\}$ for this vector space. The basis $|\sigma\rangle$ has probability support $1$ on $\sigma$ and $0$ elsewhere. Now a permutation can be viewed as a doubly stochastic $n! \times n!$ matrix with the map $L(\pi)|\sigma\rangle= |\pi\circ \sigma\rangle$. Denote the swap matrix corresponding to $(i,j)$ by $L_{i,j}:=L(i,j)$. A probabilistic swap $(i,j)$ with probability $p$ is therefore given by the convex combination:

$$
R_{i,j} (p)=(1-p) I+p L_{i,j}
$$

One can then formalize the Yang-Baxter equation for three labels (balls). Yang-Baxter (YB) equation is a restriction on the swap probabilities in such a way that the swaps of order $(1,2), (2,3) , (1,2)$ give the same probability distribution as the swaps $(2,3), (1,2), (2,3)$:

\begin{equation}
R_{1} (p_1) R_{2} (p_2)  R_{1} (p_3) =  R_{2} (p'_1) R_{1} (p'_2)  R_{2} (p'_3)
\label{YB}
\end{equation}

We want to solve this equation for $1\geq p_1, p_2, \ldots, p'_6 \geq 0$. If we rewrite $R_{i,j} (p)$ with the parameter $x\in [0,\infty)$:

$$
R_{i,j} (x)=\dfrac{1+x L_{i,j}}{1+x},
$$

\noindent then: 

\begin{theorem}
The following is a solution to equation ~\ref{YB}:

$$
p_1=\frac{x}{1+x} \hspace{1cm} p_2=\dfrac{x+y}{1+x+y} \hspace{1cm}p_1=\dfrac{y}{1+y} \hspace{1cm}
$$
$$
p'_1=\dfrac{y}{1+y} \hspace{1cm} p'_2=\dfrac{x+y}{1+x+y} \hspace{1cm} p'_3=\frac{x}{1+x} \hspace{1cm}
$$

\end{theorem}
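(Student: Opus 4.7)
The plan is to verify the identity by direct algebraic expansion in the group algebra $\mathbb{R}[S_3]$. First I would rewrite the Yang-Baxter equation using the substitution $R_i(x) = \frac{1 + x L_i}{1+x}$, which reduces the claim to
\[
(1 + x L_1)(1 + (x{+}y) L_2)(1 + y L_1) \;=\; (1 + y L_2)(1 + (x{+}y) L_1)(1 + x L_2),
\]
after multiplying through by the common denominator $(1+x)(1+x+y)(1+y)$, which appears on both sides of the rewritten YB identity in identical form. One should also remark that for $x,y \in [0,\infty)$ each of the resulting probabilities $x/(1+x)$, $(x+y)/(1+x+y)$, $y/(1+y)$ lies in $[0,1)$, so the solution is admissible as a probability distribution.

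Next I would fully expand both sides in the basis $\{e,\, L_1,\, L_2,\, L_1 L_2,\, L_2 L_1,\, L_1 L_2 L_1\}$ of $\mathbb{R}[S_3]$, using the two relations $L_1^2 = L_2^2 = e$ and the braid relation $L_1 L_2 L_1 = L_2 L_1 L_2$. Expanding the left-hand side term by term produces the coefficients $1+xy$, $x+y$, $x+y$, $x(x+y)$, $y(x+y)$, $xy(x+y)$ on $e, L_1, L_2, L_1 L_2, L_2 L_1, L_1 L_2 L_1$, respectively. Expanding the right-hand side and applying the same simplifications (in particular, using $L_2 L_1 L_2 = L_1 L_2 L_1$ to fold the degree-three term into the correct basis vector) yields the very same list of coefficients. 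Matching them term by term closes the argument.

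The proof is essentially a bookkeeping calculation, so the only real obstacle is correctly tracking the coefficients through the non-commutative expansion; no conceptually new idea beyond using the defining relations of $S_3$ is required. It is worth noting that this particular parameterization makes the rapidity structure transparent: the middle rapidity is the sum $x+y$ of the outer rapidities, which is the rational analogue of the additive-rapidity form of the quantum Yang-Baxter equation used elsewhere in the paper, and gives additional intuition for why the equality must hold.
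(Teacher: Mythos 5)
Your proof is correct and follows essentially the same route as the paper: both expand the products in $\mathbb{R}[S_3]$ over the basis $\{e, L_1, L_2, L_1L_2, L_2L_1, L_1L_2L_1\}$ using $L_i^2=e$ and the braid relation, and your stated coefficients $1+xy,\ x+y,\ x+y,\ x(x{+}y),\ y(x{+}y),\ xy(x{+}y)$ match the paper's expansion exactly. The only cosmetic difference is that the paper expands one side and invokes the symmetry $L_1\leftrightarrow L_2$, $x\leftrightarrow y$ to obtain the other, whereas you expand both sides explicitly.
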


\begin{proof}
First we need to show that these are indeed a solution to YB equation. Expanding the equation $R_1 (x) R_2 (x+y) R_1 (y)$
we get:

\begin{equation}
\dfrac{1+xy+(x+y)(L_1+L_2)+(x+y)(x L_1 L_2 + y L_2 L_1) + xy (x+y) L_1 L_2 L_1}{(1+x)(1+x+y)(1+y)}
\label{expand}
\end{equation}

Exchanging $L_1$ with $L_2$ and $x$ with $y$ in the left hand side of equation ~\ref{expand} gives the right hand side. Using the identity $L_1 L_2 L_1 = L_2 L_1 L_2$, however equation ~\ref{expand} is invariant under these exchanges.
\end{proof}

%%%%%%%%%%%%%%%%%%%%%%%%%%%%%%%%%%%%%%%%%%%%%%%%%%%%%%%%%%%%
\subsection{The Quantum Yang-Baxter Equation}
\label{YangBaxter}

We can discuss the restriction on the angles of the $X$ operators in such a way that they respect Yang-Baxter equation (YBE) of three particles. Therefore, this restricted version can capture the scattering matrix formalism of particles on a line. A solution to the YBE in the scattering models is based on the amplitudes which depend only on the initial state and momenta of the particles. The aim of this section is to justify that the only non-trivial solution to the YBE on the Hilbert space of permutations is the one in which the amplitudes are selected according to the velocity parameters.

Given a vector space $\mathbb{V}^{\otimes n}$, let $H_{ij}$ for $i<j \in [n]$ be a family of two-local operators in $GL(\mathbb{V}^{\otimes n})$ such that each $H_{ij}$ only affects the $ij$ slot of the tensor product, and acts trivially on the rest of the space. Then, $H$ is said to satisfy the parameter independent YBE if they are constant and:

$$
R_{ij} R_{jk} R_{ij}= R_{jk} R_{ij} R_{jk}
$$

Sometimes, we refer to the following as the YBE:

$$
(R\otimes I) (I\otimes R) (R \otimes I)=(I\otimes R) (R \otimes I) (I\otimes R)
$$

Both sides of the equation act on the space $\mathbb{V}\otimes \mathbb{V}\otimes \mathbb{V}$, and $H\otimes I$ acts effectively on the first two slots, and trivially on the other one. Similarly, we can define a parameter dependent version of the YBE, wherein the operator $R:\C \rightarrow GL(\mathbb{V}\otimes \mathbb{V})$ depends on a scalar parameter, and $H$ is said to be a solution to the parameter dependent YBE is according to:

$$
(R(z_1)\otimes I) (I\otimes R(z_2)) (R(z_3)\otimes I)=(I\otimes R(z'_1)) (R(z'_2)\otimes I) (I\otimes R(z'_3))
$$

for some $z_1, z_2, \ldots, z'_3$. We are interested in a solution of parameter dependent YBE with $X(\cdot, \cdot )$ operators. For simplicity of notations, in this part, we use the following operator:

$$
R(z , k):=\dfrac{1}{\sqrt{1+z^2}}+\dfrac{i z}{\sqrt{1+z^2}} L_{(k, k+1)} = X(\tan^{-1} (z), k)
$$

instead of the $X$ operators. The following theorem specifies a solution to the parameter dependent YBE:

\begin{theorem}
Constraint to $z_1 z_2 \ldots z'_3\neq 0$, the following is the unique class of solutions to the parameter dependent YBE, with the $R (\cdot, \cdot) = X (\tan^{-1}(\cdot), \cdot )$ operators:

$$
R(x,1) R(x+y,2) R(y,1)=R(y,2) R(x+y,1) R(x,2),
$$

for all $x,y \in \mathbb{R}$.

\end{theorem}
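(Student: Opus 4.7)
My plan is to expand both sides of the parameter dependent Yang--Baxter equation as elements of the subalgebra of $\mathrm{End}(\mathbb{C} S_n)$ generated by $L_1 := L_{(1,2)}$ and $L_2 := L_{(2,3)}$, and then match coefficients. Because these two operators correspond to left multiplication by transpositions in a copy of $S_3 \subseteq S_n$, the six elements $\{I, L_1, L_2, L_1 L_2, L_2 L_1, L_1 L_2 L_1\}$ act as linearly independent operators on $\mathbb{C} S_n$, so they form a basis of the subalgebra (this is the key structural input). Writing $R(z,k) = (1+iz L_k)/\sqrt{1+z^2}$, I would multiply out $(1+iz_1 L_1)(1+iz_2 L_2)(1+iz_3 L_1)$ using only $L_1^2 = I$ and the braid relation $L_1 L_2 L_1 = L_2 L_1 L_2$, and symmetrically for the right-hand side using $L_2^2 = I$.

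Letting $N_L = \sqrt{(1+z_1^2)(1+z_2^2)(1+z_3^2)}$ and $N_R$ the analogous product on the right, matching coefficients in the six-dimensional basis yields the system
\begin{align*}
(z_1+z_3)/N_L &= z_2'/N_R, & z_2/N_L &= (z_1'+z_3')/N_R,\\
z_1 z_2/N_L &= z_2' z_3'/N_R, & z_2 z_3/N_L &= z_1' z_2'/N_R,\\
(1-z_1 z_3)/N_L &= (1-z_1' z_3')/N_R, & z_1 z_2 z_3/N_L &= z_1' z_2' z_3'/N_R.
\end{align*}
Setting $\lambda := N_R/N_L$, the middle two relations immediately give $z_3' = z_1 z_2/(z_1+z_3)$ and $z_1' = z_2 z_3/(z_1+z_3)$ (using the $L_1$ coefficient to eliminate $z_2'$); substituting into the $L_2$ equation forces $z_1' + z_3' = z_2$, which upon comparison with $z_2/N_L = (z_1'+z_3')/N_R$ yields $\lambda = 1$, i.e.\ $N_L = N_R$ and $z_2' = z_1+z_3$.

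The identity coefficient then collapses to $z_1 z_3 = z_1' z_3' = z_1 z_3 \cdot z_2^2/(z_1+z_3)^2$, so under the non-degeneracy assumption $z_1 z_3 \neq 0$ we get $z_2 = \pm(z_1+z_3)$. The spurious minus sign is ruled out by the $L_1 L_2 L_1$ coefficient: with $z_2 = -(z_1+z_3)$ one computes $z_1 z_2 z_3 = -z_1 z_3(z_1+z_3)$ while $z_1' z_2' z_3' = (-z_3)(z_1+z_3)(-z_1) = +z_1 z_3(z_1+z_3)$, contradicting $z_1 z_2 z_3 = z_1' z_2' z_3'$. Hence the only admissible branch is $z_2 = z_1+z_3$, which gives $z_1' = z_3$, $z_2' = z_1+z_3$, $z_3' = z_1$; renaming $z_1 = x$ and $z_3 = y$ produces the claimed one-parameter family of solutions. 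The main obstacle I expect is simply the bookkeeping of the six coefficient equations together with the normalizations, and making sure that the sign ambiguity from the identity coefficient is cleanly killed by the top-degree coefficient rather than by an additional assumption; the linear independence of the six basis operators on $\mathbb{C} S_n$ is what makes the coefficient comparison rigorous and must be invoked explicitly.
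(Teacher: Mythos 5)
Your proof is correct and follows essentially the same route as the paper's: expand both sides in the six-element operator basis $\{I, L_1, L_2, L_1L_2, L_2L_1, L_1L_2L_1\}$ (made rigorous, as you note, by faithfulness of the left regular representation), equate coefficients, and solve the resulting system under the non-vanishing hypothesis. A minor bonus of your particular sequence of eliminations is that obtaining $N_L = N_R$ from the $L_2$ coefficient together with the algebraic identity $z_1'+z_3'=z_2$ avoids the separate case $z_1 z_3 = 1$ which the paper must treat explicitly when it extracts $\Gamma = 1$ from the identity-coefficient equation.
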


\begin{proof}
We wish to find the class parameters $z_1, z_2, \ldots, z'_3$ such that the following equation is satisfied:

\begin{equation}
R(z_1,1) R(z_2,2) R(z_3,1)=R(z'_1,2) R(z'_2,1) R(z'_3,2)
\label{PDYBE}
\end{equation}

It is straightforward to check that if $z_1=z'_3$, $z_3=z'_1$ and $z_2=z'_2=z_1+z_2$, then the equation is satisfied. We need to prove that this is indeed the only solution. Let:

$\Gamma:= \sqrt{\dfrac{(1+z'^2_1)(1+z'^2_2)(1+z'^2_3)}{(1+z^2_1)(1+z^2_2)(1+z^2_3)}}.$

If equation ~\ref{PDYBE} is satisfied, then the following are equalities inferred:

\begin{eqnarray*}
&1)&\hspace{3mm}\Gamma . (1- z_1 z_3) = (1- z'_1 z'_3)\\
&2)&\hspace{3mm}\Gamma . (z_1 + z_3) = z'_2\\
&3)&\hspace{3mm}\Gamma . z_2 = (z'_1+z'_3)\\
&4)&\hspace{3mm}\Gamma . z_1 z_2 = z'_2 z'_3\\
&5)&\hspace{3mm}\Gamma . z_2 z_3 = z'_1 z'_2\\
&6)&\hspace{3mm}\Gamma . z_1 z_2 z_3 = z'_1 z'_2 z'_3\\
\end{eqnarray*}

Suppose for now that all of the parameters are nonzero; We will take care of these special cases later. If so, dividing $6)$ by $5)$ and $6)$ by $4)$ reveals:

\begin{eqnarray}
\nonumber
z_1&=&z'_3\\
z_3&=&z'_1.
\label{blahh}
\end{eqnarray}

Again suppose that $z_1 z_3\neq 1$ and $z'_1 z'_3\neq 1$. Then using the equivalences of ~\ref{blahh} in $2)$, one gets $\Gamma=1$, from $2)$ and $3)$:

$$
z_2=z'_2=z_1+z_3,
$$

\noindent which is the desired solution. Now suppose that $z_1 z_3 = 1$. This implies also $z'_1 z'_3 = 1$. Using these in $6)$  one finds $\Gamma . z_2 =z'_2$ and substituting this in $2)$ and $3)$ reveals $\Gamma=1$ as the only solution, and inferring from equations $2), 3), \ldots, 6)$ reveals the desired solution.
\end{proof}

If one of the parameters is indeed $0$, we can find other solutions too, but all of these are trivial solutions. The following is the list of such solutions:

\begin{itemize}
\item If $z_1=0$, then:
\begin{itemize}
\item either $z'_3=0$, which implies $z_3=z'_2$ and $z_2=z'_1$
\item or $z'_2=0$ that implies $z_3=0$ and $\tan^{-1}(z_2)=\tan^{-1}(z'_1)+\tan^{-1}(z'_3)$.
\end{itemize}
\item If $z_2=0$ then $z'_1=z'_3=0$ and $\tan^{-1}(z,_2)=\tan^{-1}(z_1)+\tan^{-1}(z_3)$.
\item If $z_3=0$, then:
\begin{itemize}
\item either $z'_1=0$, which implies $z_1=z'_2$ and $z_2=z'_3$
\item or $z'_2=0$ that implies $z_1=0$ and $\tan^{-1}(z_2)=\tan^{-1}(z'_1)+\tan^{-1}(z'_3)$.
\end{itemize}
\end{itemize}

The solutions corresponding to $z'_j=0$ are similar, and we can obtain them by replacing the primed rapidities with the unprimed rapidities in the above table. There is another corresponding to the limit $z_j \rightarrow \infty$:

$$
X(0,1) X(0, 2) X(0, 1)=X(0,2) X(0, 1) X(0, 2)
$$

Which corresponds to the property $L_{(1,2)}L_{(2,3)}L_{(1,2)}=L_{(2,3)}L_{(1,2)}L_{(2,3)}$ of the symmetric group. From now on, we use the following form of the $H$-matrices:

$$
R(v_1, v_2, k) := R(v_1 -v_2, k),
$$

for real parameters $v_1, v_2$, and the YBE is according to:

$$
R(v_1, v_2, 1) R(v_1, v_3, 2) R(v_2, v_3, 1)=R(v_2, v_3, 2) R(v_1, v_3, 1) R(v_1, v_2, 2).
$$

The parameters $v_j$ can be interpreted as velocities in the scattering model. We can now extend the three label Yang-Baxter circuit to larger Hilbert spaces. 

\begin{definition}
An $m$ gate Yang-Baxter circuit over $n$ labels is a collection of $n$ smooth curves $(x_1(s), s)$ $, (x_2(s), s)\ldots (x_n(s), s)$ where $s\in[0,1]$, with $m$ intersections, inside the square $[0,1]^2$, such that, $0<x_1(0)< x_2(0) <\ldots <x_n (0)<1$, and $x_i(1)$ are pairwise non-equal.

If $\sigma \in S_n$, and $x_{\sigma(1)}(1)<x_{\sigma(2)}(1)<\ldots<x_{\sigma(n)}(1)$, then $\sigma$ is called the permutation signature of the circuit.

We say a Yang-Baxter circuit consists of line trajectories if all of the smooth curves are straight lines.
\end{definition}

Each such Yang-Baxter circuit can be equivalently represented by a set of adjacent permutations. When only line trajectories are considered, the circuit is related to the particle scattering models discussed in Section \ref{models}. The permutation signature in this case is obtained by the momenta of the particles.

\begin{definition}
Let $C$ be a Yang-Baxter circuit of $m$ gates, each corresponding to a transposition $(k_t, k_t +1), t \in [m]$, and the permutation signature $\pi_t, t \in [m]$ at each of these gates. Then if one assigns a real velocity $v_j$ to each line, then the Yang-Baxter quantum circuit for $C$ is a composition of $H(\cdot, \cdot, \cdot)$ operators:

\begin{center}
$$
\hspace{-0.8cm}
 R(v_{\pi_m(k_m)}-v_{\pi_m(k_m)+1}, k_m) \ldots R(v_{\pi_2(k_1)}-v_{\pi_2(k_1+1)}, k_2) R(v_{k_1}-v_{k_1+1}, k_1).
$$
\end{center}

Each of these unitary $H$-matrices is a quantum gate.
\end{definition}

\begin{theorem}
Let $Q_n$ be the Lie group generated by \textit{planar} Yang-Baxter quantum circuits over $n$ labels, then $Q_n$ as a manifold is isomorphic to the union of $n!$ manifolds, each with dimension at most $n$.
\label{YBnonu}
\end{theorem}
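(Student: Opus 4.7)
The plan is to show that the set of unitaries achievable by planar Yang-Baxter circuits on $n$ labels is parametrized by a discrete choice (the permutation signature $\sigma \in S_n$) together with a continuous choice of $n$ real velocities, and that the resulting map from velocities to unitaries has image of dimension at most $n$. The union of the $n!$ images, one for each signature, will then be the claimed manifold decomposition.

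First I would fix a signature $\sigma \in S_n$ and consider the set of all planar Yang-Baxter circuits with that signature. For each such circuit, recall that the associated unitary is obtained by assigning velocities $v_1,\ldots,v_n$ to the $n$ strands and composing the $R$-matrices $R(v_i-v_j,k)$ at each crossing, where $i$ and $j$ are the labels of the strands currently at positions $k$ and $k+1$. The key observation is that the unitary depends only on $\sigma$ and on $(v_1,\ldots,v_n)$, and not on the particular planar diagram realizing $\sigma$. This is because any two planar diagrams with the same permutation signature differ by a finite sequence of (i) Reidemeister-III-type moves within small disks, which correspond precisely to the parameter-dependent Yang-Baxter equation $R(x,1)R(x+y,2)R(y,1)=R(y,2)R(x+y,1)R(x,2)$ established in Appendix~\ref{YangBaxter}; (ii) Reidemeister-II-type moves, whose invariance reduces to the identity $R(v,k)R(-v,k)=I$ that one verifies directly from $R(v,k)=\tfrac{1}{\sqrt{1+v^2}}(I+ivL_{(k,k+1)})$; and (iii) far-commutativity moves, which are trivially invariant because $R$-matrices acting on disjoint adjacent pairs commute.

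Having established well-definedness, I would let $\Phi_\sigma\colon \mathbb{R}^n \to U(n!)$ be the map sending $(v_1,\ldots,v_n)$ to the unitary obtained from any planar diagram with signature $\sigma$. Each $\Phi_\sigma$ is smooth (being a product of smooth matrix-valued functions of the velocities), so its image $Q_n(\sigma):=\Phi_\sigma(\mathbb{R}^n)$ is the image of an $n$-dimensional space under a smooth map, hence a smooth manifold (or countable union of manifolds) of dimension at most $n$. Since every planar Yang-Baxter circuit has some signature, we obtain
\[
Q_n \;=\; \bigcup_{\sigma \in S_n} Q_n(\sigma),
\]
a union of $n!$ manifolds each of dimension at most $n$, as claimed.

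The main obstacle I anticipate is verifying rigorously that any two planar diagrams with the same signature are related by the three moves listed above; this is the combinatorial content of planar tangle theory and must be done with some care (in particular, one must argue that finite sequences of these local moves suffice to convert any planar diagram to, say, a canonical ``line-trajectory'' realization of $\sigma$). Everything else is a direct consequence of the parameter-dependent Yang-Baxter equation and the inverse identity for $R(v,k)$.
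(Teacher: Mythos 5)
Your proposal is correct and follows essentially the same approach as the paper: both decompose $Q_n$ by permutation signature $\sigma$ and show the resulting unitary depends only on $\sigma$ and the $n$ velocities by checking invariance under the three local moves, which the paper phrases as Coxeter word reductions ($b_i^2 = e$, far commutativity, and the braid relation) and you phrase as Reidemeister-II, far-commutativity, and Reidemeister-III moves on planar diagrams. The dimension count via the smooth map $\Phi_\sigma \colon \mathbb{R}^n \to U(n!)$ then matches the paper's observation that each $Q_n(\sigma)$ is parametrized by $n$ velocity parameters.
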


\begin{proof}
Fix the velocities $v_1, v_2, \ldots, v_n$. The idea is to demonstrate an embedding of the group generated with these fixed velocities into the symmetric group $S_n$.  Consider any two planar Yang-Baxter quantum circuits $C$ and $C'$, with permutation signatures $\sigma$ and $\tau$, respectively. We show that if $\sigma=\tau$, then $C=C'$.

The underlying circuit of $C$ corresponds to a sequence of transpositions $k_1, k_2, \ldots , k_M$ and $C'$ corresponds to another sequence $l_1, l_2, \ldots , l_N$, such that $k_M\circ \ldots \circ k_2\circ k_1= \sigma$, and $l_N\circ \ldots \circ l_2\circ l_1= \tau$. Then the unitary operators $C$ and $C'$ can be written as a sequence of $H$ operators:

$$
C= R(z_M, k_M) \ldots R(z_2, k_2) R(z_1, k_1)
$$

and,

$$
C'= R(z'_N, l_N) \ldots R(z'_2, l_2) R(z'_1, l_1).
$$

Where the $z$ parameters are the suitable rapidities assigned to each two-particle gate based on the velocities $v_1, v_2, \ldots, v_n$, and the underlying Yang-Baxter circuits. If two sequence of transpositions $k_M\circ \ldots \circ k_2\circ k_1$ and $l_N\circ \ldots \circ l_2\circ l_1$ amount to the same permutation, then there is a sequence of substitution rules among:

1) $b_i ^2 \Leftrightarrow e$

2) $b_i b_j \Leftrightarrow b_j b_i$  if $|i-j|>1$

3) $b_i b_i+1 b_i \Leftrightarrow b_i+1 b_i b_i+1$, for all $i\in [n-1]$

Such that if we start with the string $k_M\circ \ldots \circ k_2\circ k_1$ and apply a sequence of substitution rules, we end up with $l_N\circ \ldots \circ l_2\circ l_1$. All we need to do is to prove that the sequences of unitary gates are invariant under each of the substitution rules. The invariance under each rule is given in the below:

$1)$ If we apply two successive quantum transpositions on the labels $i, i+1$ we will end up with the identity operator. This follows from unitarity $H(z) H(-z)=I, \forall z\in \mathbb{R}$, and planarity of the circuits.

$2)$ Clearly $H(\cdot, i) H(\cdot, j) = H(\cdot, j) H(\cdot, i)$ for $|i-j|>1$, since these are $2$-local gates.

$3)$ This part also follows from the Yang-Baxter equation.

We can then start with the unitary $C= H(z_M, k_M) \ldots H(z_2, k_2) H(z_1, k_1)$  and apply the same substitution rules and end up with $C= H(z'_N, l_N) \ldots H(z'_2, l_2) H(z'_1, l_1)$.

Now let $Q_n (\sigma)$ be the Lie group corresponding to all Yang-Baxter quantum circuits with permutation signature $\sigma$. For each choice of velocities, there is exactly one unitary in this group, so $Q_n (\sigma)$ is locally diffeomorphic to $\mathbb{R}^n$, and $Q_n=\cup_{\sigma\in S_n} Q_n(\sigma)$.
\end{proof}

In the following we show that even with postselection in the end, a quantum planar Yang-Baxter circuit still generates a sparse subset of unitary group. In other words any attempt to prove post-selected universality for the particle scattering model without intermediate measurements will probably fail.

\begin{theorem}
The set of unitary operators generated by $\HQBALL$ with postselection in particle label basis in the end of computation, correspond to the union of (discrete) $n!^{O(1)}$ manifolds, each with linear dimension.
\label{YBnonu1}
\end{theorem}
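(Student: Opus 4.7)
The plan is to extend Theorem \ref{YBnonu} to the postselected setting by observing that postselection contributes only \emph{discrete} additional parameters, not continuous ones. By Theorem \ref{YBnonu}, the set of unitaries realized by planar Yang-Baxter circuits already decomposes as a disjoint union of $n!$ manifolds $Q_n(\sigma)$ indexed by the permutation signature $\sigma \in S_n$, each of dimension at most $n$, with the $n$ continuous parameters being the particle velocities $v_1, \ldots, v_n$.

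First, I would make precise what ``the set of unitary operators generated with postselection'' means. A planar Yang-Baxter circuit, followed by postselection on the label outcome $\tau \in S_n$, applied to a basis initial state $|\sigma_{\rm in}\rangle$, produces (up to normalization by the postselection probability) the filtered operator
$$
M_{\sigma_{\rm in}, \sigma, \tau}(v_1, \ldots, v_n) \;=\; \frac{|\tau\rangle\langle\tau|\, U_\sigma(v_1,\ldots,v_n)\, |\sigma_{\rm in}\rangle\langle\sigma_{\rm in}|}{\bigl\| \langle\tau|\,U_\sigma(v_1,\ldots,v_n)\,|\sigma_{\rm in}\rangle \bigr\|},
$$
where $U_\sigma(v_1,\ldots,v_n) \in Q_n(\sigma)$ is the unitary realized by the circuit of signature $\sigma$ with velocities $v_1, \ldots, v_n$. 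This is the object we count.

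Second, I would enumerate the discrete data. There are $n!$ choices of input permutation $\sigma_{\rm in}$, $n!$ choices of signature $\sigma$ (from Theorem \ref{YBnonu}), and $n!$ choices of postselected output $\tau$. Multiplying these gives at most $(n!)^3 = n!^{O(1)}$ discrete components. For a fixed triple $(\sigma_{\rm in}, \sigma, \tau)$, the only remaining freedom is the $n$ real velocities: normalization by the postselection amplitude is a deterministic function of these parameters and introduces no new degrees of freedom. Hence each connected component is a smooth image of an open subset of $\R^n$ and has dimension at most $n$, which is linear in $n$ as claimed.

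The main obstacle is the interpretative step in the first paragraph: postselection does not, strictly speaking, produce a unitary, so one has to agree on what ``unitary operators generated'' means. Once the object is defined as the (normalized) filtered operator above, the dimension-counting goes through essentially for free from Theorem \ref{YBnonu}, because the Yang-Baxter equation already forces each $Q_n(\sigma)$ to be a manifold of dimension at most $n$ parameterized solely by the particle velocities, and discrete choices of input/output label cannot inflate this dimension.
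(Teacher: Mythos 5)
Your proposal is correct and follows essentially the same route as the paper's proof: index the components by the discrete data (permutation signature of the circuit together with the choice of postselection, giving $n!^{O(1)}$ pieces) and observe that within each piece the only continuous freedom is the $n$ velocities, with normalization by the postselection amplitude adding no new parameters. The only cosmetic difference is that you also fold the choice of basis input state into the discrete count, which the paper handles by remarking separately that the result holds for arbitrary initial states.
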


\begin{proof}
We follow the proof of theorem ~\ref{YBnonu}. Consider the planar YB circuits on $n$ labels. If the input velocities are fixed, then the unitary operators generated by the model constitute a finite set of size at most $n!$. There are finite $n!^{O(1)}$ to do a postselection on the output labels of each circuit. So for each fixed set of velocities, the unitary matrices obtained by postselection and proper normalization still constitute a set of size $n!^{O(1)}$. Therefore, label the manifolds with the permutation signature of the circuits and the type of final postselection. Then the points in each of these manifolds are uniquely specified by $n$ velocity parameters, which is an upper-bound on the dimension for each of them.
\end{proof}

\noindent Notice that result of these theorems still hold if we allow the circuit models to start with arbitrary initial states.

%%%%%%%%%%%%%%%%%%%%%%%%%%%%%%%%%%%%%%%%%%%%%%%%%%%%%%%%%%%
\section{A modified quantum ball-permuting model}
\label{zqball}

In the definition of $X (\cdot, \cdot)$ operators, the angle $\theta$ is independent of the labels that are being swapped. This in turn leads to the following observation suggesting that the output of this model is relatively restricted:

\begin{theorem}
Let $U=X(\theta_m, k_m) \ldots X(\theta_2, k_2) X(\theta_1, k_1)$ be any composition of the $X$ operators, then columns of $U$ as a matrix in $S_n$ basis, are obtainable by permuting the entries of the top-most column.
\label{ind}
\end{theorem}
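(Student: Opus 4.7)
The plan is to exploit the fact that $X(\theta,k) = \cos\theta \cdot I + i\sin\theta \cdot L_{(k,k+1)}$ is built entirely from the identity and left-multiplication operators. Since left multiplications $L_\sigma$ commute with right multiplications $R_\tau$ (left and right regular actions of $S_n$ commute), any composition $U$ of $X$ operators commutes with every $R_\tau$. This single observation is the whole engine of the proof; the rest is bookkeeping.

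First I would set up notation: let $R_\tau \ket{\pi}=\ket{\pi\circ\tau^{-1}}$ be the right regular action, and note that $\ket{\sigma}=R_{\sigma^{-1}}\ket{e}$, where $\ket{e}=\ket{12\cdots n}$ is the identity column. Then I would expand
\[
U\ket{\sigma} \;=\; U R_{\sigma^{-1}}\ket{e} \;=\; R_{\sigma^{-1}} U\ket{e},
\]
using the commutation derived above. Writing $U\ket{e} = \sum_{\tau\in S_n}\alpha_\tau \ket{\tau}$, this becomes $U\ket{\sigma} = \sum_{\tau}\alpha_\tau\ket{\tau\circ\sigma}$. Reading off matrix entries gives $U_{\pi,\sigma} = \alpha_{\pi\sigma^{-1}} = U_{\pi\sigma^{-1},\,e}$, so the column indexed by $\sigma$ is obtained from the identity column by the row permutation $\pi\mapsto\pi\sigma^{-1}$. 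This is exactly the claim.

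The main (and essentially only) step to justify carefully is that $U$ commutes with every $R_\tau$. I would prove this by induction on the number of $X$ gates: the base case reduces to $L_{(k,k+1)} R_\tau = R_\tau L_{(k,k+1)}$, which follows from associativity of group multiplication, i.e.\ $L_\sigma R_\tau\ket{\pi} = \ket{\sigma\pi\tau^{-1}} = R_\tau L_\sigma\ket{\pi}$; and products of operators commuting with $R_\tau$ again commute with $R_\tau$. There is no real obstacle here—the claim is essentially the statement that the left regular representation lives in the commutant of the right regular representation, restricted to our generating set. I would close by remarking that this structural fact is what motivates considering the modified model in Appendix~\ref{zqball} where $c,s$ are allowed to depend on the labels $x,y$: once this label-independence is broken, the columns of $U$ need not be permutations of one another, which is exactly the additional freedom that the appendix uses to recover $\BQP$.
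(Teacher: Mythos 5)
Your proof is correct and follows essentially the same route as the paper's: both arguments hinge on the single observation that $U$, being built from left multiplications, commutes with the right regular action $R_\tau$, so each column $U\ket{\sigma}=R_{\sigma^{-1}}U\ket{e}$ is a permutation of the identity column. Your write-up is in fact slightly more careful with the index bookkeeping (the paper writes $\ket{\pi}=R(\pi)\ket{e}$ where, under its own convention $R_g\ket{h}=\ket{h g^{-1}}$, it should be $R(\pi^{-1})$), but this is a cosmetic difference, not a different proof.
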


\begin{proof}
Consider the first column of $U$ spanned by:

$$
U|123\ldots n\rangle =\sum_{\sigma\in S_n} \alpha_\sigma |\sigma\rangle
$$

Where $\alpha_\sigma$'s are the amplitudes of the superposition. Now consider any other column marked by $\pi$:

$$
U|\pi\rangle=\sum_{\sigma\in S_n} \beta_\sigma |\sigma\rangle
$$

Clearly, $|\pi\rangle=R(\pi)|123\ldots n\rangle$, and since $[U,R(\pi)]=0$:

$$
\sum_{\sigma\in S_n} \beta_\sigma |\sigma\rangle= \sum_{\sigma\in S_n} \alpha_\sigma |\sigma\circ \pi\rangle,
$$

\noindent which is the desired permutation of columns, and in terms of entries $\beta_{\sigma\circ \pi}=\alpha_\sigma$.

\end{proof}

Here we introduce another local unitary, $Z(\tilde{\theta},k)$, wherein the transposition angles depend on the color of the labels. Here $\tilde {\theta} = \{\theta_{ij}\}$ is a list of angles, one element per each $i\neq j\in [n]$. By definition $Z(\tilde{\theta},k)$ acts on the labels $|ab\rangle$ in the locations $k$ and $k+1$ with the following map:

$$
Z(\tilde{\theta},k)|ab\rangle =\cos \theta_{ab} I + i\sin \theta_{ab} L_{(a,a+1)}
$$

If we assume real valued angles with $\theta_{ij}=\theta_{ji}$, then the operator $Z$ becomes unitary. Clearly, the $X$ operators are the special case of the $Z$ operators. In order to see this, consider any basis $|\sigma\rangle, \sigma \in S_n$, and suppose $\sigma(k)=a, \sigma(k+1)=b$ then:

$$
Z^\dagger (\tilde{\theta},k)Z(\tilde{\theta},k)|\sigma\rangle=(\cos \theta_{ab}- i \sin \theta_{ab} L_{(k,k+1)})(\cos \theta_{ab}+ i \sin \theta_{ab} L_{(k,k+1)})|\sigma\rangle=|\sigma\rangle
$$

Next we use a simple encodings of qubits using labels $1,2,3\ldots, n$, and the $Z$ operators to operate on them as single and two qubit gates and prove that this modified model can simulate $\BQP$. More specifically, we prove that using a sequence of $Z$ operators, one can encode any element in the special orthogonal group. For an example of encoded universality see \cite{gottesman2001encoding, bacon2001encoded}. We encode each qubit using two labels. Given two labels $a < b$ we define the encoded (logical) qubits as:

$$
|0\rangle := | a b \rangle
$$

and,

$$
|1\rangle := i |b a\rangle.
$$

\noindent Using simple $X(\theta,1)$ we can apply arbitrary rotation of the following form:

$$
|0\rangle \rightarrow \cos \theta |0\rangle+\sin \theta |1\rangle
$$

and,

$$
|1\rangle \rightarrow \cos \theta |1\rangle-\sin \theta |0\rangle.
$$

\noindent We are dealing with orthogonal matrices which are represented over the field or real numbers. Using the $Z$ operators, we can discuss a controlled swap of the form:

$$
S(i,j,k,l) := Z(\pi/2 \delta_{i,j}, k , l).
$$

In simple words, $S(i,j,k,l)$ applies the swap $i L_(k,l)$, on the $k$ and $l$'th labels if and only if the content of these label locations are $i$ and $j$ ( $j$ and $i$). We can also extend it to the following form:

$$
S(\{(i_1,j_1)^{s_1}, (i_2,j_2)^{s_2}, \ldots, (i_t,j_t)^{s_t}\},k,l) := Z(\pi/2 \delta_{i,j}, k , l).
$$

\noindent Where $s_m$ can be a symbol $\star$ or nothing. Given $(i_m, j_m)^\star$ in the list means that the swap $(i L_{(k, l)})^\dagger= - i L_{(k, l)}$ is applied if the content of $k$ and $l$ are $i_m$ and $j_m$. And given plain $(i_m, j_m)$ in the list means $i L_{(k, l)}$ if the content of $k$ and $l$ are $i_m$ and $j_m$.

Suppose that one encodes one qubit with labels $a < b$ and another one with $x < y$, we wish to find a unitary operator which applies a \textit{controlled not} on the two qubits, that is the following map:

\begin{eqnarray*}
|00\rangle &:=& | a,b, x,y\rangle \rightarrow  | a,b, x,y\rangle=|00\rangle\\
|01\rangle &:=& i| a,b, y,x\rangle \rightarrow  i| a,b, y,x\rangle=|01\rangle\\
|10\rangle &:=& i| b,a, x,y\rangle \rightarrow -| b,a, y,x\rangle=|11\rangle\\
|11\rangle &:=& -| b,a, y,x\rangle \rightarrow i | b,a, x,y\rangle =|10\rangle
\end{eqnarray*}

\noindent It can be confirmed that the following operator can do this:

$$
\hspace{-1cm}
C:= S (\{(a , x), (a , y)^\star\}, 1, 3) S (\{(a , x), (a , y)\}, 2, 3) S (\{(a , x), (a , y)\}, 1, 2)
$$

\noindent Given these two operators, one can simulate special orthogonal two-level systems, that is for each orthonormal $|\psi\rangle$ and $|\phi\rangle$ in the computational basis of $n$ qubits we can apply an operator which acts as:

$$
|\psi\rangle \rightarrow \cos \theta |\psi\rangle + \sin \theta |\phi\rangle 
$$

and,

$$
|\phi\rangle \rightarrow \cos \theta |\phi\rangle - \sin \theta |\psi\rangle 
$$

%%%%%%%%%%%%%%%%%%%%%%%%%%%%%%%%%%%%%%%%%%%%%%%%%%%%%%%%%%

\section{The One-Clean-Qubit Model}
\label{dqc1intro}
While state of a quantum system is a pure vector in a Hilbert space, most of the time the actual quantum state is unknown; instead, all we know is a classical probability distribution over different quantum states, \i.e., the given quantum state is either $|\psi_1\rangle$ with probability $p_1$, or $|\psi_2\rangle$ with probability $p_2$, and so on. In other words, the state is an ensemble of quantum states $\{(p_1, |\psi_1\rangle) ,(p_2, |\psi_2\rangle),\ldots, (p_n, |\psi_n\rangle) \}$, for $p_1 + p_2 + \ldots p_n =1$. Such an ensemble is a mixture of quantum probability and classical probability distributions at the same time, it is also called a mixed state, and is described by a density matrix $\rho$:

$$
\rho = \sum_{j \in [n]} p_j |\psi_j \rangle \langle \psi_j |.
$$

\noindent A density matrix is a Hermitian operator, with nonnegative eigenvalues and unit trace. A quantum state is called pure, if it has a density matrix of the form $|\psi\rangle \langle \psi |$. In other words, a quantum state $\rho$ is pure if and only if $tr(\rho^2)=1$.

If some quantum state is initially prepared in the mixed state $\rho_0$, then given a unitary evolution $U$, the state is mapped to $U\rho_0 U^\dagger$. Let $\{|j\rangle: j \in [n]\}$ be some orthonormal basis of a Hilbert space. The maximally mixed state of this Hilbert space has the form $\dfrac{1}{n} \sum_j  |j\rangle  \langle j|= \dfrac{I}{n}$, is a quantum state which contains zero quantum information in it. That is, the outcome of any measurement can be simulated by a uniform probability distribution on $n$ numbers. Also, a maximally mixed state is independent of the selection of the orthonormal basis. Quantum computing on a maximally mixed state is hopeless, since $\dfrac{I}{n}$ is stable under any unitary evolution.

Consider the situation where we can prepare a pure qubit along with $n$ maximally mixed qubits to get $|0\rangle \langle 0| \tensor \dfrac{I}{n}$. The state $|0\rangle \langle 0|$ is also referred to as a clean qubit. In this case, the quantum state has one bit of quantum information in it. It is also believed that there are problems in $\DQC1$ that are not contained in the polynomial time. One example of such problem, is the problem of deciding if the trace of a unitary matrix is large or small. No polynomial time algorithm is known for this problem. We are going to point out to the trace computing problem later in Section ~\ref{trace}. Moreover, if we consider the version of $\DQC1$ where we are allowed to measure more than one qubits, then it is shown that there is no efficient classical simulation in this case, unless the polynomial Hierarchy collapses to the third level.  In the version of my definition, since we used a polynomial Turing machine as a pre-processor, $\DQC1$ immediately contains $\P$. Pre-processing can be tricky for the one-clean-qubit model. For example, as it appears, if instead of $\P$, we used $\NC^1$, the class $\P$ and $\DQC1$ are incomparable.

\begin{definition}
(The one-clean-qubit model \cite{knill1998power}) $\DQC1$ is the class of decision problems that are efficiently solvable with bounded probability of error using a one-clean-qubit and arbitrary amount of maximally mixed qubits. More formally, it is the class of languages $L\subseteq \{0,1\}^\star$, for which there is a polynomial time Turing machine $M$, which on any input $x \in \{0,1\}^\star$, outputs the description of a unitary matrix $\langle U \rangle$ with the following property: if $x \in L$, the probability of measuring a $|0\rangle$ on the first qubit of $U |0\rangle \langle 0| \tensor \dfrac{I}{n} U^\dagger$ is $\geq 2/3$, and otherwise it is $\leq 1/3$. Here $U$ is a $2 n \times 2 n$ unitary matrix.
\end{definition}

Notice that if we allow intermediate measurements we will obtain the original $\BQP$; just measure all qubits in $\{|0\rangle, |1\rangle \}$ basis, and continue on a $\BQP$ computation. Clearly, $\DQC1$ is contained in $\BQP$; in order to see this, just use Hadamrds and intermediate measurements to prepare the maximally mixed state, and continue on a $\DQC1$ computation. It is unknown whether $\BQP \subseteq \DQC_1$, however, we believe that this should not be true.

\section{A Brief Review of the Representation Theory of the Symmetric Group}
\label{RepTheory}

Most of the mathematical review is borrowed from \cite{james1981representation}. We are interested in two mathematical structures, the group algebra of the symmetric group $\C S_n$, and the unitary regular representation of the symmetric group. As it turns out, the two structures are closely related to each other, and also to the group generated by the ball permuting gates. Group algebra is an extension of a group to an algebra, by viewing the members of the group as linearly independent basis of a vector space over the field $\C$. Therefore, in addition to the group action an action of $\C$ on $S_n$ is needed, by the map $(\alpha, \sigma) \mapsto \alpha S_n$, and also addition of vectors in the usual sense. Therefore, a group algebra consists of all elements that can ever be generated by vector on vector composition and linear combination of vectors over $\C$.  Any element of $\C S_n$ can be uniquely written as $\sum_{\sigma \in S_n} \alpha_\sigma \sigma$, with $\C$ coefficients $\alpha_\sigma$. If we add a conjugation convolution $^\dagger$ with maps $\sigma^\dagger=\sigma^{-1}$, and $\alpha^\dagger = \alpha^\star$, then for any element $v\in \C S_n$, $v^\dagger v=0$, if and only if, $v=0$. In order to see this, let $v=\sum_{\sigma \in S_n} \alpha_\sigma \sigma$. Then, $v^\dagger v= \sum_\sigma |\alpha_\sigma|^2 e+ \ldots=0$. A zero on the right hand side implies zeroth of all the vector components, including the component along $e$, which implies $\alpha_\sigma=0$ for all $\sigma \in S_n$, and therefore $v=0$. Let $e$ be the identity element of $S_n$, consider an element $p\in \C S_n$ to be a projector if it has the property $p^2=p$. Two projectors $p$ and $q$ are called orthogonal if $p. q =0$. Then $(e-p)^2=e-p$ is also a projector, and also $p (e-p) =0$ are orthogonal projectors. $0$ is trivially a projector. Therefore, the group algebra decomposes as:

$$
\C S_n = \C S_n e = \C S_n (e-p) + p = \C S_n (e-p) \oplus \C S_n p.
$$

\noindent A projector is called minimal if it cannot be written as the sum of any two others projectors other than $0$ and itself. Let $p^\mu$ be a list of minimal projectors summing $\sum_\mu p^\mu=e$, then the decomposition of the group algebra into minimal parts is according to:

$$
\C S_n = \bigoplus_\mu \C S_n p^\mu.
$$

\noindent $p^\mu$ are known as Young symmetrizers, and we are going to mention them later.

A (finite) representation $\rho$ of a group $G$ is a homomorphism from $G$ to the group of isomorphisms of a linear space $: G \rightarrow GL(V, \C)$, for some vector space $V$. Let $g$ be any element of $G$, with its inverse $g^{-1}$, and $e$ and $1$ as the identity elements of $G$ and $GL(V,\C)$, respectively. Given the definition, $\rho (g^{-1})= \rho(g)^{-1}$, and $\rho(e)=1$ are immediate. One can observe that $\rho : G \rightarrow \{I\in GL(V,\C)\}$, is immediately a representation, and is called the trivial representation of $V$. A dual representation of $G$ is a homomorphism from $G$ into the group of linear maps $: V \rightarrow \C$. As we discussed before, this is called the dual space $V^\star$, and $V$ is viewed as the space of column vectors, then its dual space is a row space. For any vector spaces $V$ and $W$, the two can be combined into a larger linear structure, $V\otimes W^\star$, as the set of linear maps from $W$ to $V$. Let $M_1$ and $M_2$ be two elements of $GL(V,\C)$ and $GL(W,\C)$, respectively. Then, viewing $V\otimes W^\star$ as a vector space, the object $(M_1, M_2)$ acts on $x \in V\otimes W^\star$ with $M_1 x M^{-1}_2$. Then, if $M_1$ and $M_2$ are two representations of $G$ on $V$ and $W$, then $(M_1, M_2)$ is a representation of $G$ on $V\otimes W^\star$, as a vector space. Notice that the inverse on $M_2$ is needed in order to have $(M_1, M_2)$ act as a homomorphism. The dual representation $M$ of $V$ is then the representation on $\C\otimes V^\star$, when $M_2=M$, and $M_1$ is the one dimensional trivial representation. This is just saying that the dual representation $M^\star$ of $M$ on $V^\star$, maps $\langle \psi |$ to $\langle \psi| M(g^{-1})$, if we view the dual space as the usual row space. If we define an inner product as the action of the dual of a vector on itself, then $G$, as a representation, sends orthonormal basis to orthonormal basis. This suggests that every representation of a finite group is isomorphic to a unitary representation. That is, any non-unitary representation becomes unitary after a change of basis. Let $M$ be a representation on $V$. Then, we say $W\subseteq V$ is called stable under $M$, if for any $x\in W$, $M x \in W$. Then, $M$ restricted to $W$ is called a sub-representation.  A representation $M$ on $V$ is called an irreducible representation (irrep), if it has no stable subspaces other than $0$ and $V$. Two representations $M_1$ and $M_2$ on $V_1$ and $V_2$ are isomorphic if $M_1$ resembles $M_2$ after a suitable change of basis within $V_1$. Then, if $V$ is reducible, it can be decomposed as $V_1 \oplus V_2 \oplus \ldots \oplus V_n$, for $n>1$. Some of the sub-representations can be isomorphic, and the multiplicity of a sub-representation is the number of sub-representations isomorphic to it. Then, the isomorphic subspaces can be grouped together to $V\cong m_1 V_1 \oplus m_2 V_2 \oplus \ldots m_k V_k$. Then $\dim V= \sum_j m_j \dim V_j$. The structure of such decomposition is isomorphic to $\bigoplus_j V_j \otimes X_j$, where $X_j$ is the multiplicity space of $V_j$ and is a vector space of dimension $m_j$. Decomposition of a representation onto the irreducible ones is unique up to isomorphism and multiplicities and dimensionality of irreducible representations do not depend on the decomposition. Canonical ways to find a decomposition are also known.

The regular representation of $S_n$, also denoted by $\C S_n$, is the unitary representation of $S_n$ onto the usual Hilbert space $\C S_n$ spanned by the orthonormal basis $\{|\sigma\rangle : \sigma \in S_n\}$. It is well known that for any regular representation, the dimension of each irrep is equal to the multiplicity of the irrep, and therefore $\C S_n$ decomposes into irreducible representations of the following form:

$$
\C S_n \cong \bigoplus_\lambda V_\lambda \otimes X_\lambda,
$$

\noindent with $\dim X_\lambda = \dim V_\lambda =: m_\lambda$, and indeed $\sum_\lambda m^2_\lambda= n!$. Here $X_\lambda$ is again the multiplicity space, and $V_\lambda$ corresponds to each irrep. It is tempting to make a connection between the group algebra and regular representation of the symmetric group. As described earlier, $S_n$ can act on the Hilbert space $\C S_n$ in two ways; the left and right, $L, R: S_n \rightarrow U(\C S_n)$, unitary regular representation, with the maps $L(\sigma) |\tau\rangle = |\sigma \circ \tau\rangle$ and $R(\sigma) |\tau\rangle = |\tau\circ \sigma^{-1}\rangle$. Also, similar left and right structure can be added to the group algebra. Clearly, $L$ and $R$ representations commute, and it can be shown that the algebra generated by $L$ is the entire commutant of the algebra generated by $R$. Putting everything together, inspired by the theory of decoherence free subspaces, and the defined structures, one can show that the left ($A$) and right ($B$) algebras and the Hilbert space $\C S_n$ decompose according to:

$$
A\cong \bigoplus_\lambda M(m_\lambda)\otimes I(m_\lambda),
$$

$$
B\cong \bigoplus_\lambda I(m_\lambda) \otimes M(m_\lambda),
$$

and,

$$
\C S\cong \bigoplus_\lambda V(m_\lambda) \otimes X(m_\lambda).
$$

\noindent This is indeed a nice and symmetric structure. Indeed each irrep $V_\lambda$ is an invariant subspace of the $X$ operators, and it cannot be reduced further. It remains to demonstrate the structure of the irreps $\lambda$, and to study the action of $X$ operators on these subspaces.

The irreducible representations of the symmetric group $S_n$ are marked by the partitions of $n$. Remember that a partition of $n$ is a sequence of non-ascending positive numbers $\lambda_1 \geq \lambda_2 \geq \lambda_3 \geq \ldots \lambda_k$ summing to $n$, \i.e., $\sum_j \lambda_j = n$. The number of partitions of $n$ grows like $\exp \Theta(\sqrt{n})$. Each as described earlier each partition $\lambda= (\lambda_1, \lambda_2, \ldots, \lambda_k)$ is related to a diagram, called the Young diagram, which consists of $k$ horizontal rows of square boxes $r_1, r_2, \ldots, r_k$. The Young diagram is then created by paving the left-most box of $r_1$ to the left-most box of $r_2$, and so on. For a Young diagram $\lambda$, the dual diagram $\tilde{\lambda}$, is another Young diagram, whose rows are the columns of $\lambda$. A Young tableau $t^\lambda$ with the shape $\lambda$, is a way of bijective assigning of the numbers in $[n]$ to the boxes of $\lambda$. We will use $t^\lambda$ and simply $t$ with the shape $\lambda$ interchangeably. A permutation $\pi \in S_n$ can act on a Young tableau $t^\lambda$ by just replacing the content of each box to the its image under $\pi$, \i.e., if a box contains $j$, after the action of $\pi$ it will be replaced with $\pi(j)$. A tableau is called standard, if the numbers in each row and column are all in ascending orders. The number of standard tableau for each partition of shape $\lambda$ is denoted by $f^\lambda$.

Let $t$ be a tableau with shape $\lambda$. Define $P(t)$ and $Q(t)\subseteq S_n$ to be sets of permutations that leave each row and column invariant, respectively. Then the projectors of the $\C S_n$ group algebra are according to the Young symmetrizers, one for each standard tableau:

$$
p^t = \frac{1}{f^\lambda} \sum_{\pi \in C(t)}\sum_{\sigma \in R(t)} sgn(\pi) \pi \circ \sigma.
$$

These subspaces correspond to all of the irreducible invariant subspaces of $S_n$. The dimension for each of these subspaces is the number of standard tableaus of each partition, and it is computable using the hook lengths. The hook of each box in a partition of shape $\lambda$ is consists of the box itself along with all boxes below and at the right of the box. The hook length of each box is the number of boxes contained in that hook, and the hook length $h^\lambda$ of the shape $\lambda$ is the multiplication of these numbers for each box. Then, the dimension of the irrep corresponding to $\lambda$ is according to $f^\lambda=n!/ h^\lambda$. 

\section{The Young-Yamanouchi Basis}
\label{YoungYamanouchi}

In order to talk about quantum operations orthonormal basis for the discussed subspaces are needed. It would be nice if we have a lucid description of the basis, in a way that the action of $X$ operators on these subspaces is clear. Moreover, we seek for an inductive structure for the orthonormal basis of the irreps that is adapted to the nested subgroups $S_1\subset S_2 \subset \ldots \subset S_n$. By that we mean states that are marked with quantum numbers like $|j_1, j_2, j_3, \ldots, j_k\rangle$, such that while elements of $S_n$ affect all the quantum numbers, for any $m_1< n$, elements of $S_n$ restricted to the first $m_1$ labels affects the first $k_1$ quantum numbers only, and act trivially on the rest of the labels. Also, for any $m_2< m_1<n$, the elements of $S_n$ restricted to the first $m_2$ labels affect the first $j_2< j_1 < k$ quantum numbers only, and so on.

Fortunately, such a bases exist, and are known as the subgroup adapted Young-Yamanouchi (YY) bases \cite{james1981representation}. These bases are both intuitive and easy to describe: for any partition of shape $\lambda$, mark an orthonormal basis with the standard Young Tableaus of shape. Agree on a lexicographic ordering of the standard tableaus, and denote these basis corresponding to the partition $\lambda$, by a $\{|\lambda_j\rangle\}_{j=1}^{f^\lambda}$. Denote the action of a swap $(i, j)$ on $|\lambda_l\rangle$ by $|(i,j). \lambda_l\rangle$, to be the basis of a tableau that is resulted by exchanging location of $i$ and $j$ in the boxes. Suppose that for such tableau $t$, the number $j$ ($i$) is located at the $r_j$ and $c_j$ ($r_i$ and $c_i$) row and column of $t$, respectively. Then, define the axial distance $d_{ij}$ of the label $i$ from label $j$ of on each tableau to be $(c_j-c_i)-(r_j-r_i)$. Or in other words, starting with the box containing $i$ walk on the boxes to get to the box $j$. Whenever step up or right is taken add a $-1$, and whenever for a step down or left add a $1$. Starting with the number $0$, the resulting number in the end of the walk is the desired distance. Given this background, the action of $L_{(k,k+1)}$ on the state $|\lambda_i\rangle$, is according to:

$$
L_{(k,k+1)} |\lambda_i\rangle = \dfrac{1}{d_{k+1, k}} |\lambda_i\rangle + \sqrt{1-\dfrac{1}{d^2_{k+1, k}}}|(k,k+1).\lambda_i\rangle
$$

\noindent Three situations can occur: either $k$ and $k+1$ are in the same column or row, or they are not. If they are in the same row, since the tableau is standard, $k$ must come before $k+1$, then the axial distance is $d_{k+1, k}=1$, and the action of $L_{(k,k+1)}$ is merely:

$$
L_{(k,k+1)} |\lambda_i\rangle = |\lambda_i\rangle.
$$

\noindent If the numbers are not in the same column, $k$ must appear right at the top of $k+1$, and the action is:

$$
L_{(k,k+1)} |\lambda_i\rangle = -|\lambda_i\rangle.
$$

\noindent Finally, if neither of these happen, and the two labels are not in the same row or column, then the tableau is placed in the superposition of itself, and the tableau wherein $k$ and $k+1$ are exchanged. Notice that if the tableau $|\lambda_i\rangle$ is standard the exchanged tableau $|\lambda_i\rangle$ is also standard. This can be verified by checking the columns and rows containing $k$ and $k+1$. For example, in the row containing $k$, all the numbers at the left of $k$ are less than $k$, then if we replace $k$ with $k+1$, again all the numbers on the left of $k+1$ are still less than $k+1$. Similar tests for the different parts in the two rows and columns will verify $(k,k+1)\lambda_i$, as a standard tableau. The action of $L_{k,k+1}$ in this case is also an involution. This is obvious for the two cases where $k$ and $k+1$ are in the same row or column. Also, in the third case if the action of $L_{(k,k+1)}$ maps $|\lambda\rangle$ to $\dfrac{1}{d} |\lambda\rangle + \sqrt{1-\dfrac{1}{d^2}}|t\circ \lambda\rangle$ then a second action maps $|t\circ \lambda\rangle$ to $\dfrac{-1}{d} |t\circ \lambda\rangle + \sqrt{1-\dfrac{1}{d^2}}|\lambda\rangle$, and therefore:

$$
L^2_{(k,k+1)}|\lambda\rangle = \dfrac{1}{d}(\dfrac{1}{d} |\lambda\rangle + \sqrt{1-\dfrac{1}{d^2}}|t\circ \lambda\rangle)+\sqrt{1-\dfrac{1}{d^2}}(\dfrac{-1}{d} |t\circ \lambda\rangle + \sqrt{1-\dfrac{1}{d^2}}|\lambda\rangle)=|\lambda\rangle.
$$

Given this description of the invariant subspaces, we wish to provide a partial classification of the image of the ball permuting gates on each of these irreps. The hope is to find denseness in $\prod_\lambda SU(V_\lambda)$, on each of the irreps $V_\lambda$, with an independent action on each block. In this setting, two blocks $\lambda$ and $\mu$ are called dependent, if the action on $\lambda$ is a function of the action on $\mu$, \i.e., the action on the joint block $V_\lambda\oplus V_\mu$ resembles $U\times f(U)$, for some function $f$. Then, independence is translated to decoupled actions like $I \times U$ and $U \times I$.

Throughout, the $\lambda \vdash n$, means that $\lambda$ is a partition of $n$. We say $\mu\vdash n+1$ is constructible by $\lambda \vdash n$, if there is a way of adding a box to $\lambda$ to get $\mu$. We say a partition $\mu \vdash m$ is contained in $\lambda \vdash n$, for $m<n$, if there is a sequence of partitions $\mu_1 \vdash m+1$, $\mu_2 \vdash m+2, \ldots, \mu_{n-m-1} \vdash n-1$, such that $\mu_1$ is constructible by $\mu$, $\lambda$ is constructible by $\mu_{n-m-1}$, and finally for each $j \in [n-m-2]$, $\mu_{j+1}$ is constructible by $\mu_j$. We also call $\mu$ a sub-partition of $\lambda$. A box in a partition $\lambda$ is called removable, if by removing the box the resulting structure is still a partition. Also, define a box to be addable if by adding the box the resulting structure is a partition.

\begin{theorem}
The Young-Yamanouchi bases for partitions of $n$ are adapted to the chain of subgroups $\{e\}=S_1 \subset S_2 \subset \ldots \subset S_n$. 
\end{theorem}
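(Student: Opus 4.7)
The plan is to identify the Young-Yamanouchi quantum numbers explicitly as a chain of sub-partitions, and then read off directly from the axial-distance action formulas of the previous section that this chain is preserved by the subgroup action in the way demanded by the definition of an adapted basis.

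First I would set up the correspondence. A standard tableau $t$ of shape $\lambda \vdash n$ is in natural bijection with a maximal chain of partitions $\emptyset = \lambda^{(0)} \subset \lambda^{(1)} \subset \lambda^{(2)} \subset \cdots \subset \lambda^{(n)} = \lambda$ with $\lambda^{(k)} \vdash k$, where $\lambda^{(k)}$ consists of those boxes of $t$ whose entries are at most $k$. Relabeling $|t\rangle$ as $|\lambda^{(1)},\lambda^{(2)},\ldots,\lambda^{(n)}\rangle$ gives exactly the quantum-number labeling demanded by the informal definition of an adapted basis in Section \ref{YoungYamanouchi}. Note that $\lambda^{(m)}$ is precisely the shape of the standard sub-tableau on $\{1,\ldots,m\}$ obtained from $t$ by erasing entries greater than $m$.

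Next I would verify the invariance claim. The subgroup $S_m \subseteq S_n$ (acting on the first $m$ labels) is generated by the adjacent transpositions $(k,k{+}1)$ with $k < m$. Inspecting the YY action $L_{(k,k+1)}|\lambda_i\rangle = \tfrac{1}{d_{k+1,k}}|\lambda_i\rangle + \sqrt{1 - 1/d_{k+1,k}^2}\,|(k,k{+}1).\lambda_i\rangle$, the only tableau appearing besides $t$ is the one obtained by swapping the positions of the boxes labeled $k$ and $k{+}1$. For every $j \neq k$ the subset of boxes with entries $\leq j$ is identical in $t$ and in $(k,k{+}1).t$: for $j \leq k-1$ neither $k$ nor $k{+}1$ lies in this subset, while for $j \geq k+1$ both lie in it. Hence $\lambda^{(j)}$ is preserved for every $j \neq k$, and in particular for every $j \geq m$, since the generators we are using have $k \leq m-1$. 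This shows that $\lambda^{(m)},\lambda^{(m+1)},\ldots,\lambda^{(n)}$ are $S_m$-invariant quantum numbers, whereas $\lambda^{(1)},\ldots,\lambda^{(m-1)}$ are the only ones that can be changed, which is precisely what ``adapted to $S_m \subset S_n$'' requires.

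Finally, holding $\lambda^{(m)},\ldots,\lambda^{(n)}$ fixed and letting $\lambda^{(1)},\ldots,\lambda^{(m-1)}$ vary over all chains building up to $\lambda^{(m)}$, one recovers exactly the YY basis of $V_{\lambda^{(m)}}$; and since the axial distances $d_{k+1,k}$ for $k<m$ depend only on the positions of $k,k{+}1$ in the sub-tableau on $\{1,\ldots,m\}$, the restricted $S_m$-action coincides with the YY action of $S_m$ on $V_{\lambda^{(m)}}$. Iterating the argument over $m = 1,2,\ldots,n$ produces the full adapted decomposition along the whole chain $\{e\}=S_1\subset\cdots\subset S_n$, giving the branching rule. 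The only mildly subtle point I anticipate is confirming that the label $m$ itself never moves under any generator of $S_m$ (since those generators only involve $(k,k{+}1)$ with $k+1\le m-1<m$), which is exactly what forces $\lambda^{(m)}$ to be preserved rather than merely some valid shape of size $m$.
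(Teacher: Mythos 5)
Your proof is correct and follows essentially the same route as the paper's: identify each Young--Yamanouchi basis vector with its chain of sub-partitions $\lambda^{(1)}\subset\cdots\subset\lambda^{(n)}$, then observe that the generator $(k,k{+}1)$ can only alter $\lambda^{(k)}$, so the quantum numbers $\lambda^{(m)},\ldots,\lambda^{(n)}$ are invariant under $S_m$. Your invariance step is in fact slightly cleaner than the paper's (which argues by cases on where the box containing $m$ sits), and the observation that the restricted action reproduces the YY action on $V_{\lambda^{(m)}}$ is the content of the branching lemma the paper proves immediately afterwards.
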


\begin{proof}
Let $\lambda\vdash n$, and $t$ be any standard tableau of shape $\lambda$.  We construct some enumeration of states in the Young-Yamanouchi basis of $\lambda$ which is adapted to the action of subgroups. For any $m<n$, since $t$ is a standard tableau, the numbers $1,2,3,\ldots, m$, are all contained in a sub-partition $\mu \vdash m$ of $\lambda$. This must be true, since otherwise the locus of numbers $1, 2, 3, \ldots, m$ do not shape as a sub-partition of $\lambda$. Let $\nu$ be the smallest sub-partition of $n$ that contains these numbers. Clearly, $|\nu|>m$. The pigeonhole principle implies that, there is a number $k>m$ contained somewhere in $\nu$. The box containing $k$ is not removable from $\nu$, since otherwise you can just remove it to obtain a sub-partition smaller than $\nu$ that contains all of the numbers in $[m]$. Therefore, if $k$ is in the bulk of $\nu$, then both the row and column containing $k$ are not in the standard order. If $k$ is on a vertical (horizontal) boundary, then the column (row) of the box containing $k$ is not standard.

Let $\lambda_k$ be the smallest sub-partition of $\lambda$ that contains $[k]$. Then the enumeration of the basis is according to $|\lambda_1, \lambda_2, \ldots, \lambda_n \rangle$. Here, $\lambda_n=\lambda$, and $\lambda_1$ is a single box. From before, for any $j<n$, $\lambda_{j+1}$ is constructible by $\lambda_j$. For $m<n$, let $S_m$ be the subgroup of $S_n$, that stabilizes the numbers $m+1, m+2, \ldots, n$. For any $k\leq m$, $L_{(k,k+1)}$ just exchanges the content of boxes withing $\lambda_m$, and therefore leaves the quantum numbers $\lambda_{m+1}, \lambda_{m+2}, \ldots, \lambda_n$ invariant. Moreover, the box containing $m$ is somewhere among the removable boxes of $\lambda_m$, since otherwise, as described in the last paragraph, the tableau $\lambda_m$ is not standard. The box containing $m-1$ is either right above or on the left side of $m$, or it is also a removable box. In the first two cases, the action of $L_{(m-1,m)}$ is diagonal, and the quantum numbers are intact. In the third case, the only quantum numbers that are changed are $\lambda_{m-1}$ and $\lambda_m$.
\end{proof}

Consider now the action of $S_{n-1}$ on an element $|\lambda_1, \lambda_2, \ldots , \lambda_n=\lambda\rangle$. In any case $\lambda$ is constructible by $\lambda_{n-1}$, and the construction is by adding an addable box to $\lambda_{n-1}$. In other words, $\lambda_{n-1}$ can be any partition $\vdash n-1$, that is obtained by removing a removable box from $\lambda$. These observations, all together, lead to a neat tool:

\begin{lemma}
(Branching.) Under the action of $S_{n-1}$, $V_\lambda \cong \bigoplus_{\substack{\mu\vdash n-1 \\ \mu\subset \lambda}}V_{\mu}$.
\end{lemma}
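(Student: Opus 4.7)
The plan is to exhibit the isomorphism explicitly by partitioning the Young-Yamanouchi basis of $V_\lambda$ according to the position of the label $n$, and then to verify that each block carries exactly the $V_\mu$ representation of $S_{n-1}$.

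First I would set up the decomposition at the level of basis vectors. Let $t$ be any standard tableau of shape $\lambda$. Since $t$ is standard, the label $n$ is the largest entry and must sit in a removable box of $\lambda$; let $\mu \vdash n-1$ be the partition obtained by deleting that box, so $\mu \subset \lambda$. The remaining labels $1,\ldots,n-1$ fill the boxes of $\mu$ in a way that is automatically a standard tableau of shape $\mu$. Conversely, any standard tableau of shape $\mu$ (for any removable box choice) extends uniquely to a standard tableau of $\lambda$ by placing $n$ in the removed box. This gives a bijection between the Young-Yamanouchi basis of $V_\lambda$ and the disjoint union, over removable boxes of $\lambda$, of the Young-Yamanouchi bases of the corresponding $V_\mu$. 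Define $W_\mu \subseteq V_\lambda$ to be the span of those basis vectors whose label $n$ occupies the removable box giving $\mu$, so that $V_\lambda = \bigoplus_\mu W_\mu$ as vector spaces.

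Next I would check that each $W_\mu$ is invariant under the $S_{n-1}$ action (viewing $S_{n-1}$ as the stabilizer of $n$ inside $S_n$). The subgroup $S_{n-1}$ is generated by the adjacent transpositions $L_{(k,k+1)}$ with $k < n-1$. By the Young-Yamanouchi formulas recalled in the excerpt, $L_{(k,k+1)}$ acting on a basis vector $|t\rangle$ either multiplies by $\pm 1$ or mixes $|t\rangle$ with $|(k,k{+}1)\cdot t\rangle$; in all cases the position of the label $n$ in the tableau is untouched, so $W_\mu$ is preserved.

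Then I would identify each $W_\mu$ with $V_\mu$ as an $S_{n-1}$-module via the bijection of basis vectors above. The key point is that for $k < n-1$, the matrix element of $L_{(k,k+1)}$ is determined entirely by the axial distance $d_{k+1,k}$ between the labels $k$ and $k{+}1$. Since both labels lie in the sub-tableau filling $\mu$, this axial distance computed in $\lambda$ equals the axial distance computed in $\mu$. Hence $L_{(k,k+1)}$ acts on $W_\mu$ by exactly the same matrix as it does on $V_\mu$, so the bijection is an $S_{n-1}$-equivariant isomorphism. Assembling all removable boxes $\mu$ gives $V_\lambda \cong \bigoplus_{\mu \vdash n-1,\, \mu \subset \lambda} V_\mu$, as desired.

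The only step requiring real care is the equivariance check: one must verify in each of the three YY cases (same row, same column, and neither) that placing the fixed label $n$ somewhere outside the rows/columns of $k, k{+}1$ does not alter the local formula. This is the only place where something could go wrong, but the YY rule depends solely on the local geometry of the boxes for $k$ and $k{+}1$, so the presence of $n$ is irrelevant.
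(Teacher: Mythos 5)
Your proof is correct and follows essentially the same route as the paper: group the Young--Yamanouchi basis vectors of $V_\lambda$ by the (necessarily removable) box containing $n$, observe that each block is stable under $S_{n-1}$, and identify it with $V_\mu$. You are in fact slightly more thorough than the paper, which asserts stability of the blocks but leaves implicit the equivariance check that the axial distances, and hence the $L_{(k,k+1)}$ matrix elements for $k<n-1$, are unchanged by deleting the box containing $n$.
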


\begin{proof}
The proof is directly based on the structure of the YY bases. What we would like to emphasize here is that the multiplicity free branching rule of the symmetric group is manifest in the structure of the YY bases. For other formal proofs see \cite{james1981representation}.

Choose an orthonormal basis according to YY. Enumerate the removable boxes of $\lambda$ by $1,2,\ldots, p$. Clearly, in any standard tableau of $\lambda$, the box containing $n$ is a removable one. Group the tableaus according to the location of $n$. Clearly, each subspace corresponds to a partition $\mu \vdash n-1 \subset \lambda$. Call these partitions $\mu_1, \mu_2, \ldots, \mu_p$, according to the enumeration of removable boxes. Also denote the space $V_{\mu_j}$ correspondingly. For any $\mu_j$, any element of $S_{n-1}$, acted on $V_{\mu_j}$, generates a vector within $V_{\mu_j}$. In other words, these subspaces are stable under $S_{n-1}$.
\end{proof}

%%%%%%%%%%%%%%%%%%%%%%%%%%%%%%%%%%%%%%%%%%%%%%%%%%%%%%%%%%%%%
\section{Detailed proofs for Section \ref{classical}}
\label{ballpermutingoracles}

\begin{theorem}
$\DBall=\DBall_{adj}=\L=\Rev\L$ 
\end{theorem}

\begin{proof}
We first prove the direction $\DBALL=\DBALL_{adj}$. Clearly, $\DBALL_{adj}\subseteq \DBALL$ as a special case. Any nonadjacent swap $(i,j)$ for $i<j$ can be obtained by a sequence of adjacent transpositions $(i,j)= (i) \circ (i+1) \circ \ldots \circ (j-2) \circ (j-1) \circ (j-2) \circ \ldots \circ (i+1) \circ (i)$. Therefore, any sequence of $m$ nonadjacent swaps on $n$ labels can be simulated by $O(m.n)$ number of adjacent transpositions. Thereby $\DBALL \subseteq \DBALL_{adj}$.

In order to prove the direction $\Rev \L \subseteq \DBALL$ we observe from before that the evolution of a reversible computation is according to a configuration space wherein all configuration nodes have in-degree and out-degree at most $1$, and thereby the map which evolves the current configuration of a Turing machine to the next one is a bijection between configurations. A $\LOGSPACE$ (reversible) Turing Machine that uses $c. \log(n)$ space has a configuration space of size $c. n \log (n) n^c=n^{O(1)}$. Notice that such Turing Machine runs for at most polynomial amount of time before looping. Now given any $\LOGSPACE$ machine, consider a $\DBall$ oracle of size $N=n^{O(1)}$. Also without loss of generality we can assume that the Turing Machine on any input runs in a fixed time $T=n^{O(1)}$ for all of its inputs.  Given the description of the Turing machine and its input, we can encode the description of each configuration of the Turing machine with numbers $1, 2, \ldots, N$. Now each step of the computation corresponds to a permutation $:[N]\rightarrow [N]$, and each permutation can be decomposed into $N^{O(1)}$ pairwise permutations (swaps). Therefore an $AC^0$ machine encodes the evolution of the Turing Machine as sequence of swaps. And the evolution of the machine for $T$ steps as the list of swaps repeated for $T$ times. Let $1$ be the encoded initial state. The oracle then applies these swaps in order and in the end we look at the location of the symbol $1$ in the final permutation. If the final location corresponds to an accepting state the $\DBALL$ computer accepts, and otherwise rejects. Now from theorem \ref{L=RevL} we observe that $\Rev \L=\L$ and thereby the direction $\Rev \L=\L\subseteq \DBALL$ is derived.

In order to prove the direction $\DBALL\subseteq \L$, consider any $\DBall$ oracle queried with a list of swaps as the input. We can devise a $\LOGSPACE$ computation in which the final permutation is computed. Notice that the logarithmic space is unable to store the full description of the final permutation. Therefore, instead we design the computation in a way that the location of each symbol $s$ in the final permutation appears on the read/write tape. For this purpose, the machine just keeps track of the current location of $s$ on the read/write tape. If at some step a swap $(i,j)$ is queried, the machine updates the tape if and only if the location of $s$ is either $i$ or $j$. Consider this program as a subroutine. Given the list of swaps and a target permutation as the input, a $\LOGSPACE$ machine implements this subroutine for each symbol and compares the location of the symbol in the final permutation with the input and rejects if they do not match. Then the whole computation accepts if all the tests succeed.
\end {proof}

\begin{theorem}
$\L \subseteq \BPL \subseteq \RBALL=\RBALL_{adj}\subseteq \Almost \L\subseteq \BPP$. However, if we let $\RBALL (2)$ to the class where the $\AC$ machine is allowed to make two adaptive queries, then $\RBALL(2)=\Almost\L$.
\end{theorem}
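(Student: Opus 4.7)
The plan is to establish the chain $\L \subseteq \BPL \subseteq \RBall = \RBall_{adj} \subseteq \Almost\L \subseteq \BPP$ step by step, and then handle the $\RBall(2)$ claim separately. The endpoints $\L \subseteq \BPL$ and $\Almost\L \subseteq \BPP$ are immediate: a $\BPL$ machine can ignore its random tape, and a $\BPP$ machine can sample a random string up front and then run the $\L$ simulation on it in polynomial time. The equality $\RBall = \RBall_{adj}$ mirrors the $\DBall = \DBall_{adj}$ argument: a nonadjacent probabilistic swap $(i,j)$ of probability $p$ is replaced by a sequence of deterministic adjacent swaps (each of probability $1$) that bring the two labels into adjacent positions, followed by a single adjacent swap of probability $p$, followed by the reversing deterministic sequence.

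For $\RBall \subseteq \Almost\L$, I adapt the label-tracking subroutine of the $\DBall \subseteq \L$ argument. Given an input list of swaps and probabilities, the $\Almost\L$ machine walks through the swaps in order while maintaining on its worktape the current position of a single target label. When the current swap has probability $p_t \in (0,1)$, the machine consults a fresh position of the random oracle, indexed obliviously by the swap's slot in the list, to decide whether to apply the swap. Since each bit of the oracle is used at a position depending only on the input and not on previously read bits, the same polynomially long random string can be reused across many such queries as the $\AC^0$ postprocessor iterates over labels; this is precisely the feature of $\Almost\L$ which is not obvious for $\BPL$.

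The central and most delicate containment is $\BPL \subseteq \RBall$. The plan is to extend the $\L \subseteq \DBall$ construction by augmenting the reversible log-space simulation with probabilistic swaps that simulate coin flips. Deterministic steps of the $\BPL$ machine are handled exactly as in $\DBall$, by decomposing each global transition permutation on the polynomial configuration space $[N]$ into a product of swaps. A coin-flip step at configuration $c$ with successors $c_0, c_1$ is implemented by an adjacent probabilistic swap between positions $c$ and $c_0$ of probability $1/2$, followed by a deterministic swap between $c$ and $c_1$; a direct case analysis shows that a marker ball originally at $c$ ends at $c_0$ and $c_1$ with equal probability, while the same pair of swaps performed when the marker is far away leaves it untouched. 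The main obstacle is interference across reachable configurations at the same time step: a coin-flip swap issued for config $c$ could disturb the marker if the marker currently sits at $c_0$ or $c_1$ because of merging branches. This is handled by time-stamping configurations so that $(c,t)$ and $(c,t')$ are different positions for $t \ne t'$, and by ordering the swaps within each time step to execute branches before merges; the state space inflates only by the polynomial running time.

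Finally, for $\RBall(2) = \Almost\L$: the containment $\RBall(2) \subseteq \Almost\L$ is obtained by composing two instances of the $\RBall \subseteq \Almost\L$ simulation, with the intermediate $\AC^0$ processing absorbed into $\L$ (using $\AC^0 \subseteq \L$). Conversely, $\Almost\L \subseteq \RBall(2)$ is proved as follows. Let $M'$ be the deterministic $\L$ machine witnessing membership in $\Almost\L$, taking input $(x,r)$ where $r$ is a random string of polynomial length. The $\AC^0$ harness uses the first query to generate $r$ by requesting disjoint adjacent probabilistic swaps on pairs $(2k-1,2k)$ each of probability $1/2$; bit $k$ of $r$ is recovered from the output permutation by comparing labels at positions $2k-1$ and $2k$, which is an $\AC^0$ operation. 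The harness then constructs the deterministic swap list of the $\L \subseteq \DBall$ simulation of $M'(x,r)$ and issues it as the second query, reading off acceptance from the final position of the marker ball. This swap list is $\AC^0$-computable because each swap at time $t$ is determined by a constant-depth lookup of the transition table of $M'$ against a constant number of bits of $(x,r)$ and the binary encoding of $t$.
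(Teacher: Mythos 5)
Most of your chain matches the paper's own argument: $\RBall=\RBall_{adj}$ by padding a nonadjacent probabilistic swap with probability-$1$ adjacent swaps; $\RBall\subseteq\Almost\L$ by running the label-tracking subroutine and consulting the random oracle at input-determined positions (re-readability of the randomness being exactly what $\Almost\L$ supplies and $\BPL$ does not); and both directions of $\RBall(2)=\Almost\L$ (first query harvests a random string from disjoint probability-$1/2$ adjacent swaps, second query runs the deterministic simulation on $(x,r)$, with the second query regenerated bit-by-bit on the log-space side).

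The gap is in $\BPL\subseteq\RBall$, where you depart from the paper. Your coin-flip gadget --- a probabilistic swap $(c,c_0)$ followed by a deterministic swap $(c,c_1)$ --- must move the marker correctly from \emph{every} configuration it might occupy at that time step, because the swap list is fixed non-adaptively before the marker's (random) location is known. For any fixed outcome of the coins the swap sequence realizes a permutation, hence an injective map on positions; but a merge in the time-stamped configuration graph, where $(c,t)$ and $(c',t)$ both lead to $(c_0,t+1)$, is non-injective and cannot be realized by swaps in any order: whichever of the two gadgets touching $(c_0,t+1)$ executes second will pull the marker back out of $(c_0,t+1)$ whenever the first gadget put it there. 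So ``execute branches before merges'' does not repair this; you must eliminate merges altogether. The paper does so by never encoding the branching into the swap structure: it adjoins a one-bit random cell to each configuration (placing $C0$ and $C1$ at adjacent positions), refreshes that cell with a layer of probability-$1/2$ swaps $(1,2),(3,4),\ldots$, and then applies a purely deterministic transition that reads the cell and is a genuine permutation by $\L=\Rev\L$. The equivalent repair of your construction is to reversibilize the $\BPL$ machine relative to its coin tape so that $(c,b)\mapsto c_b$ is injective; once there are no merges your gadget and ordering work, and the time-stamping is no longer needed.
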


\begin{proof}
In order to observe $\RBALL=\RBALL_{adj}$, we use the fact that any nonadjacent swap can be produced as application of polynomially many transpositions, thereby $\RBALL_{adj}$ simulates $\RBALL$ by simulating each swap with a sequence of adjacent swaps. More precisely, suppose that $\RBALL$ queries the swap $(i,j)$ with probability $p$. Then $\RBALL_{adj}$ computer first queries $(i) , (i+1)  ,\ldots , (j-2)$ each with probability $1$, then queries $(j-1)$ with probability $p$, and finally queries $ (j-2), \ldots, (i+1), (i)$ each with probability $1$. This proof works independent of the number of queries.

Now suppose that single queries are allowed. As the simulation of $\DBALL$ in $\L$ requires repeated use of a subroutine for each label over and over, any machine to simulate $\RBALL$ with this scheme needs consistent access to the random bits for each subroutine. Given this, we observe that the machines of class $\Almost\L$ provide such consistent access to these random bits. Therefore, an $\Almost\L$ machine runs $\RBALL$ in the $\L$ simulation for each label, and whenever a probabilistic swap is queried, the machine uses the random oracle to decide whether to make it or not. Notice that for this simulation a random oracle is required rather than an ephemeral stream of random bits. That is because the simulation needs to use the same random bits over and over. Suppose that the $\Almost\L$ machine requires $N=n^{O(1)}$ random bits in each $\L$ simulation. Then it picks a lexicographic convention (hardwired to the machine's transition function) on finite strings, and in order to obtain the $j \leq N$'th random bit $b_j$, it queries the $j$'th lexicographic string to the oracle and lets $b_j=1$ if the oracle accepts, and $0$ otherwise.

In order to see the direction $\BPL \subseteq \RBALL$, we amalgamate the configuration space with a single tape cell which is a random bit. Thereby we double the configuration space, by adding $C0$ and $C1$ for each configuration $C$ of the original $\L$ machine. Also we map all configurations of the form $C0$ to odd numbers $1, 3, 5, \ldots$ and all of those with the form $C1$ to even numbers $2, 4, 6,\ldots$. Thereby we just run $\L$ in the $\DBALL$ simulation and whenever the machine needs a random bit on the random bit cell the $\AC^0$ machine queries the swaps $(1,2), (3,4), \ldots$ each with probability $1/2$ to the $\RBall$ oracle.

Now suppose that $\RBALL$ computer is allowed to adaptively query the Ball-Permuting oracle twice. First of all, in this case also $\RBALL\subseteq \Almost\L$. In order to see this suppose that the simulation of the first query to $\RBall$ needs $N_1$ random bits and the second one requires $N_2$ bits. Then the $\Almost \L$ machine first queries the first $N_1$ strings to its random oracle, and when it is done with the first round of $\RBall$ queries, it uses the result to simulate the $AC^0$ machine to design the second query to the oracle. Notice that the $\log n$ read/write tape might not be sufficient to store the whole content of the second query, and thereby it suffices for the machine to just store one bit of the second query at a time and repeat the whole computation to obtain the next bit.

In order to see the $\Almost\L\subseteq \RBALL$ direction, suppose that an $\Almost\L$ machine queries $N$ distinct strings to its oracle. Then an $\RBALL$ computer first queries $N$ transpositions $(1,2), (3,4),\ldots,(2N-1,2N)$ each with probability $1/2$, and uses the result to design the second query to simulate the running of $\Almost\L$ on the queried strings.
\end{proof}

\begin {theorem}
$\Ball_{adj}$ and $\Ball$ are polynomial-time reducible to each other, and furthermore they are complete for the class $\NP$. 
\end{theorem}

\begin {proof}
We introduce a polynomial time reduction from the word problem of permutations $\WPPP$ which is known to be $\NP$-complete. Any instance of $\WPPP$ is given by an ordered list of subsets of $[n]$, thereby for each subset $S =\{i_1, i_2, \ldots, i\}$ we add $ O(k^3)$ swaps. We need to choose the list in a way that for each permutation on $S$, there is a nondeterministic choice of swaps which produces the permutation. Each permutation on $k$ elements can be produced by $O(k)$ swaps. Therefore we list all the swaps on the elements of $S$ in some list $L$ and repeat the list for $k$ times. 
\end{proof}

\begin{theorem}
$\BALL^\star_{adj}\subseteq\P$
\label{nballstar}
\end{theorem}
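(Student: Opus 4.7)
The plan is to give a polynomial-time reduction from $\BALL^\star_{adj}$ to the planar edge-disjoint path problem $\EDP$, and then invoke Theorem~\ref{Wagner} of Wagner and Weihe to conclude membership in $\P$. Given an input consisting of a list of adjacent swaps $(i_1, i_1+1), \ldots, (i_m, i_m+1)$ with all probabilities strictly in $(0,1)$ and a target permutation $\sigma$, I will build a planar directed graph $G$ whose edge-disjoint path structure captures precisely the choices of ``which swaps actually fire''.

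The construction is a time-expanded grid with a small gadget at each swap. Place a vertex $v_{i,t}$ at horizontal coordinate $t$ and vertical coordinate $i$ for every $i\in[n]$ and $t\in\{0,1,\ldots,m\}$. For each swap $(i_t,i_t+1)$ at step $t$, introduce an auxiliary vertex $m_t$ located midway between the four surrounding grid vertices, with incoming edges from $v_{i_t,t-1}$ and $v_{i_t+1,t-1}$ and outgoing edges to $v_{i_t,t}$ and $v_{i_t+1,t}$; the four edges meet at $m_t$ in an ``X'' pattern and introduce no crossings. For every other track $i\neq i_t,i_t+1$ at step $t$, add the horizontal edge $v_{i,t-1}\to v_{i,t}$. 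The source-sink pairs are $(s_\ell, t_\ell)=(v_{\ell,0},v_{\sigma^{-1}(\ell),m})$ for $\ell\in[n]$, so each label $\ell$ must be routed from its starting position to the position where $\sigma$ places it. Because gadgets at different time steps occupy disjoint vertical slabs and each gadget is individually planar, the entire graph is planar.

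The key claim is that $\sigma$ occurs with positive probability if and only if $G$ admits $n$ edge-disjoint paths connecting the $s_\ell$ to the $t_\ell$. For the forward direction, any subset $S\subseteq[m]$ of ``performed'' swaps that produces $\sigma$ translates into paths that traverse each gadget $m_t$ in ``cross'' mode if $t\in S$ and in ``straight'' mode otherwise; these paths are edge-disjoint by inspection. For the reverse direction, note that the only edges leaving $v_{i_t,t-1}$ and $v_{i_t+1,t-1}$ point into $m_t$, so the two balls sitting on the swapped tracks must both pass through $m_t$. Edge-disjointness then forces the two paths to pair incoming and outgoing edges consistently---either both straight or both crossing---because any mixed pairing would route two paths through the same outgoing edge of $m_t$. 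Reading off the straight/cross choice at each $m_t$ then gives an assignment of swap outcomes realizing $\sigma$.

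The main subtlety to verify is that edge-disjointness alone---without needing vertex-disjointness---already enforces the consistency of the two balls' choice at each gadget; this turns on the fact that each outgoing edge of $m_t$ is the only way to reach the next grid vertex on its track, so it cannot be shared. The assumption that all probabilities lie strictly in $(0,1)$ is what legitimizes the reduction, since it guarantees that each swap can independently fire or not fire with positive probability, so the set of reachable permutations is exactly the set realizable by some $S\subseteq[m]$. Since $G$ has $O(n+m)$ vertices and edges and is planar, Theorem~\ref{Wagner} yields a linear-time algorithm for the resulting $\EDP$ instance, placing $\BALL^\star_{adj}$ in $\P$ (and in fact in linear time).
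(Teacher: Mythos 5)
Your proposal is correct and follows essentially the same route as the paper: reduce to the edge-disjoint paths problem on a planar directed graph in which each adjacent swap becomes a degree-$(2,2)$ vertex merging the two affected tracks, with sources/sinks given by the initial and target permutations, and then invoke the Wagner--Weihe linear-time algorithm. Your time-expanded grid with the gadgets $m_t$ is just a more explicit drawing of the paper's ``merge edges $i$ and $i+1$ at a vertex'' construction, and your consistency argument at each gadget matches the paper's.
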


\begin{proof}
This can be done by a reduction to the problem of edge disjoint path for directed planar graphs, which is contained in linear time. Given the list of $m$ swaps, probabilities, and the target permutation $\sigma \in S_n$ as the input of the $\Ball$ problem, we construct a directed planar graph with $m$ nodes, $n$ source edges (nodes) and $n$ sink edges (nodes), according to the following: Initially add $n$ source nodes with $n$ outgoing edges, one for each. Number these edges and nodes with $1,2,3, \ldots, n$. For each transposition $(i)$, merge the edge $i$ and $i+1$ in a vertex with two outgoing edges, update the numbering of the edges accordingly, \i.e. name one of the edges to be $i$ and another $i+1$. Continue this for all transpositions. At the end, add $n$ sink nodes each taking one of the edges as an input edge, and number them according to the target permutation. Clearly, the resulting graph is planar. This can be seen by induction on the steps of the construction algorithm. We need to prove that there is a nondeterministic generation of the target permutation in the $\Ball$ computation if and only if there is an edge disjoint path between the source nodes and sink nodes. Suppose that there is an edge disjoint path in the graph, we construct the list of swaps that create the target permutation. Sort the vertices in an ascending order by the distance from the source nodes. Each vertex is mapped to a transposition in the list with the ascending order, \i.e., if some vertex inputs the edges $i$ and $i+1$, then the corresponding transposition is $i$. Two paths are incident to each vertex. Suppose that the input edge of the first path takes the label $i$ and the second one $i+1$. If the output edge of the first path is $i+1$, we include the swap $(i)$ in the instruction list, and otherwise we don't. Each path thereby maps the initial label of its corresponding source node to the desired permuted label in the sink node. Now if there is a nondeterministic computation of the target permutation, then there is a way of choosing the transpositions to construct the target permutation. We then construct the resulting planar graph according to the algorithm, and the edge disjoint paths by the following: for each source node pick the first outgoing edge. For each edge in this path then its endpoint corresponds to a vertex which corresponds to a swap in the list. If the edge is labeled by $i$ choose the edge $i+1$ as the next edge if the corresponding swap is active, and otherwise pick the edge $i$.
\end{proof}

\begin{corollary}
The edge disjoint path problem in the non-planar case is $\NP$-complete.
\end{corollary} 

\begin{proof}
There is a reduction from $\BALL$ to the edge disjoint path. The reduction is similar to the one given in the proof of lemma ~\ref{nballstar}. The graph instance of $\EDP$ is non-planar if and only if the swaps of $\BALL$ are adjacent. In short we have:

$$
\VDP \leq_\P \WPPP \leq_\P \BALL \leq_\P  \EDP
$$

\noindent That means that a polynomial time algorithm for the general $\EDP$ implies $\P=\NP$.
\end{proof}

%%%%%%%%%%%%%%%%%%%%%%%%%%%%%%%%%%%%%%%%%%%%%%%%%%%%%%%%%
\section{Detailed proofs for Section \ref{sep}}
\label{trace}

\begin{theorem}
There is an efficient $\DQC 1$ algorithm which takes the description of a $\Poly(n)$ size ball permuting circuit $C$ over $\C S_n$ as its input, and outputs a complex number $\alpha$ such that \[|\alpha - \langle 123 \ldots n | C | 123 \ldots n\rangle| \leq \dfrac{1}{\Poly(n)}\], with high probability.
\label{mainDQC1}
\end{theorem}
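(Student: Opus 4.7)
The plan is to reduce additive approximation of $\langle 123\ldots n|C|123\ldots n\rangle$ to additive approximation of the normalized trace of a qubit unitary, and then invoke the Knill-Laflamme $\DQC 1$ trace-estimation procedure. The key observation enabling the reduction is that every ball-permuting gate $X(\theta,t) = \cos\theta \, I + i\sin\theta\, L_{(t,t+1)}$ acts by left multiplication, and therefore commutes with every right-multiplication operator $R_\sigma$. Hence any ball-permuting circuit $C$ satisfies $R_\sigma C R_\sigma^{\dagger} = C$ for all $\sigma \in S_n$, so $\langle\pi|C|\pi\rangle = \langle e|C|e\rangle$ for every $\pi \in S_n$. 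Summing over $\pi$ gives $\langle 123\ldots n|C|123\ldots n\rangle = \Tr(C)/n!$, so it suffices to approximate the normalized trace of $C$ as an operator on $\C S_n$.

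The plan for the second step is to exhibit an encoding $\mathrm{enc}: S_n \hookrightarrow \{0,1\}^m$ with $m = \log(n!) + O(\log n)$, so that $p := n!/2^m = 1/\poly(n)$ is only polynomially small, together with an efficient implementation, as a $\poly(n)$-size qubit circuit, of a unitary $U$ on $m$ qubits that acts as $C$ on the image of the encoding and as the identity on the orthogonal complement of ``garbage'' strings. Given such a $U$, we have
\[
\Tr(U)/2^m \;=\; p\cdot\Tr(C)/n! \;+\; (1-p).
\]
Knill-Laflamme's $\DQC 1$ algorithm estimates $\Tr(U)/2^m$ to additive error $\varepsilon$ in time $\poly(m,1/\varepsilon)$, separately for the real and imaginary parts. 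Choosing $\varepsilon = 1/\poly(n)$ with a sufficiently small polynomial and rescaling yields an estimate of $\Tr(C)/n!$ to additive error $\varepsilon/p = 1/\poly(n)$, which is exactly the claim.

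The main obstacle, and the technical heart of the argument, is the construction of the encoding in the second step. Naive encodings such as listing $(\sigma(1),\ldots,\sigma(n))$ with $\lceil\log n\rceil$ bits per entry waste $\Omega(n)$ bits, making $p$ exponentially small and destroying the efficiency of the reduction. The natural candidate is the factorial number system (the Lehmer code), which gives a bijection $S_n \leftrightarrow [0, n!)$ and under which adjacent transpositions act by modifying only a small number of mixed-radix digits. The challenge is to pack the mixed-radix digits into $\log(n!) + O(\log n)$ bits while simultaneously (i) making it easy for a small qubit circuit to test whether a bitstring encodes a valid permutation, and (ii) making each ball-permuting gate $X(\theta,t)$ implementable in $\poly(n)$ qubit gates on the encoded register, possibly with ancillas that are uncomputed so they cancel in the trace. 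I expect this encoding question to be the main obstacle; the remaining pieces (the commutation-with-$R_\sigma$ observation, the algebra of $p\cdot \Tr(C)/n! + (1-p)$, and the invocation of Knill-Laflamme) are essentially bookkeeping. A detailed construction of such a compressed-and-local encoding is given in the appendix.
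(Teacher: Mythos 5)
Your proposal is correct and takes essentially the same route as the paper: first reduce $\langle 123\ldots n|C|123\ldots n\rangle$ to $\Tr(C)/n!$ by exploiting that ball-permuting gates act by left multiplication and hence commute with every $R_\sigma$; then build a compressed-and-local encoding of $S_n$ using the factorial number system so that the fraction of valid code strings is $1/\poly(n)$; finally invoke the Knill--Laflamme $\DQC1$ normalized-trace estimator. The only cosmetic difference is in how ``garbage'' strings (bitstrings that do not encode a permutation) are neutralized: you let $U$ act as the identity on them and subtract the known affine shift $(1-p)$, whereas the paper's exposition (for the first, uncompressed encoding) routes through flag registers that zero out those diagonal entries and (for the factorial encoding) leaves the handling of out-of-range strings and the $O(1)$ ancillas to a postselection/uncomputation argument; both are fine so long as $U$ preserves the split between code strings and garbage strings, which the paper likewise takes for granted.
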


The theorem is proved in three steps. First, in lemma \ref{lem1} we observe that for ball permuting circuits the computation of single amplitudes can be reduced to the computation of (normalized) traces. Next, we borrow a result of \cite{shor2008estimating} which provides a reduction from additive approximation of traces for unitary matrices to $\DQC 1$ computations. Finally, in the third step, by some careful analysis it is shown that the $\DQC 1$ reduction of the second step is an efficient one. The main idea for this step is to use a compressed encoding permutations with binary bits.

The amplitudes in ball permuting circuits are related to traces according to:

\begin{lemma}
For any ball permuting quantum circuit $C$, the trace $Tr(C)= n! \langle 123\ldots n | C | 123\ldots n \rangle$.
\label{lem1}
\end{lemma}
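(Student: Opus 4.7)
The plan is to exploit the fact that every ball-permuting gate is built from left-multiplication operators $L_{(t,t+1)}$, and that left multiplication commutes with right multiplication in any group algebra. This lets us show that $\langle \sigma | C | \sigma \rangle$ is independent of $\sigma$, after which summing over $S_n$ gives the claimed identity.

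Concretely, I would proceed in three short steps. First, observe that for any $\sigma, \tau \in S_n$ we have $L_\tau R_\sigma |\pi\rangle = |\tau \circ \pi \circ \sigma^{-1}\rangle = R_\sigma L_\tau |\pi\rangle$, so $[L_\tau, R_\sigma] = 0$. Since each ball-permuting gate has the form $X(\theta, t) = \cos\theta \cdot I + i \sin\theta \cdot L_{(t,t+1)}$, it commutes with every $R_\sigma$. By induction on circuit depth, any ball-permuting circuit $C$ satisfies $[C, R_\sigma] = 0$ for every $\sigma \in S_n$.

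Second, note that $|\sigma\rangle = R_{\sigma^{-1}}|123\ldots n\rangle$, since $R_{\sigma^{-1}}|e\rangle = |e \circ \sigma\rangle = |\sigma\rangle$. Using the commutation from the previous step and unitarity of $R$,
\[
\langle \sigma | C | \sigma \rangle = \langle 123\ldots n | R_\sigma \, C \, R_{\sigma^{-1}} | 123\ldots n \rangle = \langle 123\ldots n | C | 123\ldots n \rangle.
\]
Hence the diagonal entry of $C$ in the permutation basis takes the same value for every $\sigma \in S_n$.

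Third, summing over all $n!$ basis states,
\[
\mathrm{Tr}(C) = \sum_{\sigma \in S_n} \langle \sigma | C | \sigma \rangle = n! \, \langle 123\ldots n | C | 123 \ldots n\rangle,
\]
which is the desired identity. There is no real obstacle here: the only conceptual content is the commutation $[L_\tau, R_\sigma] = 0$, which is just associativity of composition in $S_n$. The more substantive work in Theorem~\ref{DQC1thm} (the compressed and local qubit encoding of permutations that lets $\DQC 1$ efficiently estimate $\mathrm{Tr}(C)/n!$) is deferred to the subsequent lemmas in Appendix~\ref{trace}.
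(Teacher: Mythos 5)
Your proof is correct and follows essentially the same route as the paper's: both use the commutation of the left-multiplication gates with the right-regular action $R_\sigma$ to show that all diagonal entries $\langle \sigma | C | \sigma \rangle$ are equal, and then sum over the $n!$ basis states. Your write-up is just a more explicit version of the paper's argument, and your use of the convention $R_g|h\rangle = |h\circ g^{-1}\rangle$ matches the paper's definition.
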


\begin{proof}
A quantum ball permuting circuit, by definition, consists of left permuting actions only which commute with right actions $R(\sigma)$ (relabeling) for any $\sigma \in S_n$. Thereby, $\langle 123\ldots n | C | 123\ldots n \rangle$ $ =$

\noindent $ \langle 123\ldots n | R^{-1}(\sigma) C R(\sigma)| 123\ldots n \rangle$$= \langle \sigma | C | \sigma \rangle$. From this, $Tr(C)= \sum_{\sigma \in S_n} \langle \sigma | C | \sigma \rangle = n! \langle 123\ldots n | C | 123\ldots n \rangle$.
\end{proof}

Next, we formally mention the problem of trace approximation:

\begin{definition}
($\Trace$) given as input the $\Poly(n)$ size description of a unitary circuit $U$ as a composition of gates from a universal gate set over $n$ qubits, compute a complex number $t$ such that $|t-\dfrac{1}{2^n} Tr(U)| \leq \dfrac{1}{\Poly(n)}$, with high probability.
\end{definition}

The following theorem provides an efficient $\DQC 1$ algorithm for $\Trace$:

\begin{theorem}
(Jordan-Shor \cite{shor2008estimating}) $\Trace$ is a complete problem for $\DQC1$. \footnote{Moreover, the authors show that $\Trace$ is a complete problem for this class, with polynomial time pre-processing.}
\label{JordanShor}
\end{theorem}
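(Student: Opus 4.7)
The plan is to establish that $\Trace$ is $\DQC 1$-complete by proving both containment ($\Trace \in \DQC 1$) and hardness (every $\DQC 1$ language polynomial-time reduces to $\Trace$). Both directions follow from short, standard gadget constructions, and neither requires any novel ideas beyond the Hadamard test and a projector-to-unitary reformulation.

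For containment, I would use the textbook Hadamard test. Prepare the one-clean-qubit state $|0\rangle\langle 0| \otimes I/2^n$, apply a Hadamard to the clean qubit, then apply controlled-$U$ with the clean qubit as control, then another Hadamard on the clean qubit, and measure it in the standard basis. A direct computation shows that the probability of outcome $0$ equals $\tfrac{1}{2} + \tfrac{1}{2^{n+1}} \operatorname{Re}\operatorname{Tr}(U)$. Running the procedure $\operatorname{poly}(n)$ times and averaging (with Chernoff/Hoeffding) yields an additive $1/\operatorname{poly}(n)$ estimate of $\operatorname{Re}[\operatorname{Tr}(U)/2^n]$. The imaginary part is obtained analogously by inserting an $S^\dagger$ phase gate on the clean qubit before the final Hadamard. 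Note that although only the clean qubit is measured and the maximally mixed register is untouched (apart from controlled unitaries acting on it), this is exactly the format allowed by $\DQC 1$.

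For hardness, fix any $\DQC 1$ language with polynomial-size circuit $U$ on $n$ qubits and initial state $\rho_0 = |0\rangle\langle 0| \otimes I/2^{n-1}$. The acceptance probability is $\tfrac{1}{2}(1+\langle Z_1\rangle)$, where $Z_1$ is Pauli-$Z$ on the first qubit and $\langle Z_1\rangle = \operatorname{Tr}(Z_1 U \rho_0 U^\dagger)$. Using the identity $\rho_0 = (I+Z_1)/2^n$, I would expand
\[
\langle Z_1 \rangle \;=\; \frac{1}{2^n}\bigl[\operatorname{Tr}(Z_1 U U^\dagger) + \operatorname{Tr}(Z_1 U Z_1 U^\dagger)\bigr] \;=\; \frac{1}{2^n}\operatorname{Tr}(Z_1 U Z_1 U^\dagger),
\]
since $\operatorname{Tr}(Z_1)=0$. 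The crucial point is that $V := Z_1 U Z_1 U^\dagger$ is itself a unitary with an obvious polynomial-size circuit (apply $U$, a single-qubit $Z$, $U^\dagger$, and another $Z$). Hence deciding whether a $\DQC 1$ computation accepts reduces, in polynomial time, to approximating $\operatorname{Tr}(V)/2^n$ to $1/\operatorname{poly}(n)$ additive error, i.e.\ to an instance of $\Trace$.

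The main obstacle is essentially bookkeeping rather than a deep technical hurdle: one must check that the cancellation $\operatorname{Tr}(Z_1)=0$ leaves a clean unitary trace, that the controlled-$U$ used in the Hadamard test can be expanded into standard $\DQC 1$-allowed gates without needing extra clean qubits (only the control bit is clean, and controlled versions of each gate of $U$ can be implemented with polynomial overhead without introducing additional clean qubits into the protocol), and that the resulting error parameters propagate correctly through the repetition/averaging step. Once these standard checks are made, both directions of the completeness claim follow immediately.
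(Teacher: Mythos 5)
Your proof is correct and is the standard argument for this result. The paper does not prove Theorem \ref{JordanShor} itself---it imports it by citation from Shor--Jordan (and Knill--Laflamme for the containment direction)---and your two directions, namely the Hadamard test with the clean qubit as control (giving acceptance probability $\tfrac12+\tfrac{1}{2^{n+1}}\operatorname{Re}\operatorname{Tr}(U)$, exactly the reformulation the paper quotes) and the hardness identity $\langle Z_1\rangle = 2^{-n}\operatorname{Tr}(Z_1 U Z_1 U^\dagger)$ exploiting $\operatorname{Tr}(Z_1)=0$, are precisely the arguments in that reference.
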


Indeed, this theorem can be reformulated as: given an $n$ qubit unitary $U$, there is a round of $\DQC 1$ computation which reveals a coin which gives heads with probability $\dfrac{1}{2} + \dfrac{1}{2}\dfrac{\Re Tr (U)}{2^n}$. Also, there is another similar computation which gives a coin with bias according to the imaginary part of the normalized trace.

Using these observations, we are ready to present the proof of the main theorem:

\begin{proof}
(of theorem \ref {mainDQC1}) The objective is find an efficient algorithm which given a ball permuting circuit $C$ over $n$ labels, outputs the description of a unitary $U$ over $m=\Poly(n)$ qubits such that $\dfrac{1}{2^m}Tr(U) =\dfrac{1}{\Delta(n)} \langle 123\ldots n | C | 123 \ldots n \rangle$, with $\Delta(n)=\Poly(n)$. Given this reduction using theorem \ref{JordanShor} we deduce that the additive approximation of the amplitude can be obtained by rounds of $\DQC 1$ computation. 

The basic idea is to encode permutations with strings of bits, perform the circuit $C$ on this encoded space, and take the trace of $C$ using a $\DQC1$ circuit. For ease of presentation, we will first present the proof using a simple encoding of permutations which turns out not to work, and later describe the more complex encoding which suffices for the proof.

Suppose we represent a permutation $\sigma \in S_n$ using $n\lceil\log n \rceil$ bits, i.e. each particle label in $[n]$ is represented using $\lceil \log n \rceil$ bits. 
Simulate each $X$ gate in $C$ with a quantum circuit which swaps the encoded numbers in a superposition. Since each gate only acts on $O(\log n)$ qubits, such a quantum circuit can be efficiently obtained from a universal gate set by the Solovay-Kitaev Theorem \cite{DawsonSolovayKitaev}. Let $U$ be the composition of these unitary circuits. The objective is to perform a $\DQC 1$ computation to obtain an approximation to $Tr(U)/D$, where $D$ is the dimension of the Hilbert space that $U$ is acting on. However one can easily see that $\operatorname*{Tr}(U)$ is not in general equal to $\operatorname*{Tr}(C)$ - because among the summands of $\operatorname*{Tr}(U)$ there are terms like $\langle b | U | b\rangle$, where $b$ is a string of bits with repeated labels (for example $|1 1 2 3 4\rangle$). Such terms do not appear in $\operatorname*{Tr}(C)$ because they are not valid encodings of permutations. In order to avoid the contribution of these terms, use $\dfrac{n(n-1)}{2}$ more (flag register) qubits, $f_{ij}, i< j \in [n]$. Then we add another term $T$ to the quantum circuit to obtain $U T$. The role of $T$ is simply to modify the flag registers in a way that the contribution of unwanted terms in the trace becomes zero: for each $i<j \in [n]$, using sequences of $CNOT$ gates, $T$ compares the qubits $(i-1) \lceil \log n\rceil +1$ to $i.\lceil \log n\rceil$ with the qubits $(i-1) \lceil \log n\rceil +1$ to $i.\lceil \log n\rceil$, bit by bit, and applies $NOT$ to the register $f_{i,j}$ if the corresponding bits are all equal to each other. Then $U T$ is fed into the $\Trace$ computation. Let's see what approximation to $ \langle 123\ldots n | C| 123 \ldots n \rangle$ we get in this case. Let $N:=n\lceil \log n\rceil+n(n-1)/2$. The trace $Tr(U)=\sum_{x\in \{0,1\}^N} \langle x|U|x \rangle$. Given the described construction, the term $\langle x | U | x\rangle= \langle \sigma | C |\sigma \rangle$, if and only if the label part of $x$ is the correct encoding of the permutation $\sigma$, and if $x$ is not a correct encoding of a permutation it gives $0$. There are $2^{n(n-1)/2}$ strings like $x$ which encode $\sigma$ correctly, therefore:

$$
\dfrac{Tr(U)}{2^N}= \dfrac{2^{n(n-1)/2}}{2^N} Tr (C)= \dfrac{n!}{2^{n \lceil \log n \rceil}} \langle 123\ldots n | C | 123 \ldots n \rangle.
$$

Using $\DQC1$ computations we can estimate the value of $Tr(U)/2^N$ to 1/poly additive error. This is almost what we want, but the problem is that the coefficient $\dfrac{n!}{2^{n \lceil \log n \rceil}}$ can be exponentially small, because $\log(n!)\approx n\log n- \Theta(n)$ by Stirling's approximation. Therefore the amplitude we are trying to compute ($\langle 123\ldots n | C | 123 \ldots n \rangle$) is exponentially suppressed in this model, so a 1/poly approximation to $1/2^N Tr(U)$ does not yield a 1/poly approximation to $\langle 123\ldots n | C | 123 \ldots n \rangle$. 

Taking a close look at the this coefficient, one can see that for any encoding of permutations with bit-strings, the proportionality constant appears as:

$$
\dfrac{n!}{dim{V}}
$$

\noindent where $V$ is the dimension of the Hilbert space that is used to encode permutations in it. In the latter example, we used $O (n \log n )$ bits to encode permutations of $n$ labels. Our problem arose because $2^{n\log n}$ is exponentially larger than $n!\approx (n/e)^n$. 

To fix this issue, we will need to use a more compressed encoding. More precisely, we need an encoding that uses $O(\log (n! \Poly(n))$ bits.
Moreover, in order to provide efficient quantum circuits, the code needs to be local, in the sense that in order to apply a swap, we just need to alter only $O(\log n)$ bits. Otherwise, it is not clear if it is possible to implement the encoded swaps efficiently with qubit quantum circuits.

To do this, we consider consider encoding permutations using $\lceil \log n! \rceil$ bits. Specifically, we consider an ordering of permutations (called the factorial number system, reviewed in Section \ref{factorial}) and represent each permutation using $\lceil \log n! \rceil$ bits. This encoding is extremely efficient, but it is not local, because one may need to rewrite all of the bits to perform an encoded swap. To overcome this issue, to perform an encoded swap on indices $i$ and $i+1$, we first extract $O(\log n)$ bits of information which encodes the values of the permutation at those locations.  We then apply the partial swap on the extracted entries (which is now manifestly local) and then convert the inefficient codes back to the compressed ones. In this manner we can approximate $\langle 123\ldots n | C | 123 \ldots n \rangle$ to 1/poly accuracy in the manner described above.

More precisely, we encode permutations using the factorial number system, described in the next Subsection \ref{factorial}. The basic idea is that once one has specified the first $k$ entries of the permutation, there are only $n-k$ choices for the next entry. Therefore, one can specify the next entry of the permutation by indicating which of these remaining $n-k$ elements to choose. In particular, we can represent a permutation $\sigma$ by a series of numbers $a_n,a_{n-1},\ldots a_2$, where each $a_i$ is a number from 0 to $i-1$ indicating which of the remaining items appears next in the permutation. (Note $a_1$ need not be included, since once you have specified the first $n-1$ entries of the permutation, you need not specify the last entry.)
The permutation is then represented by the number $N_{\sigma}=\sum_i a_i (i-1)!$, which ranges from $0$ to $n!-1$. 

This representation of a permutation is manifestly local - in order to swap two entries $i$ and $i+1$, one merely needs to perform some operation on $a_i$ and $a_{i+1}$. (This operation is slightly more complicated than just switching $a_i$ and $a_{i+1}$ due to an edge case where $a_i$ and $a_{i+1}$ have adjacent labels amongst the remaining labels, but as explained in the next subsection this is still local). Furthermore, one can easily extract $a_i$ in polynomial time from the number $N_{\sigma}$. To see this, if we let $r_i$ be the remainder of $N_\sigma / i!$, then $a_i$ is simply equal to quotient of $r_i/(i-1)!$. This is analogous to the fact that one can efficiently extract the $i$th digit base 10 of a number encoded in binary. Likewise, given new values of $a_i$ and $a_{i+1}$ one can easily update the value of $N_\sigma$ to its new value. 

One subtlety in this approach is that when extracting $a_i$ and $a_{i+1}$, we are very restricted in our use of ancillas. In particular, since $\DQC1$ circuits only have access to maximally mixed ancillas, we can only ever simulate the use of $O(\log n)$ pure ancillas. This is because, if we ensure the all zero string in the ancillas goes back to the all zero string at the end of the computation, then we can postselect them to be all 0's at the end of the $\DQC1$ computation. Since this occurs with 1/poly probability this is within our abilities. %Therefore, we need to show that when extracting the values at locations $i$ and $i+1$, we can get away with only using $O(\log n)$ ancillas. 

Therefore, in order to complete this argument, we will need to show that $a_i$ and $a_{i+1}$ can be extracted, and $N_{\sigma}$ be updated after the swap, using only $O(\log n)$ pure ancillas. In fact, we will show one can get away wth only $O(1)$ ancillas. This is because
%Therefore, in order for this argument to work, one merely needs to show that 
given an integer $N$, for a fixed\footnote{Since our swaps are specified ahead of time in the description of $C$, we can hard-code the numbers we divide by into the circuit.} integer $k$, it is possible to compute the quotient $q$ and remainder $r$ of $N/k$ using $O(1)$ ancillas. Indeed the grade-school long division algorithm suffices for this task, and uses only $2$ ancillas (we thank Luke Schaeffer for pointing this out to us). Suppose one wishes to compute the quotient $q$ and remainder $r$ of $N/k = qk + r$. To do so, simply compute how many times $k$ divides the first $\lceil \log k \rceil$ bits of $N$, store this as the first bit $q_i$ of the quotient, and subtract $q_i 2^{\lceil \log N \rceil - \lceil \log k\rceil}$ from $N$. Repeat. One can easily see that since we're dividing in binary, for every bit we compute of the quotient (with the possible exception of the first bit), the leading bit of $N$ is set to zero.  Therefore we can reuse this space to store an additional bit of $q$. At the end of the computation one has $q$ and $r$ stored in $\lceil \log q\rceil + \lceil \log k\rceil  \leq \log N +2$ bits. We have therefore computed the quotient and remainder reversibly in place with only two ancillas. This suffices to prove one can extract $a_i$ and $a_{i+1}$ reversibly from $N_\sigma$ using only $O(1)$ ancillas. The proof that one can update $N_\sigma$ in-place reversibly after altering $a_i$ and $a_{i+1}$ follows analogously by running the above operation in reverse. Therefore this encoding of permutations can be used to estimate $\langle 123\ldots n | C | 123 \ldots n \rangle$ to 1/poly error in $\DQC1$.

Note that the construction is based on adjacent swaps only. If in the description of $C$ nonadjacent swaps are implemented, we can simulate these swaps by adjacent ones. We construct $U$ by approximating each adjacent $X$ gate in $C$.  Each such gate alters $O(\log n)$ bits and because of the Solovay-Kitaev theorem \cite{DawsonSolovayKitaev}, there exists a $\Poly(n, \log 1/\epsilon)$ size circuit that approximates each $X$ gate within error $\epsilon$.
\end{proof}

\subsection{Factorial number system}
\label{factorial}

The encoding of each permutation, $\sigma(1), \sigma(2),\ldots, \sigma(n)$ ($\sigma \in S_n$), is accomplished by a walk from root to each leaf of the following tree, $T_n$: consider a tree with its root located at node $0$, as we mark it to be distinct. Let node $0$ have degree $n$, with its children marked with numbers $1,2,3,\ldots, n$, from left to right. Denote these nodes by layer $1$. Let each node of layer $1$ have $n-1$ children, and label each child of node $i$ in layer $1$, by numbers $[n]-\{i\}$, in an increasing order from left to right. Construct the tree inductively, layer by layer: each node $k$ in layer $j$ have $n-j$ children, and the children labeled with numbers $[n]-L_k$. Where $L_k$ is the set of labels located on the path from node $0$ to node $k$. Therefore, nodes of layer $n$ have no children. The number of leaves of the tree is $n!$. For each leaf there is a unique path from root down to the leaf, and the indexes from top to down represent a permutation. This is because the indexes of each path are different from each other. Also each permutation $\sigma$ is mapped to a unique path in this tree: start from node $0$, pick the child with index $\sigma(1)$, then among the children of $\sigma(1)$, pick the child with index $\sigma(2)$ and so on. Therefore, this establishes a one-to-one map between the paths on $T_n$ and permutations of labels in $[n]$.

The next step is to provide a one-to-one mapping from the paths on the graph to bit strings of length $\log n! + O(n)$. First, label the edges of $T_n$ by the following. For each node of degree $p$, with children labeled with $x_0< x_1< \ldots< x_{p-1}$, label the edge incident to $x_0$ by $0$, the edge incident to $x_1$ by $1$, and so on. Given these edge labels, The construction is simple: represent each path with the bit string $a_n a_{n-1} \ldots a_0$, where $a_j$ is a bit string of length $\lceil \log j\rceil =\log j + O(1)$, is the binary representation of the label of the edge used in the $j$'th walk.

Note that the factorial numbers have local properties under swap. In order to apply a swap on this encoding one needs to alter only $O(\log n)$ bits. Suppose that the permutations $\sigma=\sigma(1), \sigma(2), \ldots, \sigma(k), \sigma(k+1), \ldots, \sigma(n)$ and $\pi=\sigma(1), \sigma(2), \ldots, \sigma(k+1), \sigma(k), \ldots, \sigma(n)$ are represented by the binary encoding $X=a_1, a_2, \ldots a_k, a_{k+1}\ldots, a_n$ and $Y=b_1, b_2, \ldots b_k, b_{k+1}\ldots, b_n$, respectively. Clearly, $\pi$ can be obtained from $\sigma$ by swapping the element $k$ and $k+1$. Notice that $a_1=b_1, a_2=b_2, \ldots , a_{k-1}=b_{k-1}$. This is because the corresponding path representations of the two permutations on $T_n$ walk through the same node at the $k-1$'th walk. Also $a_{k+2}=b_{k+2}, \ldots , a_n=b_n$. This is because the subtrees behind the $k+2$'th layer nodes in the two paths are two copies of the same tree, since their nodes consist of same index sets. Therefore, $X$ and $Y$ differ only at $a_k, a_{k+1}$ and $b_k, b_{k+1}$ substrings. As a consequence of these observations, the bit-string codes for two permutations that differ in adjacent labels only, are different in $O (\log n)$ bits.

%%%%%%%%%%%%%%%%%%%%%%%%%%%%%%%%%%%%%%%%%%%%%%%%%%%%%%%%%%
\section{Review of Exchange Interactions}
\label{ExchangeReview}

We show how to use arbitrary initial states to obtain a programmable $\BQP$ universal model. This is done by demonstrating a reduction from the exchange interaction model of quantum computation which is already known to be $\BQP$ universal.

Here, we first briefly review the exchange interaction model \cite{kempe2001theory, bacon2001encoded,kempe2001encoded}, and then describe how to do a reduction from the computation in this model to the ball permuting model of computing on arbitrary initial states. Next, we sketch the proof of universality for the exchange interaction model, which in turn results in $\BQP$ universality of ball permuting model on arbitrary initial states. 

Consider the Hilbert space $(\C^2)^{\tensor n}=:\C \{0,1\}^n$, with binary strings of length $n$, $\mathcal{X}_n:=\{|x_1\rangle \tensor |x_2\rangle \tensor \ldots \tensor |x_n\rangle: x_j \in \{0,1\} \}$, as the orthonormal computational basis. Exchange interactions correspond to the unitary gates $T (\theta, i, j)=\exp (i \theta E_{(i,j)})=\cos\theta I + i \sin \theta E_{(i, j)}$, where the operator, $E_{(i,j)}$, called the exchange operator, acts as:

$$
E= \dfrac{1}{2}(I + \sigma_x \tensor \sigma_x + \sigma_y \tensor \sigma_y + \sigma_z \tensor \sigma_z)
$$

\noindent on the $i, j$ slots of the tensor product, and acts as identity on the other parts. More specifically, $E$ is the map:

\begin{eqnarray*}
&|00\rangle&\rightarrow \hspace{3mm}|00\rangle,\\
&|01\rangle&\rightarrow \hspace{3mm}|10\rangle,\\
&|10\rangle&\rightarrow \hspace{3mm}|01\rangle,\\
&|11\rangle&\rightarrow \hspace{3mm}|11\rangle.
\end{eqnarray*}

The action of $E_{i j}$ is very similar to the permuting operator $L_{(i, j)}$, except that $E$ operates on bits rather than the arbitrary labels of $[n]$. These operators are also known as the Heisenberg couplings, related to the Heisenberg Hamiltonian for spin-spin interactions.

\section{Reduction from Exchange Interactions}
\label{bqpuniversality}

Define $\mathcal{X}^k_n :=\{|x\rangle: x\in \{0,1\}^n, |x|_H = k\}$ to be the subset of $\mathcal{X}_n$, containing strings of Hamming distance $k\leq n$. Here, $|.|_H$ is the Hamming distance. Also, let $\C \mathcal{X}^k_n$ be the corresponding Hilbert space spanned by these basis.

\begin{theorem}
Given a description of $U=T (\theta_m, i_m, j_m)\ldots T (\theta_2, i_2, j_2) T (\theta_1, i_1, j_1)$, and an initial state $|\psi\rangle \in \C \mathcal{X}^k_n$, there exists an initial $|\psi'\rangle\in \C S_n$, and a ball permuting circuit, with $X$ operators, that can sample from the output of $U|\psi'\rangle$, exactly.
\label{hrd}
\end{theorem}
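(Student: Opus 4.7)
The plan is to extend the three-register symmetrization sketched in Section~\ref{partialclassification1} to arbitrary $n$ and use it to embed the exchange-interaction model into $\mathbb{C} S_n$ so that partial-swap gates simulate exchange gates. For a qubit basis state $|x\rangle \in \mathcal{X}^k_n$, let $P_0(x), P_1(x)$ be the sets of positions holding $0$ and $1$ respectively, and fix any canonical $\tau_x \in S_n$ that sends $P_0(x)$ bijectively to $\{1,\dots,n-k\}$ and $P_1(x)$ bijectively to $\{n-k+1,\dots,n\}$. Define
\[
\iota(|x\rangle) \;:=\; \frac{1}{\sqrt{(n-k)!\,k!}}\sum_{\pi\ \text{compatible with}\ x} |\pi\rangle,
\]
where ``$\pi$ compatible with $x$'' means $\pi(P_0(x)) = \{1,\dots,n-k\}$ and $\pi(P_1(x)) = \{n-k+1,\dots,n\}$, and extend by linearity. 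A short count shows $\iota$ is an isometric embedding $\mathbb{C}\mathcal{X}^k_n \hookrightarrow \mathbb{C} S_n$ whose image $W$ is the span of the orbit of each $\tau_x$ under the label-relabeling action of $H := S_{\{1,\dots,n-k\}} \times S_{\{n-k+1,\dots,n\}}$; the $n=3$ case reproduces the examples stated in Section~\ref{partialclassification1}.

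The central lemma is that on $W$, the adjacent partial swap $X(\theta, t)$, viewed as $\cos\theta\,I + i\sin\theta$ times the position-swap of registers $t, t+1$, acts exactly as the exchange gate $T(\theta, t, t+1)$ acts on $\mathbb{C}\mathcal{X}^k_n$. Since swapping the contents of two registers commutes with $H$-relabeling, it suffices to check the effect on a single term of $\iota(|x\rangle)$. If $x_t = x_{t+1}$, swapping registers $t$ and $t+1$ permutes within the compatible family, so $\iota(|x\rangle)$ is fixed and $X(\theta, t)$ multiplies by $e^{i\theta}$, matching $T(\theta,t,t+1)|x\rangle = e^{i\theta}|x\rangle$. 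If $x_t \neq x_{t+1}$, the swap bijects the compatible family for $x$ with that for $x' := E_{(t,t+1)}|x\rangle$, giving $X(\theta, t)\iota(|x\rangle) = \cos\theta\,\iota(|x\rangle) + i\sin\theta\,\iota(|x'\rangle) = \iota(T(\theta,t,t+1)|x\rangle)$. In particular $W$ is invariant under every adjacent partial swap.

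For non-adjacent exchange gates $T(\theta, i, j)$ I would use the conjugation identity $T(\theta, i, j) = \Pi\,T(\theta, t, t+1)\,\Pi^{-1}$, where $\Pi$ is any permutation of registers carrying $(t, t+1)$ onto $(i, j)$; each such $\Pi$ is a product of $O(n)$ adjacent full swaps, and each adjacent full swap is $X(\pi/2, \cdot)$ up to the overall phase $i$. Concatenating the $m$ resulting blocks produces an adjacent $X$-circuit $C$ of size $O(mn)$ with $C\,\iota(|\psi\rangle) = \iota(U|\psi\rangle)$ for every $|\psi\rangle \in \mathbb{C}\mathcal{X}^k_n$. Setting $|\psi'\rangle := \iota(|\psi\rangle)$ and measuring $C|\psi'\rangle$ in the permutation basis yields a permutation $\tau$; the decoded bit-string $x(\tau)_i := \mathbf{1}[\tau(i) > n-k]$ is distributed exactly as a computational-basis measurement of $U|\psi\rangle$, giving the claimed exact sampling.

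The main obstacle is purely bookkeeping: one must formalize the orbit structure carefully so that the ``$x_t \neq x_{t+1}$'' case of the key lemma is seen as a bijection of $H$-orbits rather than a mere set bijection, and verify that the $i^m$ global phases accumulated by the conjugating adjacent full swaps are harmless for the output distribution. No new ingredient is needed beyond routine combinatorial identities for the action of $H$ on $S_n$, together with the fact, established in Section~\ref{model}, that a non-adjacent swap decomposes as a polynomial-length product of adjacent swaps.
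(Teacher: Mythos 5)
Your proposal is correct and follows essentially the same route as the paper's proof in Appendix \ref{bqpuniversality}: a symmetrized (uniform-superposition) encoding of each Hamming-weight-$k$ basis string into $\C S_n$, the key observation that a partial swap acts on the encoded subspace exactly as the corresponding exchange gate (trivially when the two bits agree, as a bijection of compatible families when they differ), and decoding by thresholding the measured labels. The only cosmetic differences are your left/right-action conventions and your explicit reduction of non-adjacent gates to adjacent ones, which the paper handles implicitly by allowing non-adjacent $X$ gates directly.
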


\begin{proof}
We show how to encode any state of $\C \mathcal{X}^k_n$ with states of $\C S_n$. Let $S_{k, n-k}$ be the subgroup of $S_n$ according to the cycles $\{1,2,\ldots, k\}$ and $\{k+1, k+2, \ldots, n\}$, and denote $|\phi_0\rangle = \dfrac{1}{\sqrt{k! (n-k)!}}\sum_{\sigma \in S_{k, n-k}} R(\sigma) |123\ldots n\rangle$ be an encoding of the state $|1^k 0^{n-k}\rangle$. Here, $1^k$ means $1$'s repeated for $k$ times. This is indeed a quantum state that is symmetric on each the labels of $\{1,2,\ldots, k\}$ and $\{k+1,k+2,\ldots, n\}$, separately.  Any string of Hamming distance $k$ can be obtained by permuting the string $0^k 1^{n-k}$. For any such string $x$ let $\pi_x$ be such a permutation, and encode $|x\rangle $ with $|\phi(x)\rangle := L_{\pi_x} |\phi_0\rangle$. Therefore, given any initial state $|\psi\rangle := \sum_{x \in \mathcal{X}^k_n} \alpha_x |x\rangle$, pick an initial state $|\psi'\rangle := \sum_{x \in \mathcal{X}^k_n} \alpha_x |\phi(x)\rangle$ in $\C S_n$. Now, given any unitary $U=T (\theta_m, i_m, j_m)\ldots T (\theta_2, i_2, j_2) T (\theta_1, i_1, j_1)$ with $T$ operators, pick a corresponding ball permuting circuit $U'=X (\theta_m, i_m, j_m)\ldots X (\theta_2, i_2, j_2) X (\theta_1, i_1, j_1)$. It can be confirmed that for any $i< j \in [n]$ if $E_{(i, j)} |x\rangle = |x'\rangle$, then $E_{(i, j)} |\phi(x)\rangle = |\phi(x')\rangle$. From this, if $U|\psi\rangle=\sum_{x \in \mathcal{X}^k_n} \beta_x |x\rangle$, then $U'|\psi'\rangle=\sum_{x \in \mathcal{X}^k_n} \beta_x |\phi(x)\rangle$.

 It remains to show that given access to the output of $U'|\psi'\rangle$, one can efficiently sample from $U|\psi\rangle$. Suppose that $U'|\psi'\rangle$ is measured in the end, and one obtains the permutation $\sigma= (\sigma(1),\sigma(2),\ldots, \sigma (n))$. Then, by outputting a string $x$ by replacing all the labels of $\{1,2,\ldots, k\}$ in $\sigma$ with ones and the other labels with zeros the reduction is complete. The probability of obtaining any string $x$ with this protocol is exactly equal to $| \langle x |U|\psi\rangle |^2$.
\end{proof}

Universal quantum computing is possible by encoding a qubit using three spin $1/2$ particles. Suppose that the following initial states are given in $\C \mathcal{X}^1_3$:

$$
|0_L\rangle := \dfrac{|010\rangle-|100\rangle}{\sqrt{2}}
$$

and,

$$
|1_L\rangle :=\dfrac{2|001\rangle-|010\rangle-|100\rangle}{\sqrt{6}},
$$

\noindent
as some logical encoding of a qubit using three quantum digits. We claim that there is a way to distinguish $|0_L\rangle$ from $|1_L\rangle$ with perfect soundness. These mark the multiplicity space of the space with half $Z$ direction angular momentum and half total angular momentum. First, we should find a way to distinguish between these two states using measurement in the computational basis. Suppose that we have access to $k$ copies of an unknown quantum state, and we have the promise that it is either $|0_L\rangle$ or $|1_L\rangle$, and we want to see which one is the case. The idea is to simply measure the third bit of each copy, and announce it to be $0_L$ if the results of the $k$ measurements are all $0$ bits. If the state has been $|0_L\rangle$, the probability of error in this decision is zero, because $|0_L\rangle= \dfrac{|01\rangle-|10\rangle}{\sqrt{2}}\tensor |0\rangle$. Otherwise, we will make a wrong decision with probability at most $(1/3)^k$, which is exponentially small. This is because the probability of reading a $0$ in the third bit of $|1_L\rangle$ is $1/3$. 

\begin{theorem}
There is a way of acting as encoded $SU(2)$ on the span of $\{|0_L\rangle,|1_L\rangle \}$, and also $SU(4)$ on the concatenation of two encoded qubits. 
\end{theorem}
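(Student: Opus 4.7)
My plan is to handle the two universality statements in the theorem separately. The single-qubit case reduces to a short Lie-bracket computation on the encoded $2$-dimensional subspace; the two-qubit case requires a leakage-free exchange-based construction, which I would import from prior work.

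For encoded $SU(2)$ on $V := \mathrm{span}\{|0_L\rangle,|1_L\rangle\}$: I would first observe that $|0_L\rangle = \tfrac{1}{\sqrt{2}}(|01\rangle - |10\rangle)\otimes|0\rangle$ is antisymmetric on qubits $1,2$ and hence $E_{12}|0_L\rangle = -|0_L\rangle$, while every computational-basis component of $|1_L\rangle$ is symmetric on qubits $1,2$, so $E_{12}|1_L\rangle = +|1_L\rangle$. Thus $E_{12}$ restricts on $V$ to $-\sigma_z$ (in particular, $V$ is stable under $E_{12}$). A direct computation of $E_{23}|0_L\rangle$ and $E_{23}|1_L\rangle$, re-expressed in the encoded basis, gives $E_{23}|_V = \tfrac{1}{2}\sigma_z - \tfrac{\sqrt{3}}{2}\sigma_x$, so $E_{23}$ also stays inside $V$. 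Since the two restrictions are non-commuting Hermitian operators on a $2$-dimensional space, the Lie algebra they generate under commutators (yielding a $\sigma_y$ component) is all of $\mathfrak{su}(V)$, and exponentiating $iE_{12}$ and $iE_{23}$ produces the full encoded $SU(2)$ on $V$.

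For encoded $SU(4)$ on two concatenated encoded qubits: I would use six physical qubits split into two blocks of three and let $V\otimes V \subset (\mathbb{C}^2)^{\otimes 6}$ denote the four-dimensional encoded subspace. Encoded single-qubit rotations on each factor follow immediately from the single-qubit analysis applied within each block. To obtain an entangling two-qubit gate, I would invoke the explicit closed-form sequences of exchange interactions between the two blocks of \cite{divincenzo2000universal, fong2011universal}, which realize an encoded CNOT (or equivalent entangler) on $V\otimes V$ and preserve $V\otimes V$. Together with the encoded single-qubit rotations, standard universality of $\{\mathrm{CNOT}\}\cup SU(2)$ for $SU(4)$ completes the proof.

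The main technical obstacle is the leakage control in the two-qubit step: a single inter-block exchange $E_{ij}$ with $i$ in one block and $j$ in the other generically maps $V\otimes V$ partially into the orthogonal higher-total-spin sectors of $(\mathbb{C}^2)^{\otimes 6}$, so one cannot simply take $\exp(i\theta E_{ij})$ as a two-qubit gate. The nontrivial content of \cite{divincenzo2000universal, fong2011universal} is the exhibition of finite exchange sequences whose net action restricts to an entangling unitary on $V\otimes V$ without leakage; I would cite those constructions rather than rederive them here.
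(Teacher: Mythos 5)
Your proposal is correct. For the encoded $SU(2)$ half you carry out explicitly what the paper's sketch only gestures at (``look at the Lie algebra of the exchange operators''): the restrictions $E_{12}|_V=-\sigma_z$ and $E_{23}|_V=\tfrac12\sigma_z\pm\tfrac{\sqrt3}{2}\sigma_x$ are leakage-free, traceless, and non-commuting, hence generate $\su(V)$; the sign of the $\sigma_x$ term depends only on the sign convention for $|1_L\rangle$ (which the paper itself flips between the main text and the appendix) and is immaterial. This mirrors the computation the paper performs for the permutation-model analogue $V_{(2,1)}$ in its classification appendix. Where you genuinely diverge is the $SU(4)$ half. The paper's route is non-constructive: it cites Kempe et al.\ for the facts that exchange interactions generate the full special unitary group on each invariant subspace $V(s,m)$ and that actions on distinct subspaces can be decoupled except for the conjugate pairs $V(s,m)\leftrightarrow V(s,-m)$, and then observes that the two-logical-qubit space lies entirely in a single-sign $m$ sector, so no conjugate pairing obstructs realizing all of $SU(4)$ there; no gate sequence is ever exhibited. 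You instead import the explicit leakage-free entangling sequences of DiVincenzo et al.\ and Fong--Wandzura and finish with $\{\mathrm{CNOT}\}\cup SU(2)$ universality. Both routes delegate the hard step to prior work: yours yields an explicit circuit --- exactly the ``actual construction of a CNOT'' that the paper points to right after its own proof as the constructive alternative --- while the paper's decoupling argument is the one that scales to its representation-theoretic classification of arbitrary initial states. Your identification of inter-block leakage out of $V\otimes V$ as the essential difficulty is precisely the issue the paper's decoupling language addresses, so nothing substantive is missing from your account.
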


\begin{proof}(Sketch) according to the analysis of \cite{divincenzo2000universal, kempe2001theory}, one can look at the Lie algebra of the exchange operators to find encoded $\su(2)$ algebra on the encoded qubit. Also, we need to take enough commutations such that the action of the designed operators annihilates the two one dimensional spaces spanned by $|000\rangle$ and $|111\rangle$. The authors of \cite{kempe2001theory} prove that there is a way to act as $SU(V(s,m))$ on each invariant subspace $V(s,m)$. Here $V(s,m)$ is the subspace corresponding to total angular momentum $s$ and $z$ direction angular momentum $m$. Moreover, they prove that the action on two subspaces $V(s_1, m_1)$ and $V(s_2, m_2)$ can be decoupled, unless $s_1=s_2$, and $m_2=-m_1$, where the two subspaces are isomorphic. It is almost enough to prove that the state $|0_L\rangle \otimes |0_L\rangle$ is contained in non-isomorphic invariant subspaces. However, this is also true, since $|0_L\rangle \otimes |0_L\rangle$ is completely contained in subspaces with $m=2$.
\end{proof}

See \cite{bauer2014universality,kempe2002exact,wu2002power} for similar models with encoded universality. Therefore, this is a nonconstructive proof for the existence of an encoded entangling quantum gate; CNOT for example. Indeed, the actual construction of a CNOT is given in \cite{divincenzo2000universal} . Notice that for a decision problem, one can formulate quantum computation in such a way that only one qubit needs to be measured in the end, and this can be done by distinguishing $|0_L\rangle$ and $|1_L\rangle$ using measurement in the computational basis. The probability of success in distinguishing between the two bits can also be amplified by just repeating the computation for polynomial number of times, and taking the majority of votes. Also, taking the majority of votes can be done with encoded CNOTs and single qubits gates on a larger circuit, and without loss of generality we can assume that one single measurement on one single qubit is sufficient.

%%%%%%%%%%%%%%%%%%%%%%%%%%%%%%%%%%%%%%%%%%%%%%%%%%%%%%%%%%
\section{Partial Classification of Quantum Computation on Different Initial States}
\label{classification}

In the following, it is proved that the ball permuting gates act densely on invariant subspaces corresponding to Young tableaus with two rows or two columns. The proof is based on the bridge lemma and decoupling lemma of reference \cite{aharonov2011bqp}. As we discuss, conditioned on the existence of a bridge operator, and decoupled dense action on two orthogonal subspace of different dimensionality, the bridge lemma glues the two subspaces into a larger subspace with dense action on it. Also, the decoupling lemma decouples action on two orthogonal subspaces of different dimensionality, given dense action on each of them. Consulting with \cite{kuperberg2011denseness}, it is conceivable that these two lemmas have natural generalizations to more than two subspaces and subspaces that have equal dimensionality. We conjecture that using these tools one can prove that the action of ball permuting gates is dense on all invariant subspaces of the symmetric group, even for those which correspond to Young diagrams of more than two rows/columns. We leave this investigation to further work. 

In this section, the Lie algebra and the unitary Lie group generated by $X$ operators are used interchangeably. The Hilbert space $\C S_n$ has the decomposition:

$$
\C S_n \cong \bigoplus_{\lambda \vdash n} V_\lambda \otimes X_\lambda
$$

Let $G$ be the unitary group generated by these $X(\theta,k)=\exp (i\theta L_{(k,k+1)}).$ operators. As described earlier, the space tangent to the identity element of $G$ is a Lie algebra, $g$, which contains $L_{(k,k+1)}$ for all $k\in [n-1]$, and is close under linear combination over $\mathbb{R}$, and the Lie commutator $i [\cdot, \cdot]$. The objective is to show that for any $\lambda\vdash n$ with two rows or two columns, and any element $U$ of $SU(V_\lambda)$, there is an element of $G$ that is arbitrarily close to $U$.

The proof is presented inductively. First of all, for any $n$, the irreps $V_n$ and $V_{1,1,1,\ldots, n}$ are one dimensional, and the action of $x \in G$ is to add an overall phase. However, observing the structure of YY basis for these irreps, the action of $G$ on the joint blocks $V_n \oplus V_{(1,1,1, \ldots, 1)}$ cannot be decoupled, and the projection of $G$ onto these subspaces is diagonal, and moreover isomorphic to the group $e^{i\theta}\times e^{-i \theta}: \theta \in \mathbb{R}$. Intuitively, these are Bosonic  and Fermionic subspaces, where an exchange $L_{(k,k+1)}$ of particles results in a $+1$ and $-1$ overall phase, respectively. 

For $n=2$, the only invariant subspaces are $V_2$ and $V_{(1,1)}$, and we know the structure of these irreps from the last paragraph:

$$
\C S_2\cong V_2 \oplus V_{(1,1)}, \hspace{1cm} G \twoheadrightarrow e^{i\theta}\times e^{-i \theta}: \theta \in \mathbb{R}.
$$

For $n=3$, the decomposition is according to:

$$
\C S_3 \cong V_{3} \oplus V_{(1,1,1)} \oplus V_{(2,1)}\otimes X(2).
$$

Here, $X(2)$ is a two dimensional multiplicity space. There are two standard $(2,1)$ tableaus and therefore $V_{(2,1)}$ is also two dimensional. Observing the YY basis the two generators $L_{(1,2)}$ and $L_{(2,3)}$ take the matrix forms:

$$
L_{(1,2)} = 
 \begin{pmatrix}
  1 &0 \\
  0&-1
 \end{pmatrix},
$$

\noindent and,

$$
L_{(2,3)} = 
 \begin{pmatrix}
  -1/2 &\sqrt{3}/2 \\
  \sqrt{3}/2&1/2
 \end{pmatrix}.
$$

\noindent The basis of the matrix are marked with the two standard Young tableaus of shape $(2,1)$. The first basis corresponds to the numbering $(1,2; 3)$ and the second one corresponds to $(1,3;2)$. Here, the rows are separated by semicolons. The following elements of the Lie algebra $g$ generate $\su(V_{(2,1)})$ and annihilate the two Bosonic and Fermionic subspaces:

$$
\dfrac{1}{2\sqrt{3}}[L_{(1,2)}, [L_{(1,2)},L_{(2,3)}]]= 0 \oplus 0 \oplus \sigma_x\otimes I,
$$

$$
\dfrac{i}{\sqrt{3}}[L_{(1,2)}, L_{(2,3)}]= 0 \oplus 0 \oplus \sigma_y\otimes I,
$$

\noindent and,

$$
\dfrac{1}{6}[[L_{(1,2)},[L_{(1,2)}, L_{(2,3)}]],[L_{(1,2)}, L_{(2,3)}]]= 0 \oplus 0 \oplus \sigma_z \otimes I.
$$

This implies the denseness of $G$ in $1 \times 1 \times SU(V_{(2,1)})$. Therefore, we obtain a qubit coupled to the multiplicity space, placed in a superposition of the one dimensional Bosonic and Fermionic subspaces. So, projecting onto a subspace like $V_{(2,1)}\otimes |\psi\rangle$, for $|\psi\rangle \in X(2)$, we obtain a qubit.

We use this result as the seed of an induction. The upshot is to add boxes to $(2,1)$ one by one, in a way that the partitions remain with two rows or two columns. At each step, we use the branching rule to combine the blocks together to larger and larger special unitary groups. In the course of this process, we use two important tools, called the bridge lemma, and decoupling lemma:

\begin{lemma}
(Aharonov-Arad \cite{aharonov2011bqp}) let $A$ and $B$ be two orthogonal subspaces, with \textit{non-equal} dimensions, $\dim A < \dim B$:

\begin{itemize}
\item (Bridge) if there is some state $|\psi\rangle \in A$, and a (bridge) operator $V \in SU (A \oplus B)$, such that the projection of $V |\psi\rangle$ on $B$ is nonzero, then the combination of $SU(A)$, $SU(B)$, and $V$ is dense in $SU(A \oplus B)$.

\item (Decoupling) suppose for any elements $x\in SU(A)$ and $y \in SU(B)$, there are two corresponding sequences $I_x$ and $I_y$ in $G$, arbitrarily close to $x$ and $y$, respectively, then the action of $G$ on $A \oplus B$ is decoupled, \i.e., $SU(A)\times SU(B)\subseteq G$. 

\end{itemize}
\label{bridge}
\end{lemma}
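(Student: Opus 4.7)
The plan for both parts is a Lie-algebra argument: pass to the Lie algebra $\mathfrak{k}$ of the closed subgroup generated by the hypothesized elements, decompose $\su(A\oplus B)$ into $SU(A)\times SU(B)$-isotypic components, and locate $\mathfrak{k}$ among them. The key structural facts I will use are (i) the orthogonal decomposition $\su(A\oplus B)=\su(A)\oplus\su(B)\oplus\mathbb{R} D_0\oplus\mathfrak{m}$, where $D_0$ is a fixed trace-zero block-diagonal element and $\mathfrak{m}$ is the space of off-block-diagonal anti-Hermitian operators, and (ii) that $\mathfrak{m}\cong A\otimes B^{*}$ is irreducible under the adjoint action of $SU(A)\times SU(B)$ (because $A$ and $B^*$ are the defining irreps of the two factors, and the tensor product of irreps of a product group is an irrep).

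For the Bridge Lemma, I would first note that $\mathfrak{k}\supseteq\su(A)\oplus\su(B)$ by hypothesis; the task is therefore to produce even one nonzero element of $\mathfrak{m}\cap\mathfrak{k}$. To do so, form the one-parameter family $V(u_A\oplus I)V^{-1}(u_A\oplus I)^{-1}$ with $u_A=\exp(\epsilon Y_A)\in SU(A)$. This lies in the closed group for all $\epsilon$, so differentiating at $\epsilon=0$ shows that $\mathrm{Ad}_V(Y_A\oplus 0)-(Y_A\oplus 0)\in\mathfrak{k}$. Writing $V$ in block form with $V_{BA}\neq 0$ (guaranteed by the hypothesis that $V|\psi\rangle$ has nonzero projection on $B$), a direct computation exhibits a generic $Y_A$ for which the off-diagonal part of this expression is nonzero; if $\dim A=1$ one instead uses $Y_B\in\su(B)$ together with $V_{AB}\neq 0$. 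Once a single nonzero vector of $\mathfrak{m}$ lies in $\mathfrak{k}$, the $SU(A)\times SU(B)$-orbit of that vector spans all of $\mathfrak{m}$ by irreducibility, so $\mathfrak{m}\subseteq\mathfrak{k}$. Finally $[\mathfrak{m},\mathfrak{m}]$ contains a nonzero multiple of $D_0$, giving $\mathfrak{k}=\su(A\oplus B)$ and hence denseness of the closed subgroup in $SU(A\oplus B)$.

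For the Decoupling Lemma, let $K$ be the closure of the restriction $G|_{A\oplus B}$ inside $U(A)\times U(B)$, with $\mathfrak{k}$ its Lie algebra modulo overall phases; the hypothesis forces the two projections $\pi_A:\mathfrak{k}\to\su(A)$ and $\pi_B:\mathfrak{k}\to\su(B)$ to be surjective. The kernel $\mathfrak{n}=\ker\pi_A$ lies in $\{0\}\oplus\su(B)$, and $\pi_B(\mathfrak{n})$ is an ideal of $\su(B)$: for $(0,Y)\in\mathfrak{n}$ and any $X'\in\su(B)$, lift $X'$ to some $(X,X')\in\mathfrak{k}$ and bracket to get $(0,[X',Y])\in\mathfrak{n}$. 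Simplicity of $\su(B)$ (for $\dim B\geq 2$) then forces $\pi_B(\mathfrak{n})$ to be either $0$ or all of $\su(B)$. In the latter case, $\{0\}\oplus\su(B)\subseteq\mathfrak{k}$ together with surjectivity of $\pi_A$ immediately yields $\mathfrak{k}=\su(A)\oplus\su(B)$, as desired. In the former case $\pi_A$ is injective on $\mathfrak{k}$, so $\mathfrak{k}$ is the graph of a Lie-algebra isomorphism $\su(A)\xrightarrow{\sim}\su(B)$, which contradicts $\dim A<\dim B$ by the classification of simple Lie algebras.

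The main obstacle I expect is the irreducibility claim in the Bridge Lemma: $\mathfrak{m}$ must be irreducible as a real representation (not merely as a complex representation) of the compact group $SU(A)\times SU(B)$. This is where $\dim A\neq\dim B$ enters nontrivially, since it ensures that at most one of $A,B$ can be the self-dual $SU(2)$ standard representation, so that $A\otimes B^*$ is not self-conjugate and its realification remains irreducible. The small-dimension edge cases ($\dim A=1$, or $\dim B\leq 2$ so that $\su(B)$ is abelian or zero) require separate but elementary handling.
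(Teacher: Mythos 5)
The paper does not actually prove Lemma~\ref{bridge}; it cites it directly to Aharonov--Arad \cite{aharonov2011bqp} and only offers informal intuition (pointing out, for instance, that when $\dim A=\dim B$ the coupled action $x\times\bar{x}$ cannot be decoupled). So your blind proof cannot be ``compared to the paper's proof'' --- but it is, on its own terms, a sound Lie-theoretic reconstruction of the result, and it is close in spirit to the argument in the cited reference. Your key structural insight --- that $\mathfrak{m}\cong(A^*\otimes B)_{\mathbb{R}}$ is irreducible as a \emph{real} representation of $SU(A)\times SU(B)$ precisely because the Frobenius--Schur type of $A^*\otimes B$ is never real when $\dim A\neq\dim B$ (the only bad pairs being $(1,1)$ and $(2,2)$, both of which are excluded) --- is exactly where the hypothesis earns its keep in the Bridge Lemma, and you are right to flag it.

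Two places deserve a little more care. In the Bridge Lemma, the step ``a direct computation exhibits a generic $Y_A$ for which the off-diagonal part is nonzero'' is cleaner argued softly: if $\mathfrak{k}\cap\mathfrak{m}=0$ then, by $SU(A)\times SU(B)$-invariance of $\mathfrak{k}$, the closed group is contained in the block-diagonal $S(U(A)\times U(B))$; since $V$ normalizes the identity component and $\dim A\neq\dim B$ prevents a normalizing element from swapping the blocks, $V$ itself must be block diagonal, contradicting $V_{BA}\neq 0$. This avoids worrying about pathological $V$'s for which the explicit commutator derivative vanishes for every $Y_A$. In the Decoupling Lemma, working ``modulo overall phases'' requires one extra line: if the ideal $\mathfrak{n}'\subseteq\mathfrak{u}(B)$ surjects onto $\su(B)$, then $[\mathfrak{n}',\mathfrak{n}']\subseteq\su(B)$ still surjects onto $\su(B)=[\su(B),\su(B)]$, so in fact $\su(B)\subseteq\mathfrak{n}'$ on the nose, not merely modulo phase. (Also, a small slip: for $\dim B=2$, $\su(B)\cong\su(2)$ is simple and non-abelian, not abelian; but since $\dim A<\dim B$ forces $\dim B\geq 2$, the $\dim B\leq 1$ degenerate cases never arise here.) With these touch-ups your proof is complete.
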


See \cite{aharonov2009polynomial,aharonov2007polynomial} for more similar results. Intuitively, what bridge lemma says is that given two subspaces, with one of them larger than the other, dense action each, along with a bridge between them, implies denseness on the combined subspace. That is a bridge glues them to a larger special group. The condition of different dimensions is a crucial requirement for the application of this lemma. The decoupling lemma, on the other hand, states that given dense action on two subspaces, as long as they have different dimensionality, there is way of acting on the two subspaces independently. Again, in this case non-equal dimensionality is important. For example, suppose that $\dim A=\dim B$, then the action $x \times \bar{x}: x \in SU(A)$, cannot be decoupled. Here $\bar{x}$ is the complex conjugate of $x$, \i.e., entries $\bar{x}$ as a matrix are complex conjugates of corresponding entries of matrix $x$. In order to see this, just notice that after finite compositions, the general form of elements generated in this way is $(x_1 x_2 \ldots x_n) \times \overline{(x_1 x_2 \ldots x_n)}$, and an identity action on the left part forces identity action on the right part of the Cartesian product.

Next, we show that the lemma along with the branching rule, force denseness on all irreps corresponding to partitions of two rows or two columns. We will take care of the case with two rows. The situation with two columns is similar. As a way of induction, suppose that, for any $m<n$, for any  $\lambda = (\lambda_1\geq \lambda_2) \vdash m$, the projection of $G$ on $\lambda$ is dense in $SU(V_\lambda)$. The objective is to prove denseness for any partition $\mu \vdash n$.

This is true for $(2,1)$, as showed above. For the sake of illustration, we prove this for $n=4$. The partitions $(4)$ is immediate, because this is one dimensional. Also, the partition $(2,2)$ is immediate, since the branching rule, under the action of $S_3$ is:

$$
V_{(2,2)} \cong V_{(2,1)},
$$

\noindent That is the only removable box from $(2,2)$ is the last box, and in the YY basis for $(2,2)$, this last box can contain the symbol $4$ only. So, the same operators of $S_3$ act densely on this subspace.

The situation with the partition $(3,1)$ is a little different. Analyzing the hook lengths, $V_{(3,1)}$ has dimension $3$, and the branching rule involves the direct sum of partitions $(2,1)$ and $(3)$:

$$
V_{(3,1)} \cong V_{(2,1)} \oplus V_{(3)}.
$$

\noindent Where, $V_{(2,1)}$ is two dimensional, and $V_{(3)}$ is one dimensional, and therefore, they have non-equal dimensions, and also their direct sum adds up to dimension $3$. From, the analysis of $S_3$ we know that independent $SU(2)$, and $SU(1)=\{1\}$ is possible on these irreps. It suffices to find a bridge operator in $SU(V_{(2,1)} \oplus V_{(3)})$. In the first glance, the operator $L_{3,4} \in g$ sounds like a suitable choice. However, there is a problem with this: the restriction of $L_{3,4}$ on $V_{(3,1)}$ is not traceless, and therefore the image under exponentiation does not have unit determinant. Therefore, a wise choice for a bridge operator is $i [L_{(2,3)}, L_{(3,4)}]$. Looking at the actual matrices, restricted to the YY basis of $(3,1)$, one finds $i [L_{(2,3)}, L_{(3,4)}]$, as a suitable bridge, that is nice and traceless:

$$
i 
\begin{pmatrix}
0&\sqrt{2}&-\sqrt{\dfrac{2}{3}}\\
-\sqrt{2}&0 &\sqrt{\dfrac{1}{3}}\\
\sqrt{\dfrac{2}{3}} &-\sqrt{\dfrac{1}{3}}&0\\
\end{pmatrix}.
$$

\noindent Here the matrix is written in the basis corresponding to the tableaus $(1,2,3;4) , (1,2,4; 3)$  and $(1,3,4; 2)$. The bridging is between the $(1,2)$ and $(2,1)$ elements of the matrix. Thereby, the bridge lemma implies the desired denseness.

For general $n$, two situations can happen, either the partition under analysis is of the form $(\nu, \nu)=(n/2, n/2)$ (for even $n$ of course), or not. In the first case, the situation is similar to the partition $(2,2)$ of $n=4$. Thereby, restricted to $S_{n-1}$:

$$
V_{(\nu, \nu)}\cong V_{(\nu, \nu-1)},
$$

\noindent and based on the induction hypothesis the image of $G$ is already dense in the subspace. In the second case, also two cases can happen: either the partition has the form $\mu=(\nu+1, \nu)$, with $2 \nu+1 =n$, or not. In the first case, the branching rule is according to:

$$
V_{(\nu+1, \nu)}\cong V_{(\nu, \nu)}\oplus V_{(\nu+1, \nu-1)}
$$

\noindent The space $V_{(\nu, \nu)}$ corresponds to all YY basis corresponding to tableaus, wherein the index $n$ is located in the last box of the first row. Therefore, the index $n-1$ in all of the tableaus of $(\nu, \nu)$ is located in the last box of the second column, because this is the only removable box available. For simplicity, let's call this space $V_1$. The YY bases of $V_{(\nu+1, n-1)}$ correspond to all the tableaus of $(\nu+1, \nu)$, where the index $n$ is located in the last box of the second row. In this space, the location of the index $n-1$ is either in the last box in the first row or in the box right at the left of the last box in the second row. A coarser stratification of the states in $V_{(\nu+1, \nu-1)}$ is by grouping the YY basis according to the location of $n-1$. Let $V_2$ be the first one, and $V_3$ the second one. Therefore, $YY$ bases of $V_{(\nu+1, \nu)}$ can be grouped in three ways, $V_1, V_2, V_3$, corresponding to all the ways that one can remove two boxes from the original $V_{(\nu+1, \nu)}$. Again, a neat candidate for a bridge is $L_{(n-1,n)}$. Taking a closer look at the operator $L_{(n-1,n)}$, it can be decomposed according to:

$$
L_{(n-1,n)}=\sum_{|j\rangle \in V_3} |j\rangle \langle j|+ \dfrac{1}{2} \sum_{\substack{k' : k\\|k\rangle \in V_1\\|k'\rangle \in V_2}} |k\rangle\langle k| - |k' \rangle \langle k'|+\sqrt{\dfrac{3}{2}} \sum_{\substack{k' : k\\|k\rangle \in V_1\\|k'\rangle \in V_2}} |k\rangle\langle k'| + |k' \rangle \langle k|
$$

\noindent $|j\rangle$, $|k\rangle$, and $|k'\rangle$, of $V_1$, $V_2$, and $V_3$ are the corresponding orthonormal basis in the spaces. Notice that the space $V_1$ is isomorphic to $V_2$, and $k : k'$, refers to this isomorphism. Clearly, the restriction of $L_{(n-1,n)}$ to this block is not traceless, and indeed $tr_{V_{(\nu+1,\nu)}}=\dim V_3 = \dim V_{(\nu+1, \nu-2)}$.

Now, we use the decoupling lemma of Aharonov-Arad. $V_{(\nu+1, \nu)}$ and $V_{(\nu,\nu)}$ have different dimensionality, and also, due to the induction hypothesis the operators can act as the special unitary group on each of them. Thereby, there is a way to act as $x\oplus 0$ on the joint space $V_{(\nu,\nu)}\oplus V_{(\nu+1, \nu-1)}$, for some traceless element $x \in \su(V_{(\nu, \nu)})$. Therefore, $x |j\rangle =0$ and $x |k'\rangle$, for all $|j\rangle \in V_3$, $|k'\rangle \in V_2$. And denote $|x k\rangle := x |k\rangle$, for $|k\rangle \in V_1$. Taking the commutator $i [x, L_{(n-1,n)}]$:

$$
i [x, L_{(n-1,n)}]= \dfrac{i}{2} \sum_{\substack{k' : k\\|k\rangle \in V_1\\|k'\rangle \in V_2}} |x k\rangle\langle k|-|k\rangle\langle x k|+ i \sqrt{\dfrac{3}{2}} \sum_{\substack{k' : k\\|k\rangle \in V_1\\|k'\rangle \in V_2}} |x k\rangle\langle k'| - |k' \rangle \langle x k|.
$$

\noindent Clearly, this operator is traceless, Hermitian, and also one can choose $x$ in such a way that the bridging term in the second sum is nonzero.

Given the above proof for the case $V_{(\nu+1, \nu)}$, we will use a similar technique to take care of the situation $V_{(p, q)}$, where $p> q+1$, and $p+q=n$. Again, the branching rule is:

$$
V_{(p, q)} = V_{(p, q-1)}\oplus V_{(p-1, q)}.
$$

\noindent The space $V_{(p,q-1)}$ corresponds to all YY bases that correspond to the tableaus where the index $n$ is located at the last box of the first row. In this space, the index $n-1$ is either located at the left side of the box containing $n$, or it is located in the last box of the second row. Call the space corresponding to the first (second) one $V_1$ ($V_3$).  $V_{(p-1,q)}$ corresponds to all YY bases of tableaus with index $n$ is located at the last box of the second row. In this space, the index $n-1$ is either located at the left side of the box containing $n$, or it is located in the last box of the first row. Call the first space $V_2$ and the second one $V_4$. Again, write the decomposition of $L_{(n-1, n)}$, accordingly:

$$
L_{(n-1,n)}=\sum_{|j\rangle \in V_1} |j\rangle \langle j|+\sum_{|j\rangle \in V_2} |j\rangle \langle j|+ \alpha(p,q) \sum_{\substack{k' : k\\|k\rangle \in V_3\\|k'\rangle \in V_4}} |k\rangle\langle k| - |k' \rangle \langle k'|+\beta(p,q) \sum_{\substack{k' : k\\|k\rangle \in V_3\\|k'\rangle \in V_4}} |k\rangle\langle k'| + |k' \rangle \langle k|
$$

\noindent Here:

$$
\alpha(p,q) = \dfrac{1}{p-q+1}
$$

\noindent and,

$$
\beta(p,q)=\sqrt{1-\dfrac{1}{(p-q+1)^2}}.
$$

Once again, $V_2$ is isomorphic to $V_3$, and $k: k'$ denotes the correspondence between elements of the two spaces. Once again, we use the decoupling lemma, which asserts the existence of elements like $X:=x\oplus 0$, and $Y:=0 \oplus y$, on $V_{(p,q-1)}\oplus V_{(p-1,q)}$, for every $x\in \su(V_{(p,q-1)})$ and $y \in \su (V_{(p,q-1)})$. A bridge between $V_3$ and $V_4$ is needed, in such a way that the bridge annihilates both $V_1$ and $V_2$. A candidate for a bridge is $[Y,[X, L_{(n-1,n)}]]$. However, it can be easily shown that the element $i[X, L_{(n-1,n)}]$ will also work. The operator $X$ annihilates everything in $V_2$ and $V_4$. Therefor, taking the commutator, the second sum is annihilated, and also, all the remaining terms are traceless and one can find $x$ in such a way that the bridge part is nonzero. All the above results also apply to the tableaus with two columns.

%%%%%%%%%%%%%%%%%%%%%%%%%%%%%%%%%%%%%%%%%%%%%%%%%%%%%%%%%%
\section{Remarks on the Exchange interactions}
\label{remarksexchange}
In this section we emphasize on some connections between the Hilbert space of permutations and the Hilbert space of qubits in the exchange interaction model. 

For any $k\in [n]$, $\C \mathcal{X}^k_n$ is an invariant subspace of the group, $G_T$, generated by $T$ operators. This is because the exchange operators do not change the Hamming distance of the computational basis. So the decomposition $\C \{0,1\}^n \cong \bigoplus_{k} \C \mathcal{X}^k_n$ is immediate. Consider the standard total $Z$ direction angular momentum operator:

$$
J_Z:= \dfrac{1}{2} (\sigma^1_z+\sigma^2_z+\ldots+\sigma^n_z).
$$

\noindent
Then $[J_Z, E_{(i,j)}]=0$ for all $i$ and $j$. Here, the superscript $j$ in $A ^j$ for operator $A$ means $I\tensor I \tensor \ldots \tensor \overset{\overset{j}{\downarrow}}{A} \tensor \ldots \tensor I$, the action of the operator on the $j$'th slot of the tensor product. $J_Z$ indeed counts the Hamming distance of a string, and more precisely, for any $|\psi\rangle \in \C \mathcal{X}^k_n$, $J_Z|\psi\rangle = (\dfrac{n}{2}-k)|\psi\rangle$. Therefore, the eigenspace corresponding to each eigenvalue of $J_Z$ is an invariant subspace of $G_T$. For each eigenvalue $n/2 - k$ the multiplicity of this space is ${n}\choose {k}$, the number of $n$ bit strings of Hamming distance $k$. One can also define the $X$ and $Y$ direction total angular momentum operators in the same way:

$$
J_X:= \dfrac{1}{2} (\sigma^1_x+\sigma^2_x+\ldots+\sigma^n_x),
$$

and,

$$
J_Y:= \dfrac{1}{2} (\sigma^1_y+\sigma^2_y+\ldots+\sigma^n_y).
$$

\noindent Indeed, consulting the decoherence free subspaces theory of the exchange operators, the algebra generated by the operators $J_X, J_Y$ and $J_Z$, is the unique commutant of the exchange operators, and vice versa. Indeed, for any positive algebra that is closed under the conjugation map, the commutant relation is an involution \cite{james1981representation}, i.e., the commutant of the commutant of any such algebra is the algebra itself.

The decomposition of $\C \{0,1\}^n$, of $n$ spin $\dfrac{1}{2}$ particles, is well known, and can be characterized by total angular momentum, and the $Z$ direction of the total angular momentum. The total angular momentum operator is:

$$
J^2 = (\sum_{j\in [n]}\dfrac{1}{2}\sigma^j_x)^2+(\sum_{j\in [n]} \dfrac{1}{2}\sigma^j_y)^2+(\sum_{j\in [n]}\dfrac{1}{2}\sigma^j_z)^2.
$$

\noindent Indeed, using a minimal calculation one can rewrite $J^2$ as:

$$
J^2= n(n-1/4) + \sum_{i<j \in [n]} E_{(i,j)},
$$

\noindent and it can be confirmed that for all $k<l \in [n]$, $[E_{(k,l)}, J^2]=0$. In other words, the exchange operators do not change the total and $Z$ direction angular momentum of the a system of spin $1/2$ particles. The decomposition of $\C \{0,1\}^n$ can be written down according to these quantum numbers. Let $V(s)\subset \C \{0,1\}^n$, be the set of states $|\psi\rangle$ in $\C \{0,1\}^n$ such that $J^2|\psi\rangle = s(s+1/2)|\psi\rangle$, and $V(s,m)\subset V(s)\subset \C \{0,1\}^n$, as the subspace with states $|\phi\rangle$ such that $J_Z |\phi\rangle = m/2 |\phi\rangle$. If $n$ is even, $\C \{0,1\}^n$ decomposes according to:

$$
\C \{0,1\}^n \cong V(0) \oplus V(1) \oplus \ldots \oplus V(n/2),
$$

\noindent
and each of these subspaces further decomposes to:

$$
V(s)\cong V(s,-s)\oplus V(s,-s+1) \oplus \ldots \oplus V( s, s).
$$

\noindent For odd $n$ the only difference is in the decomposition $\C \{0,1\}^n \cong V(\dfrac{1}{2}) \oplus V(\dfrac{3}{2}) \oplus \ldots \oplus V(n/2)$. From what is described in the context of decoherence free subspaces theory, the exchange interaction can affect the multiplicity space of each subspace $V(s,m)$. We are interested in the subspaces of the form $V(s,0)$ for even $n$, and $V(s, \pm \dfrac{1}{2})$, for odd $n$, which correspond to the decomposition of $\mathcal{X}^{n/2}_n$, and $\mathcal{X}^{(n \pm 1)/{2}}_n$, based on the total angular momentum, respectively.

There is a neat connection between the multiplicity space of these subspaces, and the subgroup adapted YY bases. For $k\in [n]$, define the following series of operators:

$$
J_k^2 = k(k-1/4) + \sum_{i<j \in [k]} E_{(i,j)}.
$$

\noindent Clearly, $J_k^2=J^2$. These are indeed the total angular momentum measured by just looking at the first $k$ particles. Using a minimal calculation one gets $[J^2_k , J^2_l]=0$ for all $k,l$. That is they are all commuting, and they can be mutually diagonalized. For $x_j \in [n]$, let $|x_1, x_2, \ldots, x_n\rangle$, be such basis with $J^2_j |x_1, x_2, \ldots, x_n\rangle=x_j (x_j + \dfrac{1}{2})|x_1, x_2, \ldots, x_n\rangle$. These are appropriate candidates as a basis for the multiplicity space of $V(s,0)$($V(s,1/2)$ for odd $n$). Then, $x_n=s$. Analyzing these operators more carefully, it is realized that for each $l<n$, either $x_{l+1}= x_l + 1/2$ or $x_{l+1}= x_l - 1/2$. Intuitively, this is saying that adding a new spin $1/2$ particle $Q=\C^2$ to $V(x_j)$:

$$
V(x_j)\tensor Q \cong V(x_j+\dfrac{1}{2})\oplus V(x_j-\dfrac{1}{2}),
$$

\noindent for $x_j>0$, and otherwise:

$$
V(0)\tensor Q \cong V(\dfrac{1}{2}),
$$

\noindent This is similar to the branching rule of the symmetric group representation theory. The second form is directly related to the branching rule of $V_{(n,n)}\cong V_{(n,n-1)}$. For simplicity, here we consider the twice of the $J$ operators instead, so that the branching rule takes the form:

$$
V(x_j)\tensor Q \cong V(x_j+1)\oplus V(x_j-1),
$$

\noindent for $x_j>0$ and,

$$
V(0)\tensor Q \cong V(1),
$$

%%%%%%%%%%%%%%%%%%%%%%%%%%%%%%%%%%%%%%%%%%%%%%%%%%%%%%

\begin{figure}[tp]
\centering
\includegraphics[height=3.0in]{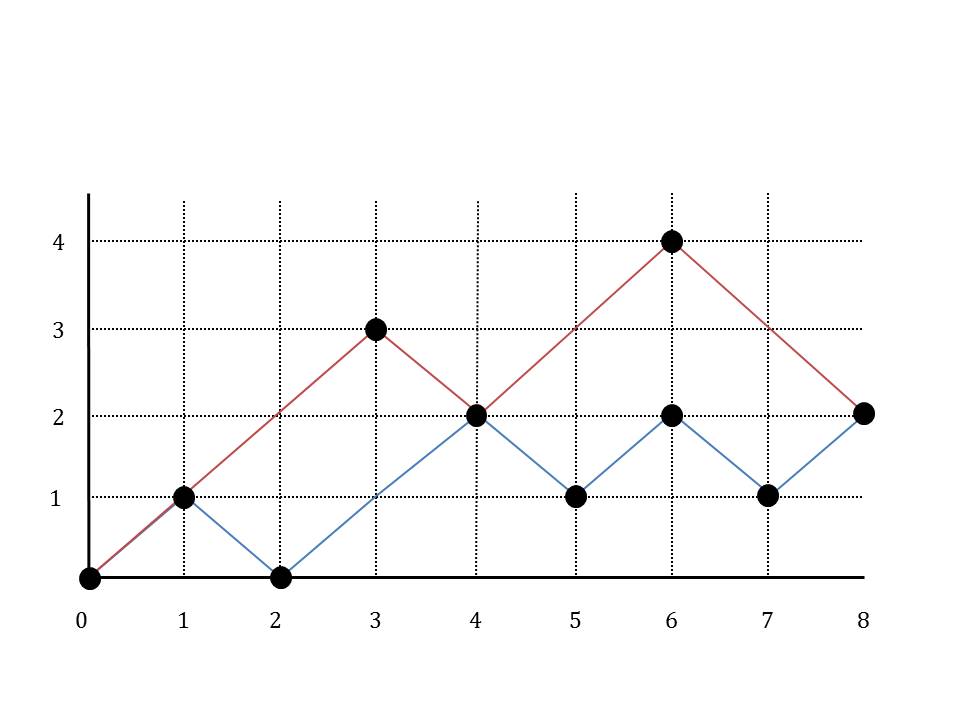}
\caption[Example of a path model]{An example of a path model. The blue and red paths start out of $(0, 0)$ and end up with the point $(8,2)$. YY basis corresponding to tableaus that two rows are closely related to the path model.
}
\label{Path}
\end{figure}
%%%%%%%%%%%%%%%%%%%%%%%%%%%%%%%%%%%%%%%%%%%%%%%%%%%%%%%

\noindent otherwise. Applying this rule recursively, the path model is obtained. See Figure \ref{Path} for an example. A path model $P_s$ is the set of paths between two points $(0,0)$ and $(n, s)$ in a two dimensional discrete Cartesian plane $\{0,1,2,3,\ldots, n\}^2$, where no path is allowed to cross the $(x,0)$ line, and at each step the path will move either one step up or one step down. Path up/down from the point $(x,y)$ is the connection from this point to $(x+1, y+1)/(x+1, y-1)$. See Figure \ref{Path} for an example of a path model. Therefore, the Hilbert space $V(s)$ corresponds to the orthonormal basis labeled by the Paths to the point $(n,2s)$. 

Indeed, we could agree on a path model for the YY basis of the tableaus with two rows. Let $\lambda$ be the Young diagram of shape $\lambda= (n,m)$. The path model is constructed in the following way: map each tableau $t$ to a symbol, $M_t= y_1 y_2 \ldots y_{n+m}$, where $y_j$ is the row index of the box containing the number $j$. Starting at the point $(0, 0)$ then the path corresponding to $t$ is constructed by taking a step up, whenever a $1$ is read in $M$, and a step down otherwise. Thereby, $P_0$ corresponds to $(n,n)$, and $P_{2n}$ corresponds to the partition $(2n)$.

%%%%%%%%%%%%%%%%%%%%%%%%%%%%%%%%%%%%%%%%%%%%%%%%%%%%%%%%%
\section{Programming $\RQBall$}
\label{PostProgramming}

The goal is to come up with a quantum algorithm based on particle scattering in $1+1$ dimension, which takes the description $\langle C \rangle$ of a general $X$ quantum ball permuting model as an input, and outputs the description of a sequence of particle scatterings and a sequence of intermediate non-adaptive demolition particle measurements, in a way that the overall process efficiently samples from the output of $C$. The construction of this section is very similar to the nondeterministic gates of \cite{knill2001scheme,knill2000efficient}. For a review of quantum computing with intermediate measurements see \cite{leung2004quantum, terhal2002adaptive,briegel2009measurement}.

Consider the $X$ ball-permuting gate of Figure \ref{fig1}, where we let the two input  wires interact with arbitrary amplitudes, and in the end we measure the label locations of $A$ and $B$. The objective is to have a particle scattering gadget that can simulate the output distribution of this circuit. Therefore, we can use the four particle gadget of Figure \ref{fig2}. The left and right rectangles demonstrate demolition measurements and the final superposition is created at the locations $A$ and $B$. The overall scattering process acts as a nondeterministic gate, in the sense that the gadget succeeds its simulation, only if the left detector measures label $a$ and the right detector measures label $b$, and an experimenter can verify this in the end. The velocities $v_1, v_2, v_a$ and $v_b$ can be tuned in such a way that the desired swap is obtained. The probability of success, thereby, depends on these velocity parameters. More precisely, conditioned on a successful simulation, the overall action of the scattering gadget is the gate $X(\tan^{-1 }z_{eff}, 1)$, where:

$$
\tan^{-1} z_{eff}= \tan^{-1} z_{1}+\tan^{-1} z_{2}
$$

\noindent with $z_1=v_1-v_2$ and $z_2= v_a-v_b$. As a result of this, the left and right black output particles will have velocities $v_b$ and $v_a$, respectively.   Notice that all of these results still hold if the black particles start out of arbitrary initial superpositions. However, one should make sure that the state of the ancilla particles are separable from the black ones.

Moreover, as described, in this model of scattering the particles move on straight line in time-space place, and they do not naturally change their directions. We thereby can use a two particle gadget of Figure \ref{fig3} to navigate the particles' trajectories. The two particles collide from left to right, and the left particle is measured in the end. Conditioned on the detector measuring the label $a$, the navigation is successful, and the outcome of this process is particle with its original label $1$ moves to the right direction with velocity $v_a$. One can match $v_a=v_1$, so that the overall action of the nondeterministic gadget is a change of direction. The success probability, then depends on $v_1$ and $v_a$.

As another example consider the $X$ quantum ball permuting circuit of Figure \ref{fig4}. This circuit consists of $X$ gates, $1, 2$ and $3$, and they permute labels of the four input wires. In the end we measure the output wires $A, B, C$ and $D$, in the particle label basis. We use the particle scattering sequences of Figure \ref{fig5} to simulate this circuit. Again the blue particles are ancilla, and the black particles correspond to the wires, and the labels $1, 2$ and $3$, correspond to the simulation of gates $1, 2$ and $3$ in Figure \ref{fig4}, respectively. Each of the detectors measure in the particle label basis, and in the end the experimenter measures the particle locations $A, B, C$ and $D$, corresponding to the output wires $A, B, C$ and $D$, in Figure \ref{fig4}, respectively. The overall scattering process succeeds in its simulation only if the detectors measure the ancilla particles with their initial labels. That is, conditioned on all blue particles successfully pass through their intermediate interactions and bouncing off the last interaction, the scattering process simulates the circuit successfully. This is true, because the particles move on straight lines, and the only event corresponding to detecting an ancilla particle with its original label is the one where it never bounces off in its intermediate interactions, and bounces off its final collision before moving to the detector. For an example of a larger simulation see the simulation of the $X$ quantum circuit of Figure \ref{fig6} with the scattering process of Figure \ref{fig7}. This example specially, demonstrates that during the scattering, blue (ancilla) particles can experience many intermediate interactions, and the number of these interactions can scale linearly in the number of particles being used. Therefore, the event corresponding to a successful simulation can have exponentially small probability.

It is important to mention that because of the Yang-Baxter equation, braiding of two particles is impossible. Braiding means that two particles can interact with each other over and over, however, because of the expression of unitarity, $H(u) H(-u) = I$, two successive collisions is equivalent to no collision. The role of the intermediate measurements is to allow two particles interact over and over without ending up with identity. 

\subsection{Stationary Programming}

The simulations of last section are both intuitive and instructive. However, they have a drawback. The slope of the lines corresponding to particle trajectories, depend on the velocities of the particles. So for large simulations, we need to keep the track of the architecture of collisions, and the amplitudes of interactions at the same time, and this can be both messy and difficult. In this section, we try to present a better simulation scheme where one only needs to keep track of amplitudes, and the architecture of collisions can be tuned easily. The philosophy is to have steady particles, in the beginning, and whenever we want a ball permuting gate, a number of ancilla particles are fired to the target steady particles. Then the intermediate detections are used, and then postselections on their outcomes enables the model to simulate an arbitrary $X$ quantum ball permutation. By stationary particle we mean a particle that is not moving. In order to fulfill this purpose, we use the stationary gadget of Figure \ref{fig8}. The objective is to impose a desired permutation on the input black particles. And we want the black particles to stay stationary in the end of the simulation. In order to do this, two other stationary ancillas are put at the left and right of the black particles. Then, two other ancilla particles, the desired velocities, are fired from left and right, and postselection is made on them bouncing off from the black particles. Then the two black particles interact and exchange momenta, and then they collide with the two stationary ancillas. In the end, we measure and postselect on the ancilla particles bouncing off the black particles. Therefore, in the end of the process, the stationary black particles are left stationary, and the desired superposition is obtained. In order to see an example for the implementation of the stationary programming in larger circuits, see the simulation of $\XQBALL$ circuit of Figure \ref{fig9} with stationary particle programming of Figure \ref{fig10}. 

%%%%%%%%%%%%%%%%%%%%%%%%%%%%Figure1-2%%%%%%%%%%%%%%%%%%%%%%%%%%%%
\begin{figure}[tp]
\centering
\begin{subfigure}{.5\textwidth}
  \centering
{\includegraphics[height=2.0in]{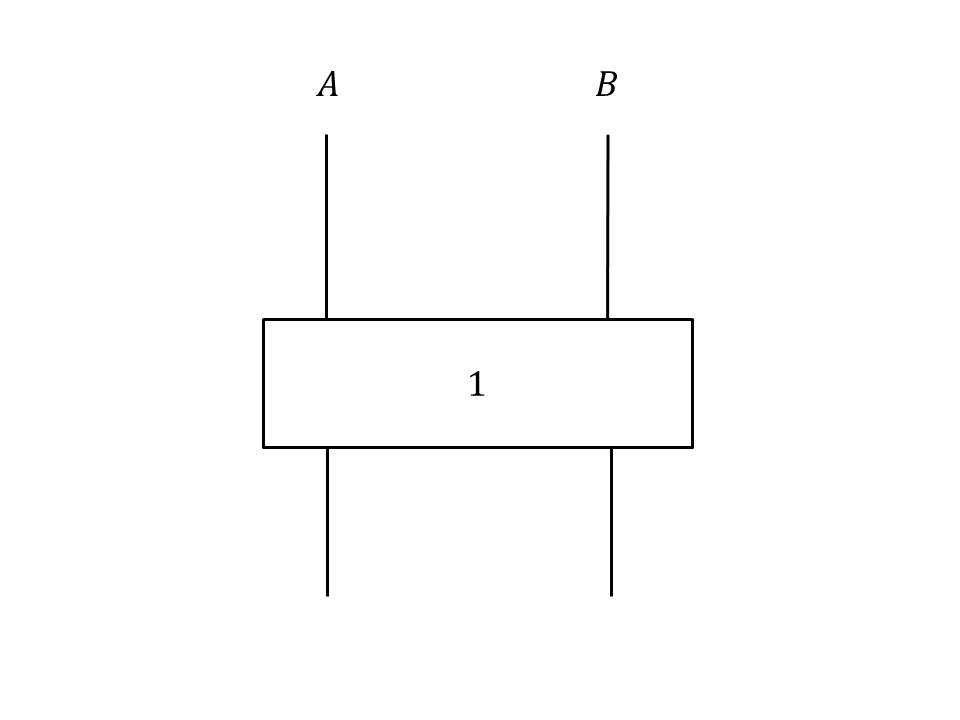}}
  \caption{}
  \label{fig1}
\end{subfigure}%
\hspace{-1cm}
\begin{subfigure}{.5\textwidth}
  \centering
{\includegraphics[height=2.5in]{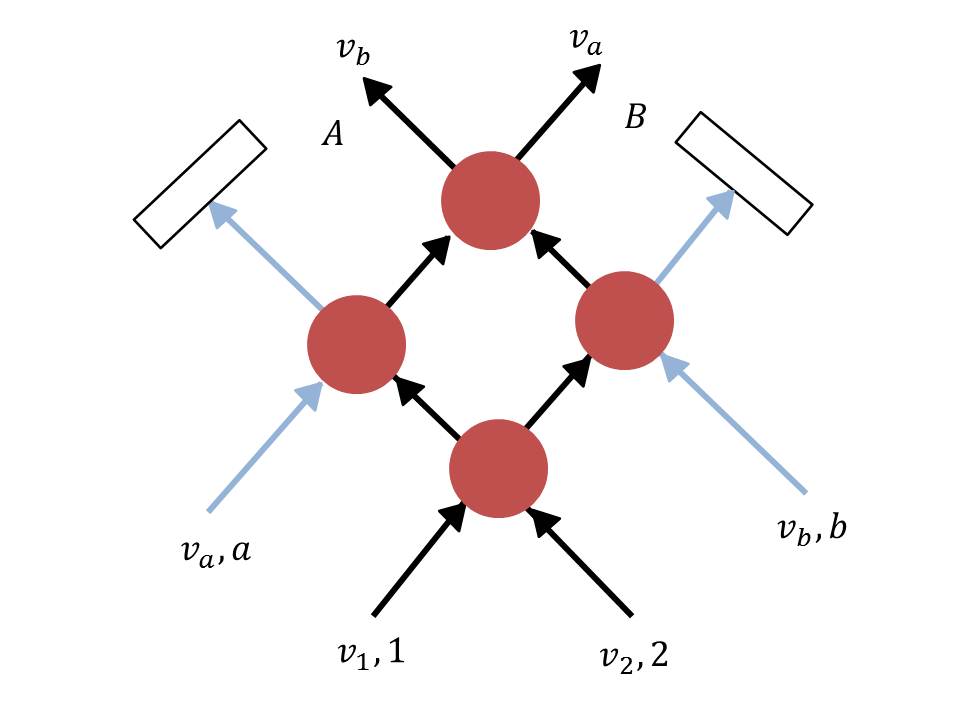}}
  \caption{}
  \label{fig2}
\end{subfigure}
\caption[Nondeterministic four-particle gadget which allows braiding of the balls]{(a) The representation of an $X$ operator. The gate permutes the input labels, and in the end we measure the labels of output wires $A$ and $B$. (b) Four-particle scattering gadget to simulate the $X$ rotation. Lines represent the trajectories of particles, red circles demonstrate interactions, and white rectangles are detectors. Blue particles are ancillas which mediate computation, and black particles are the particles that we wish to implement the actual quantum swap on. The gate is nondeterministic in the sense that it succeeds in producing the desired superposition on labels $|1\rangle$ and $|2\rangle$ only if the left and right detectors detect $|a\rangle$ and $|b\rangle$ labels in the particle label basis, respectively. Conditioned on both ancilla particles bounce off the black particles, the gate operates successfully. The probability of success, thereby, depends on the velocities.
}
\label{fig1-2}
\end{figure}

%%%%%%%%%%%%%%%%%%%%%%%%%%%%%%%Figure3%%%%%%%%%%%%%%%%%%%%%%%%%
\begin{figure}[tp]
\centering
\includegraphics[height=3.0in]{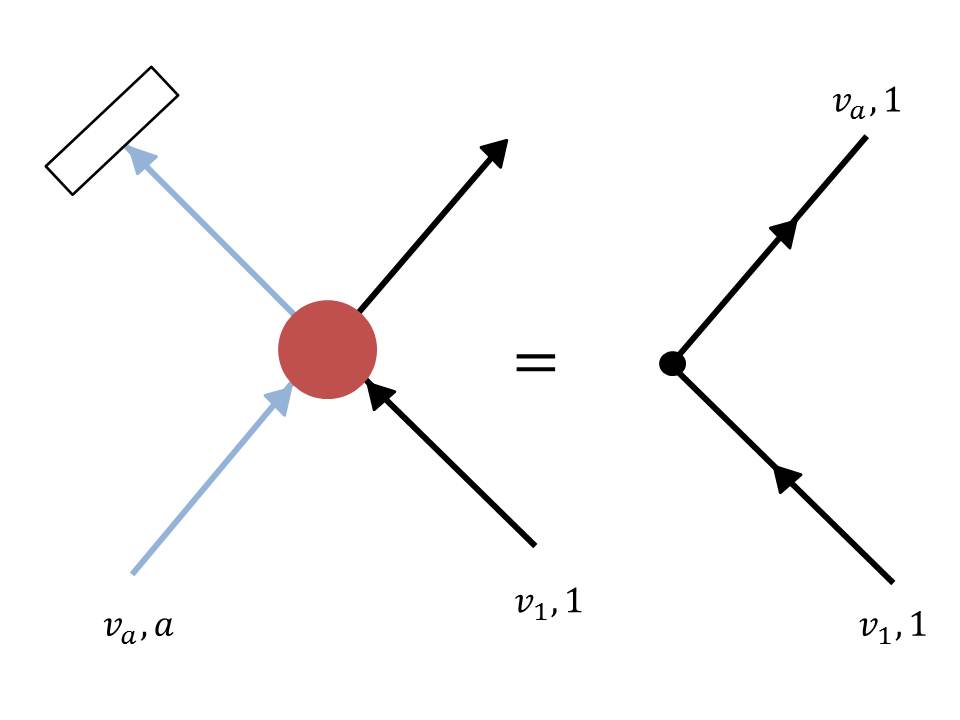}
\caption[Nondeterministic two-particle gadget to navigate the particles]{Two-particle gadget to navigate the trajectory of a single particle. Since in the model we consider the particles move on straight lines, we use this nondeterministic gadget to change the particle's trajectory. The particle that is moving left with velocity $v_1$ non-deterministically changes direction to the right with velocity $v_a$, and this event succeeds only if the detector on the left detects label $|a\rangle$. If the velocities match, $v_a=v_1$, the overall action is a change of direction.
}
\label{fig3}
\end{figure}
%%%%%%%%%%%%%%%%%%%%%%%%%%%%%%%%%Figure4-5%%%%%%%%%%%%%%%%%%%%%%%

\begin{figure}[tp]
\centering
\begin{subfigure}{.5\textwidth}
  \centering
{\includegraphics[height=2.0in]{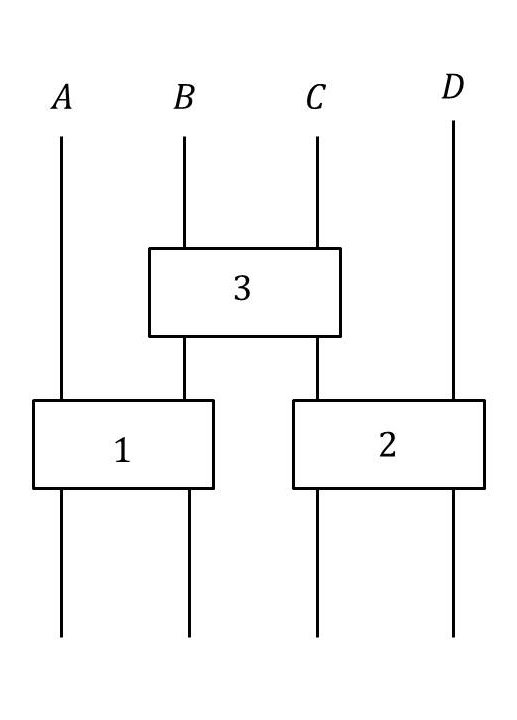}}
  \caption{}
  \label{fig4}
\end{subfigure}%
\hspace{-1cm}
\begin{subfigure}{.5\textwidth}
  \centering
{\includegraphics[height=2.5in]{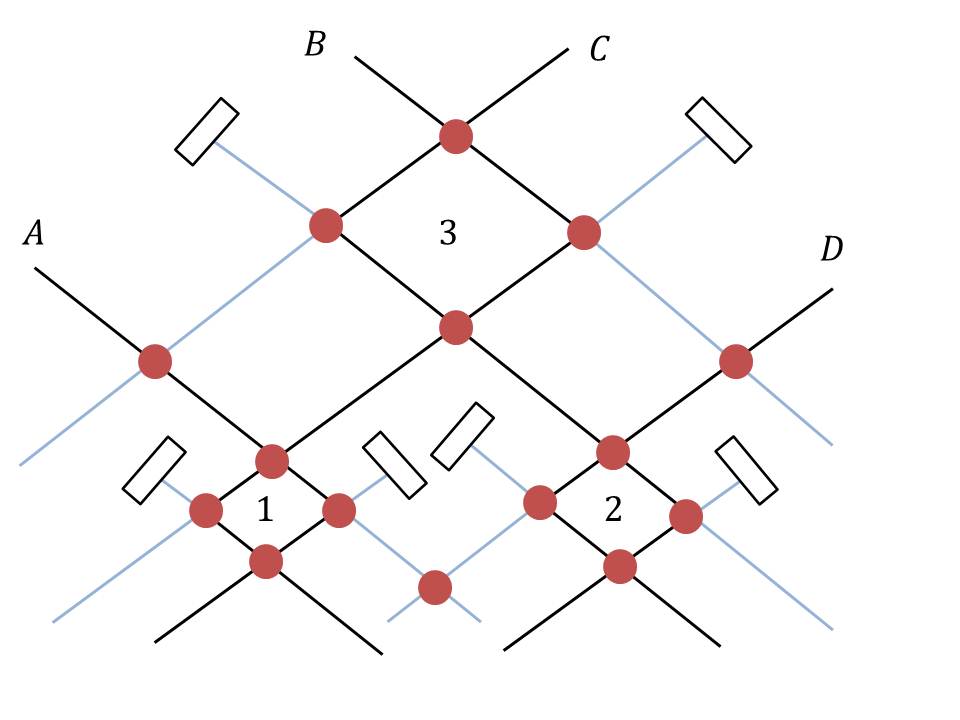}}
  \caption{}
  \label{fig5}
\end{subfigure}
\caption[An example for the simulation of a ball permuting circuit with nondeterministic ball scattering gadgets]{(a) Example of a combination of $X$ operators forming a circuit. The circuit consists of three gates, $1$, $2$, and $3$, and in the end the wires $A$, $B$, $C$, and $D$ are measured in the label basis. (b) An architecture of quantum ball permuting circuit based on particle scattering and intermediate particle measurements to simulate quantum ball permuting circuit of Figure (a). The circuit consists of six ancilla particles which mediate the computation and are detected intermediately with detectors. The labels $1$, $2$, and $3$ demonstrate the simulation of gates $1$, $2$, and $3$, of Figure (a), respectively. In the end we measure the particle locations $A$, $B$, $C$, and $D$. Conditioned on all ancilla particles succeed in passing through all of the intermediate interactions and bouncing off the last interaction, the overall scattering process succeeds in its simulation.}
\label{fig4-5}
\end{figure}

%%%%%%%%%%%%%%%%%%%%%%%%%%%%%%%%%Figure6-7%%%%%%%%%%%%%%%%%%%%%%%

\begin{figure}[tp]
\centering
\begin{subfigure}{.5\textwidth}
  \centering
{\includegraphics[height=2.0in]{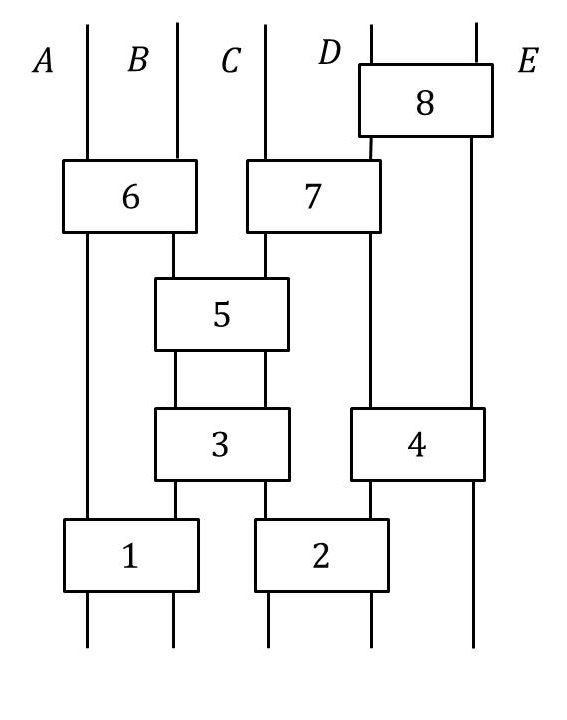}}
  \caption{}
  \label{fig6}
\end{subfigure}%
\hspace{-1cm}
\begin{subfigure}{.5\textwidth}
  \centering
{\includegraphics[height=3.0in]{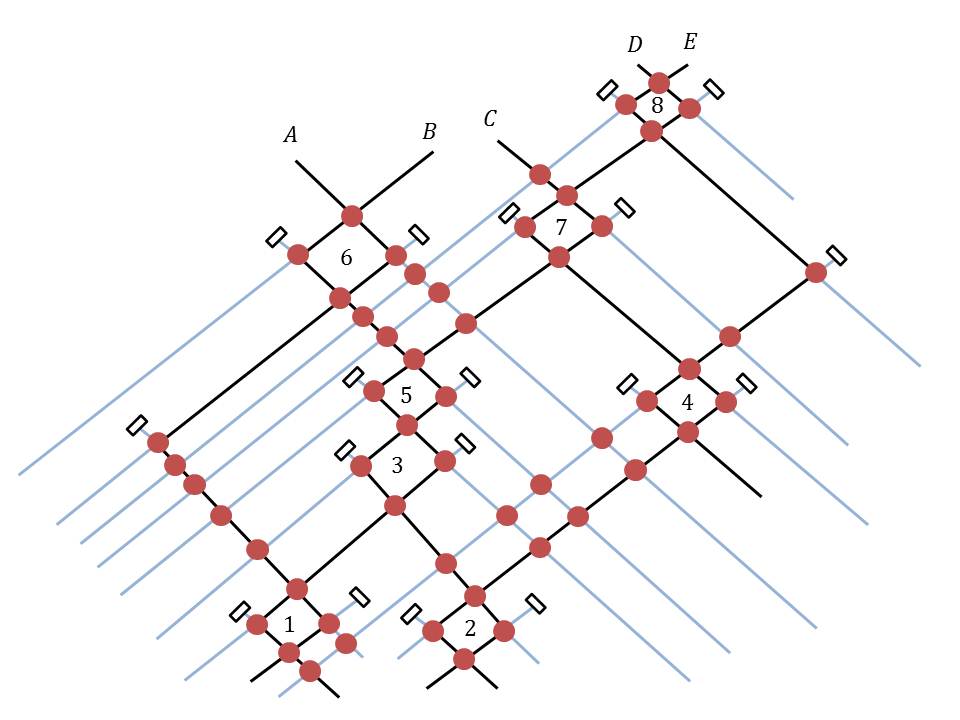}}
  \caption{}
  \label{fig7}
\end{subfigure}
\caption[Another example of ball permuting simulation in nondeterministic ball scattering]{(a) Another example of a quantum circuit with $X$ gates, $1$, $2$, \ldots, $8$, on five labels. In the end we measure the wires $A, B, C, D$ and $E$, in the particle label basis. (b) Programming of particle scattering with intermediate measurements to simulate the $X$ quantum ball permuting circuit of Figure (a) nondeterministically. The labels $1, 2, \ldots, 8$ correspond to the simulation of gates $1, 2, \ldots, 8$ in Figure (a), respectively. Notice that in this example the ancilla particles can experience many intermediate interactions. This example demonstrates that the overall process succeeds in successful simulation, only with small probability, and in general simulations, the probability of success can be exponentially small in the number of particles being used. Therefore, postselecting on the measurement outcomes, one can successfully simulate any $X$ ball permuting quantum circuit. In the end all the particle locations $A, B, C, D$ and $E$ are measured. A drawback in this model of simulation is that it is hard to set the velocities as we proceed to higher layers of the quantum circuit, and we might need to use particles with higher and higher velocity, as we proceed to the top of the circuit.}
\label{fig6-7}
\end{figure}

%%%%%%%%%%%%%%%%%%%%%%%%%%%%%%%Figure8%%%%%%%%%%%%%%%%%%%%%%%%%
\begin{figure}[tp]
\centering
\includegraphics[height=3.0in]{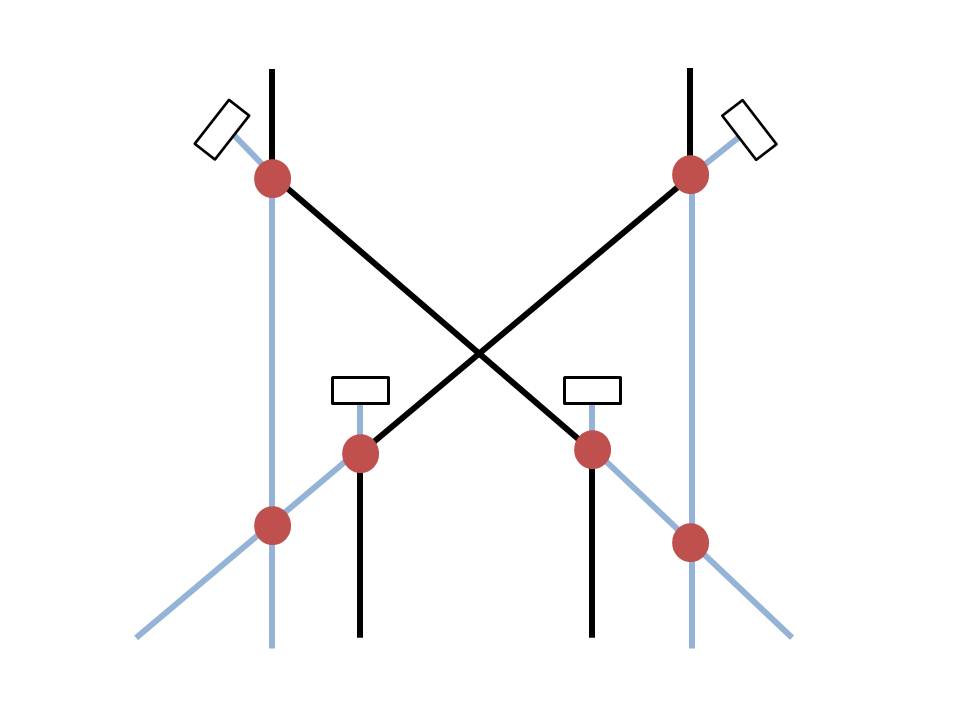}
\caption[Nondeterministic stationary ball scattering gadget to simulate an $X$ operator]{Nondeterministic four-particle gadgets for stationary programming of particle scattering with particle collisions and intermediate measurements. The overall gadget simulates the two label permutation of Figure \ref{fig1}. The objective is to produce superpositions on stationary black particles. Here a stationary particle means a particle that does not move. Initially, two black particles are stationary in the beginning, and we put two more stationary ancilla particles next to them. Then we shoot two ancilla particles from left and right and measure and postselect on them being bounced off from the black particles. Then the two black particles collide with the two stationary ancilla particles and we measure and postselect on the ancilla particles being bounced off in the end. In this scheme it is easier to set the particle scatterings.  
}
\label{fig8}
\end{figure}

%%%%%%%%%%%%%%%%%%%%%%%%%%%%%%%%%%%%%%%%%%%%%%%%%%%%%%%%

%%%%%%%%%%%%%%%%%%%%%%%%%%%%%%%%%Figure9-10%%%%%%%%%%%%%%%%%%%%%%%
\begin{figure}[tp]
\centering
\begin{subfigure}{.5\textwidth}
  \centering
{\includegraphics[height=2.0in]{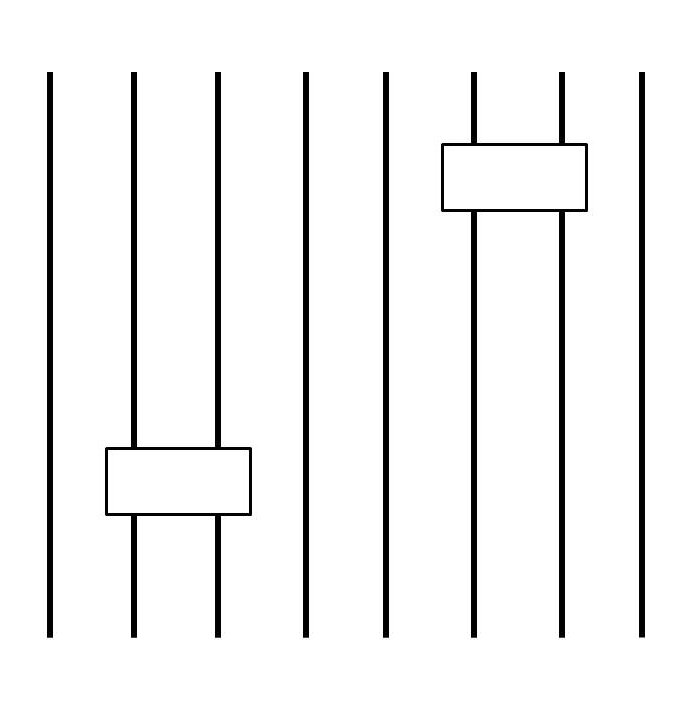}}
  \caption{}
  \label{fig9}
\end{subfigure}%
\hspace{-1cm}
\begin{subfigure}{.5\textwidth}
  \centering
{\includegraphics[height=3.0in]{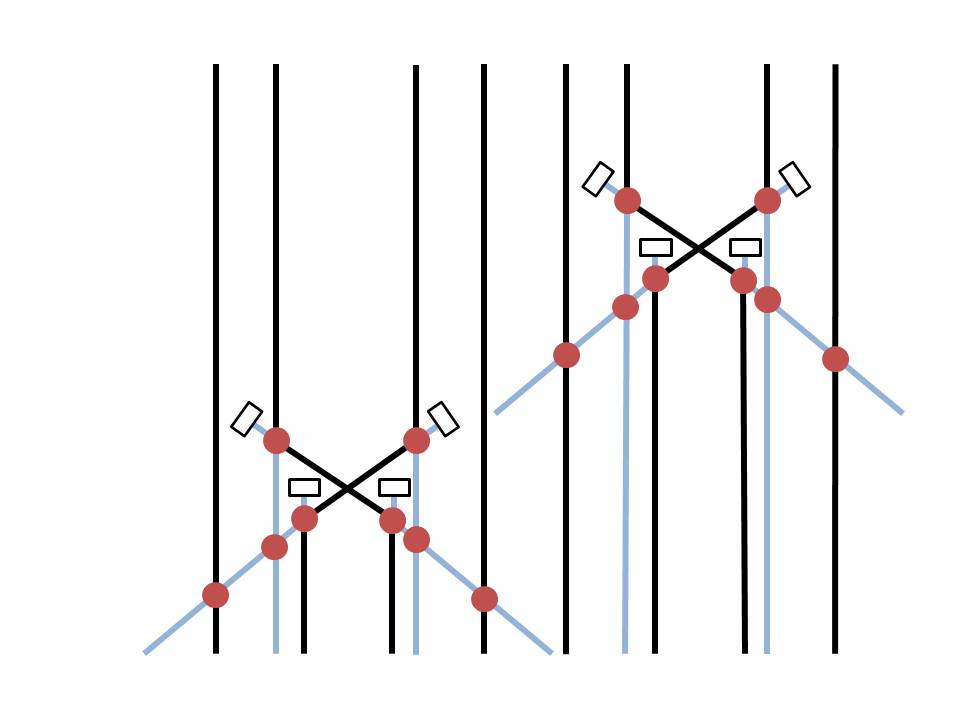}}
  \caption{}
  \label{fig10}
\end{subfigure}
\caption[An example for stationary simulation of an $X$ circuit with ball scattering]{Stationary programming of particle scattering. (a) An example of an $X$ ball permuting circuit on two gates and eight labels. (b) Stationary nondeterministic simulation of the circuit in Figure (a) with ball scattering and intermediate measurements. Each gate in Figure (a) is simulate by a gadget of Figure \ref{fig8}. Except for intermediate interactions, the black particles remain stationary at all the times.}
\label{fig9-10}
\end{figure}

%%%%%%%%%%%%%%%%%%%%%%%%%%%%%%%%%%%%%%%%%%%%%%%%%%%%%%%%%%%%
\section{Detailed Proofs for Section \ref{Post}}
\label{phcol} 

\begin{theorem}
$\Post \RQBall=\Post \BQP= \PP$.
\label{thmxqbll}
\end{theorem}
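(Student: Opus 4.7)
The plan is to split the chain of equalities into three parts: $\Post\BQP = \PP$ is immediate from Aaronson's theorem \cite{aaronson2005quantum}, so the real content is to establish $\Post\RQBall = \Post\BQP$. The upper bound $\Post\RQBall \subseteq \Post\BQP$ is the easy direction: a standard $\BQP$ circuit can simulate any $\RQBall$ circuit with intermediate demolition measurements by encoding each ball label in an $O(\log n)$-qubit register and using Solovay-Kitaev \cite{DawsonSolovayKitaev} to approximate each Yang-Baxter gate $R(z,t)$ on the two affected registers; demolition measurements become computational-basis measurements that can be deferred or simulated directly. Postselection on both sides preserves the containment.

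The technical heart of the proof is the reverse inclusion $\Post\BQP \subseteq \Post\RQBall$. Since we have already established $\QBall = \BQP$ on suitable encoded initial states (Section \ref{arbitrary}), it suffices to show that postselected $\RQBall$ with intermediate demolition measurements can simulate an arbitrary $\QBall$ circuit. The key building block is the four-particle nondeterministic gadget of Figure \ref{fig2} (Appendix \ref{PostProgramming}): two ancilla ``blue'' particles are scattered off the two ``black'' data particles, and we postselect on the ancillas being detected with their incoming labels swapped. A direct calculation using the form $R(z,t)=X(\tan^{-1}(z),t)$ and the composition identity shows that, conditioned on successful postselection, the net action on the data particles is $X(\theta_{\mathrm{eff}},k)$ with $\theta_{\mathrm{eff}} = \tan^{-1}(v_1-v_2) + \tan^{-1}(v_a - v_b)$. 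By freely choosing the ancilla velocities $v_a, v_b$, any desired angle $\theta \in [0, \pi)$ can be realized. This gives a single ``programmable'' $X$-gate.

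To chain these gadgets into a full circuit I would use the stationary programming scheme of Figure \ref{fig10}, in which the data particles remain at rest while pairs of ancilla particles are launched (from the left and right) to implement each desired $X(\theta,k)$ in sequence, with postselection at every intermediate detector. Because data particles never move between gadgets, the Yang-Baxter / momentum-consistency constraints impose no obstruction: each gadget is self-contained, and the ancilla velocities needed for the next gate can be chosen independently. Conditioned on every postselection outcome (an event of possibly exponentially small but strictly positive probability), the overall evolution on the data register is exactly the target $\QBall$ circuit $C$. Composing this encoded simulation with an outer layer of postselection on the final output bit then yields $\Post\BQP \subseteq \Post\RQBall$, and combined with the first paragraph we obtain $\Post\RQBall = \Post\BQP = \PP$.

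The main obstacle I anticipate is verifying correctness of the gadget construction rigorously: one must check that (a) for each gadget, the postselected branch produces exactly $X(\theta,k)$ on the data particles with no residual entanglement with the ancillas, and (b) the definitional requirement $\Pr[C] > 0$ in Definition \ref{PostScatdef} is met, which is immediate since each gadget succeeds with a nonzero (velocity-dependent) probability and the events for distinct gadgets are independent. Once these are established, the simulation is essentially a careful bookkeeping of velocities and measurement outcomes.
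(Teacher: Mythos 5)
Your proposal is correct and follows essentially the same route as the paper: Aaronson's $\Post\BQP=\PP$, the straightforward qubit-encoding plus Solovay--Kitaev argument for $\Post\RQBall\subseteq\Post\BQP$, and the four-particle nondeterministic gadget with intermediate demolition measurements chained via the stationary-programming scheme to obtain $\Post\BQP\subseteq\Post\RQBall$ through encoded $\QBall$ universality. The one small detail the paper adds explicitly, which you gesture at but should keep in mind, is that any $\Post\BQP$ computation can be deformed so that only a single qubit is postselected and a single qubit measured, which lets the final demolition measurement of the $\RQBall$ circuit play the role of the outer postselection.
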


\begin{proof}
$\Post \BQP = \PP$ is given by the result of Aaronson. Also, $\Post \HQBALL\subseteq \Post \BQP$. In order to see this, observe that the $\BQP$ machine first prepares the initial state $|\psi^\star\rangle \tensor |c_1, c_2, \ldots, c_m\rangle$, encoded with binary strings. Then, whenever an intermediate measurement is done, it just leaves the state along and postpones the measurement to the end of computation. This might give rise to a non-planar quantum circuit, but it is fine, since we are working with $\BQP$. The $\BQP$ measurements are done in a proper basis that encodes the ball color basis. For example, if we encode ball colors with binary representations, then it is sufficient to measure in binary basis and confirm if the digital representation of the ball color (number) is correct. Notice that in the simulation, we are not going to use the balls that have already been measured intermediately again. All the swap gates are applied accordingly. Then in the end we postselect on the desired demolition measurements and in the end we measure the encoded location of the $j$'th ball and confirm if it is among $\tilde {c}$. We can also use CNOT gates to shrink the number of postselections down to one.

In order to see the more interesting direction $\Post \HQBALL\supseteq \Post \BQP$, just we follow the postselected universality of theorem \ref{PPeqPostRQBall} to simulate any computation in $\BQP$. Then notice that any $\Post \BQP$ computation can be deformed in a way that one only needs to postselect on one qubit, and also measure one qubit in the end. $\Post \HQBALL$  uses this deformed $\Post \BQP$ protocol, instead, and uses one of its demolition measurements in the end of computation to simulate postselection of the actual $\Post \BQP$ circuit.
\end{proof}

\begin{theorem}
The existence of a $\BPP$ algorithm to create a probability distribution withing multiplicative error to the actual distribution on $c_1, c_2, \ldots, c_m$ and $c_0$ of definition ~\ref{PostScatdef} implies $\Post \RQBall\subseteq \Post \BPP$.
\end{theorem}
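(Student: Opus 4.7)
The plan is to follow the standard Aaronson--Arkhipov / Bremner--Jozsa--Shepherd postselection template, adapted to the intermediate measurement setting of $\RQBall$. Fix a language $L \in \Post\RQBall$ and, by Definition~\ref{PostScatdef}, an associated uniform family of Yang--Baxter ball-collision circuits with intermediate demolition measurements, a postselection event $C$ (a fixed transcript of intermediate outcomes), and a poly-time decidable set $P \subseteq S_n$ of accepting final permutations. I would treat the entire quantum experiment as producing a joint sample $(\mathbf{m},\sigma)$, where $\mathbf{m}$ is the string of intermediate demolition outcomes and $\sigma$ is the final permutation; the event $C$ is $\{\mathbf{m}=\mathbf{m}^*\}$ for a fixed hardwired string $\mathbf{m}^*$.

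First I would invoke the hypothesized $\BPP$ sampler $\mathcal{A}$ to obtain, in polynomial time, a sample $(\tilde{\mathbf{m}},\tilde\sigma)$ from a distribution $\tilde{\mathcal{D}}$ satisfying $(1-\epsilon)\Pr_{\mathcal{D}}[e] \le \Pr_{\tilde{\mathcal{D}}}[e] \le (1+\epsilon)\Pr_{\mathcal{D}}[e]$ for every elementary outcome $e$, where $\mathcal{D}$ is the true output distribution of the $\Meas\RQBall$ circuit. The construction of the $\Post\BPP$ algorithm is then the obvious one: run $\mathcal{A}$, postselect on $\tilde{\mathbf{m}}=\mathbf{m}^*$, and accept iff $\tilde\sigma\in P$. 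I would package this as the $\Post\BPP$ machine of Appendix~\ref{Postselection}: the ``success'' bit $y_1$ is $1$ exactly when $\tilde{\mathbf{m}}=\mathbf{m}^*$, and the answer bit $y_2$ is the indicator of $\tilde\sigma\in P$.

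The key step is to verify that this sampler still meets the $\Post\BPP$ promise. Multiplicative error preserves support, so $\Pr_{\tilde{\mathcal{D}}}[y_1=1] \ge (1-\epsilon)\Pr_{\mathcal{D}}[C] > 0$, giving condition (1). For (2) and (3), multiplicative error is preserved under conditioning up to a $(1+\epsilon)/(1-\epsilon)$ factor: for any event $E$,
\begin{equation*}
\frac{1-\epsilon}{1+\epsilon}\,\Pr_{\mathcal{D}}[E\mid C] \;\le\; \Pr_{\tilde{\mathcal{D}}}[E\mid y_1=1] \;\le\; \frac{1+\epsilon}{1-\epsilon}\,\Pr_{\mathcal{D}}[E\mid C].
\end{equation*}
Choosing $\epsilon$ a sufficiently small constant (e.g.\ $\epsilon \le 1/10$), the $\Post\RQBall$ completeness/soundness gap of $2/3$ vs.\ $1/3$ is compressed but remains a constant gap bounded away from $1/2$, which can then be amplified to $2/3$ vs.\ $1/3$ by repeating the postselected $\Post\BPP$ experiment polynomially many times and taking a majority vote (amplification inside $\Post\BPP$ is standard, and the repetitions stay inside $\Post\BPP$ because the success bit and answer bit compose in the natural way).

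The only real subtlety, and the one I would expect to spend the most care on, is making sure the intermediate measurements can be folded into a single joint sample that the classical sampler is assumed to reproduce multiplicatively: the hypothesis must be phrased for the full $\Meas\RQBall$ distribution over $(\mathbf{m},\sigma)$ rather than only for the marginal on $\sigma$, since otherwise multiplicative error on the marginal says nothing about the exponentially small conditional event $C$. This is consistent with the statement of the theorem (which speaks of the distribution on $c_1,\dots,c_m,c_0$), and with the $\Meas\RQBall$ definition in Definition~\ref{PostScatdef}. Once that is clarified, the argument above gives $L \in \Post\BPP$, hence $\Post\RQBall\subseteq\Post\BPP$, as claimed.
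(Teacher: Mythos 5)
Your proposal is correct and follows essentially the same route as the paper's proof: both condition the multiplicatively-close joint distribution over the intermediate outcomes and the final outcome on the postselection event, observe that the conditional probabilities are then within a squared multiplicative factor (your $(1+\epsilon)/(1-\epsilon)$ versus the paper's $\alpha^2$), check that the $2/3$ vs.\ $1/3$ gap survives for a small enough error parameter, and amplify by majority vote. Your explicit remark that the hypothesis must concern the joint distribution rather than the marginal on the final permutation is exactly the point the theorem statement is designed to capture.
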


\begin{proof}
The proof is similar to the proof of theorem $2$ in \cite{bremner2010classical}. Suppose that there is a procedure which outputs the numbers $x_1, x_2, \ldots, x_m, y$ such that:

\begin{eqnarray*}
1/\alpha \operatorname*{Pr}[c'_0=c_0 ; c'_1=c_1, c'_2=c_2, \ldots, c'_m=c_m]&<&\\
\operatorname*{Pr}[y=c_0; x_1=c_1, x_2=c_2, \ldots, x_m=c_m]&<&\alpha \operatorname*{Pr}[c'_0=c_0 ; c'_1=c_1, c'_2=c_2, \ldots, c'_m=c_m],
\end{eqnarray*}

\noindent for every list of colors $c_1, c_2, \ldots, c_m$ and $c_0$. Notice that if this is true, it should also be true for all marginal probability distributions. Denote the vector $(c'_1, c'_2, \ldots, c'_m,  c'_0)$ by $(\tilde{c'}, c'_0)$, and $(x_1, x_2, \ldots, x_m,  y)$ by $(\tilde{x}, y)$. Then the conditional probabilities also satisfy:

$$
1/{\alpha^2}\operatorname*{Pr}[c'_0=c_0|\tilde{c'}=\tilde{c}] <\operatorname*{Pr}[y=c_0|\tilde{x}=\tilde{c}]<\alpha^2 \operatorname*{Pr}[c'_0=c_0|\tilde{c'}=\tilde{c}].
$$

\noindent Now let $L$ be any language in $\Post\HQBALL$. Then if $x\in L$, $\operatorname*{Pr}[d'=d| C]\geq 2/3$ and otherwise $\leq 1/3$. Suppose that $x\in L$, then in order for $\operatorname*{Pr}[y=d|\tilde{x}=\tilde{c}]>1/2$, it is required that:

$$
1/2 < 1/{\alpha^2} (2/3)
$$

\noindent which means if $1 <\alpha < \sqrt{4/3}-O(1)$, then the $\Post\BPP$ algorithm recognizes $x$ with bounded probability of error. The probability of success can be increased by using the majority of votes' technique.

\end{proof}
%%%%%%%%%%%%%%%%%%%%%%%%%%%%%%%%%%%%%%%%%%%%%%%%%%%%%%%%%%%%%

\section{Open Problems and Further Directions }
\label{openproblems}

\begin{enumerate}
\item In section \ref{Post} we showed that the model $\RQBall$ with intermediate measurements is universal under postselection, and therefore $\RQBall$ cannot be efficiently simulated classically unless the polynomial hierarchy collapses. Is this also true without intermediate measurements?

\item In section \ref{sep} we proved that single amplitudes can be efficiently approximated within additive error by $\DQC 1$ computation. However it is unknown if an efficient $\DQC 1$ (approximate) \emph{sampling} scheme exists for this model. Also a lower-bound on the power of the ball permuting model on standard initial states is not known.

\item It remains open to generalize the result of Section \ref{sep} to arbitrary quantum models with Hilbert spaces based on group algebras. As discussed, such a generalization relies on the ability to do the following: given the description of the generators of a group, encode the elements of an arbitrary element with binary strings with nearly $\log |G|$ bits so that the action of group elements on each other is implementable by reversible circuits that affect only $O(\log \log |G|)$ bits at a time. We  established this property for the symmetric group using the factorial number system. Are there other groups with this property?

\item In Section \ref{sep} we show that if the quantum ball permuting model has access to arbitrary initial states, then there is a way to efficiently sample from standard quantum circuits. The construction is based on encoding of qubits using superpositions over permutation states. More precisely $n$ qubits can be encoded using a superposition over permutations of $3n$ labels. However, a drawback of this construction is that it is not composable, in the sense that the encoding of $k$ qubits is not obtained by taking a $k$-fold tensor product of the encoding of a single qubit. Instead, to encode $k$ qubits, we use $3k$ qubits and symmetrize over all the labels representings $0$'s and all the labels representing $1$'s. A natural question is to give a composable encoding of qubits in this model, or prove such an encoding is impossible.

\item In section \ref{intermediate} we introduced the complexity class $\Samp\TQP$, in which we start with the maximally entangled initial state $\sum_x \ket{x}\ket{x}$, we apply an arbitrary quantum circuit to the left half of this state and at the end we measure both halves. We show that the power of this model is intermediate between $\DQC_1$ and $\BQP$. It is open to further classify the power of this class.

\item Is there an encoding of the Young-Yamanouchi basis of arbitrary tableaux into bit strings which is both extremely compressed (i.e. using nearly the information-theoretically optimal number of bits) and local (i.e. exchanging two labels is an operation which can be performed on $O(\log n)$ bits). We believe that this should be achievable. 
Indeed given a circuit $C$, we can compute the average trace of $C$ over the partitions $\sum_{\lambda} p_\lambda \operatorname*{Tr}_\lambda( C )$ as this is simply equal to $\langle 123 \ldots n| C |123\ldots n \rangle$. So it would be strange (but not logically impossible) if one could not compute the individual $\operatorname*{Tr}_\lambda( C )$ 's.

\item In section \ref{partialclassification1} we partially classify the computational power of the ball permuting model on arbitrary initial states. We prove that the unitary group generated by this model is as large as possible if the model starts from the initial states corresponding to Young diagrams with two rows or two columns. We conjecture that this is true for arbitrary irreducible representations. The main difficulty in proving this is that the bridge lemma (Lemma \ref{bridge}) works only if the subspaces are of \emph{different dimensions}. However, there are cases where two subspaces of equal dimensionality take part in a single branching rule, so the bridge lemma is not applicable. An additional difficulty is that the action of ball permuting gates on different irreducible subspaces can be coupled\footnote{For example,  if $\lambda^T$ is the transpose of the partition $\lambda$ then if a sequence of ball permuting gates apply the unitary $U$ on $V(\lambda)$ then the same sequence applies the unitary $U^\ast$  to $V(\lambda^T)$ after some change of basis. However, we conjecture this is the only coupling possible in our decomposition. Furthermore, even with such coupling between irreps the model we obtain can be $\BQP$ universal. For example, consider the situation where the initial state is $|\psi\rangle=\dfrac{1}{\sqrt{2}}(|x\rangle +|x'\rangle)$ where $|x\rangle$ and $|x'\rangle$ are in separate irreps. Then the coupled action of the form $U \oplus U^\star$ maps $|\psi \rangle$ to $\dfrac{1}{\sqrt{2}}(U|x\rangle +U^\star|x'\rangle)$, so $\langle \psi | C | \psi \rangle = \dfrac{1}{2} (\langle x | U | x \rangle + \langle x' | U^\star | x' \rangle)= \Re \langle x | U |x \rangle$. Note the problem of reading a real entry of a quantum circuit is already known to be $\BQP$-complete, so approximating $\langle \psi | C | \psi \rangle$ is $\BQP$-complete.}. 
We leave the full classification to future work.

\item For the randomized ball permuting model it is proved that $\BPL\subseteq \RBALL \subseteq \Almost \L$. Can $\RBALL$ be further pinned down within these classes? Moreover, with two adaptive queries to the randomized ball permuting oracle, $\RBALL$ can simulate $\Almost \L$. Can $\RBALL$ still simulate $\Almost \L$ with only one query?

\item For the classical ball permuting model, one can consider restricting  the probability of swaps to obey the Yang-Baxter equation. Is there a $\P$ simulation in this case?

\item We proved that $\BPL \subseteq \RBall \subseteq \Almost \L \subseteq \BPP$. The class $\Almost \L$ is not known to be contained in $\P$. It remains open to see if $\RBall \subseteq \P$.

\item We proved that with two adaptive queries to the $\RBall$ oracle, $\RBall = \Almost \L$. Can $\Almost \L$ be simulated with $\RBall$ using one query?

\item What is the complexity of $\#\Ball$ and $\#\Ball^\star_{adj}$ and exact computation of probabilities in $\Ball$? Since $\NBall=\NP$ there is no approximation within multiplicative error for $\#\Ball$, unless $\P= \NP$. Is this the case that $\# \Ball = \#\P$? Although deciding if some target permutation has nonzero probability is decidable in polynomial time, the counting version still seems like a hard task.

\item We defined the ball permuting oracles with all distinguishable balls. What is the complexity class if the balls are labeled with $0$ and $1$ labels only? Clearly, the original oracles can simulate the binary balls, and in the case of $\DBall$ there are reductions in both ways. It is not clear if we recover $\RBall$ and $\NBall$ in the case of binary balls.

\item The problem of deciding if a target permutation has nonzero probability in $\RBall^\star_{adj}$ is in $\P$, however the same problem for $\RBall$ is $\NP$-complete. Is $\RBall^\star_{adj}$ itself a weaker class than $\RBall$?

\end{enumerate}

\end{appendices}

\end{document}